\newcommand{\ket}[1]{\left |{#1} \right \rangle}
\definecolor{blue}{rgb}{0,0.2,1}
\definecolor{green}{rgb}{0,1,0}
\definecolor{red}{rgb}{0.9,0,0}
\newcommand{\Ord}[1]{\mathcal{O}\left( #1 \right)}
\newcommand{\tOrd}[1]{\tilde{\mathcal{O}}\left( #1 \right)}
\theoremstyle{plain}
\newtheorem{theorem}{Theorem}
\newtheorem{lemma}{Lemma}
\newtheorem{defn}{Definition}
\def\be{\begin{eqnarray}}
\def\ee{\end{eqnarray}}
\definecolor{Pr}{rgb}{0.4,0.3,0.9}
\begin{document}

\title{Provable learning of quantum states with graphical models}

\author{Liming Zhao}\email{zlm@swjtu.edu.cn}
\affiliation{School of Information Science and Technology, Southwest Jiaotong University, Chengdu 610031, China}

\author{Naixu Guo}\email{naixug@u.nus.edu}
\affiliation{Centre for Quantum Technologies, National University of Singapore, 117543, Singapore}

\author{Ming-Xing Luo} 
\affiliation{School of Information Science and Technology, Southwest Jiaotong University, Chengdu 610031, China}
\affiliation{CAS Center for Excellence in Quantum Information and Quantum Physics, Hefei, 230026, China}

\author{Patrick Rebentrost}\email{cqtfpr@nus.edu.sg}
\affiliation{Centre for Quantum Technologies, National University of Singapore, 117543, Singapore}

\begin{abstract} 

The complete learning of an $n$-qubit quantum state requires samples exponentially in $n$. 
Several works consider subclasses of quantum states that can be learned in polynomial sample complexity such as stabilizer states or high-temperature Gibbs states.
Other works consider a weaker sense of learning, such as PAC learning and shadow tomography. 
In this work, we consider learning states that are close to neural network quantum states, which can efficiently be represented by a graphical model called restricted Boltzmann machines (RBMs). 
To this end, we exhibit robustness results for efficient provable two-hop neighborhood learning algorithms for ferromagnetic and locally consistent RBMs.
We consider the $L_p$-norm as a measure of closeness, including both total variation distance and max-norm distance in the limit.
Our results allow certain quantum states to be learned with a sample complexity \textit{exponentially} better than naive tomography. 
We hence provide new classes of efficiently learnable quantum states and apply new strategies to learn them.
\end{abstract}

\maketitle 

\textbf{Introduction.} Fundamental questions in quantum mechanics revolve around probing the nature of quantum systems.  
Quantum tomography investigates the reconstruction of quantum states and quantum channels from measurements on ensembles of identical states \cite{dariano2003quantum, wrightHowLearnQuantum, PhysRevA.77.032322, doi:10.1080/09500349708231894, Flammia_2012, 10.1145/2897518.2897544, 10.1145/2897518.2897585, PhysRevLett.90.193601, anshu2023survey}.  Learning an arbitrary quantum state requires a number of samples exponential in the number of qubits $n$. However, there are subclasses of quantum states that can be learned in $poly(n)$ samples, such as stabilizer states \cite{montanaro2017learning}. Moreover, other than learning all the information about a quantum state, we can learn certain physically useful properties. 
The seminal results on shadow tomography (classical shadows) \cite{aaronson_shadow, huang2022foundations, Huang_2020, Huang_2022, bertoni2023shallow} consider estimating many expectation values of observables. 
Hamiltonian learning considers recovering an unknown Hamiltonian from measurements \cite{PhysRevLett.122.020504, PhysRevLett.112.190501, jianwei_exp_hamiltonian_learning, evans2019scalable,anshu2019hamiltonian}. 

Many works consider the use of machine learning methodologies to investigate quantum systems \cite{mlphase2018, RevModPhys.91.045002}. 
The Restricted Boltzmann Machine (RBM) is a well-studied graphical model with latent (hidden) variables which has been proven to be a universal approximator for arbitrary functions \cite{6796877}. 
At the interplay between RBMs and quantum many-body systems,
Carleo and Troyer represent quantum states with RBMs and proposed the framework called the \textit{Neural-Network Quantum State (NNQ state)} \cite{doi:10.1126/science.aag2302}. They showed that NNQ states can efficiently represent the ground state of the Ising and Heisenberg models.
Certain topological states like toric code states and symmetry-protected
topological cluster state can be represented by a local RBM \cite{deng2017machine}. 
Ref.~\cite{huang2021neural} has proved that any local tensor network can be represented by a local neural network state based on an RBM. 
NNQ states have also been applied to quantum tomography \cite{juan2019gen, torlai2018neural,neugebauer2020neural} and explored in terms of entanglement \cite{PhysRevX.7.021021,levine2019quantum,harney2020entanglement}.

In the past few years, there has been considerable progress in efficient and provable learning algorithms for graphical models. In general, learning all parameters of a graphical model is hard \cite{long2010restricted}. Efficient algorithms can be constructed for special classes of graphical models and 
when considering not learning all parameters but only learning the \textit{structure} of the graphical model, for example.
An unconditional lower bound of $\log n$ holds for the learning of an $n$-spin Ising model \cite{santhanam2012information}.
For RBMs, Bresler \textit{et al.} \cite{bresler2019learning} proposed a classical efficient greedy algorithm for learning the structure in ferromagnetic cases, which can be used to simplify weight learning. 
Goel \cite{goel2019learning} provided a generalization to locally-consistent cases.
A fundamental question arises regarding the connection between these findings and the learning of quantum states. 
In particular, how close do quantum states have to be to classes of RBMs such that the aforementioned algorithms can be effectively employed? 

\begin{figure*}[htpb!]
\includegraphics[width=1\linewidth]{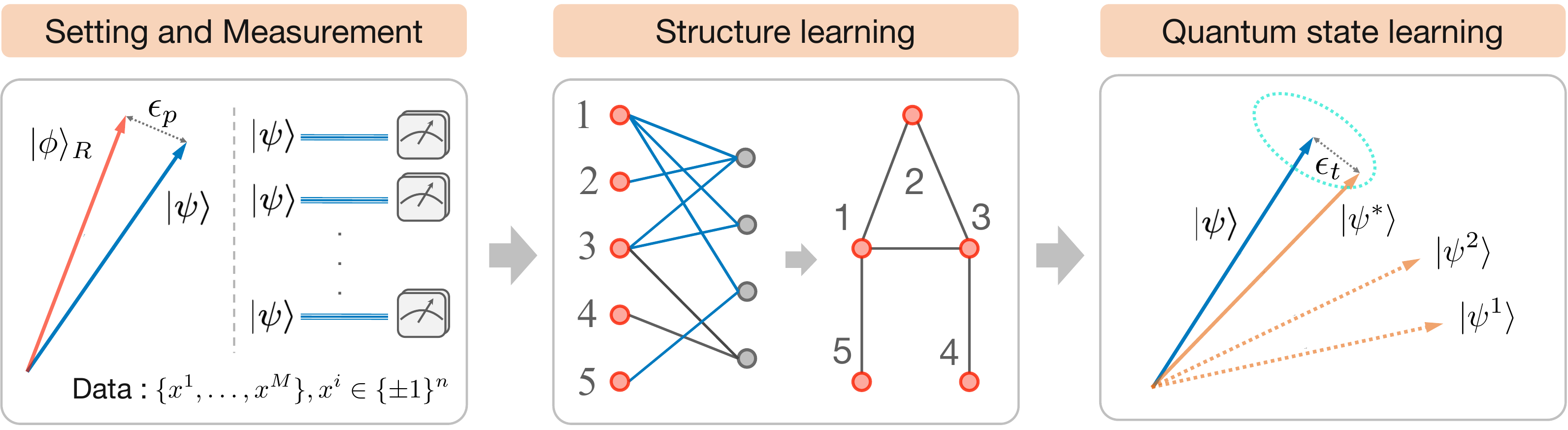}
\caption{\textbf{Provable learning of quantum states with graphical models}. 
(Left panel) Consider an unknown quantum state $\ket{\psi}$.
For the provable statements, we assume it is $\epsilon_p$ close to a neural network quantum state $\ket{\phi}_R$ based on Restricted Boltzmann Machines (RBMs), under the $L_p$ distance defined in Eq.~(\ref{eq_general_error_setting_all_term_bound_main}).
We measure each copy of state $\ket{\psi}$ in the computational basis, obtaining the data $x^i \in \{ \pm1\}^n$.
(Middle panel)
Based on the measurement results, we learn the two-hop neighborhood structure of the RBM corresponding to $\ket{\phi}_R$. An example of an RBM with $5$ visible nodes and $4$ hidden nodes is depicted.  The two-hop edges of node $1$ are shown in blue. The full two-hop neighborhood structure of the visible nodes is also depicted.
(Right panel)
With the knowledge of the structure, we can efficiently perform quantum state learning, here defined as finding parameters to estimate the magnitude of each amplitude of $\ket{\psi}$.  
We find a sequence of parameters to construct hypothesis states $\{\ket{\psi^1},\cdots\}$, and choose the best one $\ket{\psi^*}$ which is $\epsilon_t$ close to $\ket{\psi}$, with $\text{poly}(n,1/\epsilon_t)$ copies of the state.
As an additional result not shown in this figure, the sample complexity improves further if our focus is solely on estimating the conditional probability of a set of qubits conditioned on the remaining qubits. 
}\label{fig.overall}
\end{figure*}

In this work, we focus on the provable learning of quantum states, see Fig.~\ref{fig.overall} for an overview. We show that for certain classes of quantum states close to an NNQ state based on a locally consistent RBM, the underlying structure of the RBM representation of the NNQ state can be learned efficiently. It allows us to efficiently learn the parameters that can be used to recover the probability distribution in the computational basis of the quantum state.
The technical contribution is to prove the robust structure learning of RBMs. Additionally, we apply the Alphatron algorithm introduced in Ref.~\cite{goel2019learningalphatron} to the task of robust quantum state learning, which leads to the main theoretical guarantees of this work. 
We demonstrate our results through numerical experiments, achieving a   high fidelity of quantum state learning. 

\textbf{Main definitions.}
A Restricted Boltzmann Machine (RBM) is a two-layer (visible layer and hidden layer) neural network without connections among nodes within a layer \cite{salakhutdinov2007restricted,larochelle2008classification}, analogous to a weighted bipartite graph.
An example can be seen in the middle panel of Fig.~\ref{fig.overall}. 
Given an RBM with $n$ visible nodes $\{X_i\}_{i\in [n]}$ and $m$ hidden nodes  $\{Y_k\}_{k\in [m]}$, for any configuration $x\in \{\pm1\}^n$ and $y\in \{\pm1\}^m$, the probability distribution is given by 
\begin{eqnarray}
P(X=x,Y=y)=\frac{1}{Z} \exp (x^TJy+h^Tx+g^Ty),\label{eq_RBM_probability}
\end{eqnarray}
where $Z$ is the partition function, $[n]=\{1,2,\cdots,n\}$, vector $h\in \mathbb{R}^{n}$ and $g\in \mathbb{R}^{m}$ are external fields, and $J\in \mathbb{R}^{n\times m}$ is the interaction matrix. 
An RBM is \textit{locally consistent} if for each hidden node $j\in [m]$, $J_{ij}\geq 0$ or $J_{ij}\leq 0$ for all $i \in [n]$. 

The so-called \textit{two-hop neighborhood} provides information about the structure of an RBM.
We say visible nodes $i$ and $j$ are two-hop neighbors of each other if they are connected via a hidden node in the graph representation.
Let $\mathcal{N}_2(i)$ be the set for two-hop neighbors of visible node $i$.
The two-hop degree of an RBM is defined as  $ d_2:=\max_{i \in[n]}\{|\mathcal N_2(i)|\}$.
We call this interconnected arrangement among visible nodes the structure (underlying graph) of the RBM.
For instance, in the middle panel of Fig.~\ref{fig.overall}, we observe that $\mathcal{N}_2(1)=\{2,3,5\}$, $d_2=3$, and the structure is shown on the right-hand side.

To learn the underlying structure of an RBM, it is necessary to have both lower and upper bounds of the weights. 
We say an RBM is \textit{$(\alpha,\beta)$-non-degenerate} if 
\begin{itemize}
\item for every $i\in [n] $ and $ j\in[m]$,  $|J_{ij}|\geq \alpha$ if $|J_{ij}|\neq 0$;
\item for every $i\in [n]$, $\sum_j|J_{ij}|+|h_i|\leq \beta$;
\item for every $j\in [m]$, $\sum_i|J_{ij}|+|g_j|\leq \beta$. 
\end{itemize}
These assumptions are standard in the literature on learning Ising models and RBMs \cite{bresler2015efficiently,bresler2019learning}. In physics, these bounds imply conditions on the temperature and the characteristic scales of the Hamiltonian of the Gibbs state in Eq.~\ref{eq_RBM_probability}.

\textbf{Results.} We begin with defining classes of quantum states. Neural network quantum states (NNQ states) are quantum states whose amplitudes are described by neural networks. 
It has been shown that any $n$-qubit quantum state 
can be arbitrarily well approximated by an NNQ state based on RBM \cite{huang2021neural}, associating each visible node with a qubit and having potentially an exponential number of hidden nodes.
In this work, we consider NNQ states with real and positive amplitudes. 
\begin{defn}
[Special RBM-NNQ states]\label{definition_quantum_NNQ_class}
Let $\mathcal H_n$ be the Hilbert space of an $n$-qubit system. 
Define 
$\mathcal  C := \mathcal  C(n,d_2,\alpha, \beta)\subset H_n$ as the class of quantum states such that for each $\ket{\phi}_R \in \mathcal  C$ there exists an $(\alpha,\beta)$-nondegenerate locally consistent RBM with two-hop degree $d_2$ with  marginal probability distribution $p(x)$ and $\ket{\phi}_R=\sum_{x\in\{\pm 1\}^n} \sqrt{p(x)}\ket{x}$.
\end{defn}

The main measure of closeness in this work is the 
\textit{$L_p$ distance}. Consider two $n$-qubit quantum states 
$\ket{\psi} = \sum_{x}  \psi(x) \ket{x}$, 
with $\vert \psi (x)\vert^2 =: \tilde p(x)$ and
$\ket{\phi} = \sum_{x}  \phi(x) \ket{x}$, 
with $\vert \phi (x)\vert^2 =: p(x)$. 
Define 
$L_p$ distance between the magnitudes of two states to be
\begin{eqnarray}
L_p(\ket {\psi},\ket{\phi}) :=\left(\textstyle\sum_{x \in\{\pm 1\}^n} \vert  p(x) - \tilde p(x)\vert ^p\right)^{1/p}.
\label{eq_general_error_setting_all_term_bound_main}
\end{eqnarray}
which is the $L_p$ distance of the two vectors $p_1$ and $p_2$.
We define a class of quantum states.
\begin{defn}[$\epsilon_p$-close states]\label{definition_quantum_classes}
Let $H_n$ be Hilbert space of n-qubits. 
Let $\epsilon_p \geq 0$. Define $\mathcal  C(\epsilon_p):=\mathcal{C} (\epsilon_p,n,d_2,\alpha,\beta) \subset H_n$ such that  for each $\ket{\psi} \in \mathcal C(\epsilon_p)$ there exists a $\ket{\phi}_R \in \mathcal C$ with  $L_p$ distance smaller than $\epsilon_p$.
\end{defn}

Even though $\mathcal{C}$ only considers quantum states with real and positive amplitudes, with $\mathcal{C}(\epsilon_p)$ it does allow us to probe quantum states with complex amplitudes.
In the following, we will show that for quantum states belonging to this class, the magnitude of each amplitude is efficiently learnable under certain conditions.
Since $\mathcal{C}$ is contained in $\mathcal{C}(\epsilon_p)$, the results translate to $\mathcal{C}$ obviously.  

\textit{Structure learning}--- The key insight for efficient learning is that the structure of corresponding RBM can tell us certain properties of the unknown quantum states. 
For instance, if a quantum state can be represented by a local RBM, where hidden nodes are only connected to their geometrically local visible nodes, its entanglement entropy (Rényi entropy) satisfies the area law \cite{jia2020entanglement}.  
Based on nonlocal RBM, it can also represent quantum states which exhibit volume-law entanglement \cite{PhysRevX.7.021021}.

Learning the two-hop neighborhood for each node can greatly reduce the sample complexity for finding parameters to recover the magnitudes of the amplitudes. We employ a greedy algorithm \cite{goel2019learning} to learn from the measurements and show that it is suitable to learn quantum states in $\mathcal{C}(\epsilon_p)$ for certain $d_2$ and $\epsilon_p$.
Recall Def. \ref{definition_quantum_NNQ_class} and Def. \ref{definition_quantum_classes}.

\begin{theorem}[Quantum state structure learning, informal]\label{theorem_LRBM_GD_LD_distance}
Let $\epsilon_p \in (0,1)$. Suppose we are given many copies of an $n$-qubit unknown quantum state $\ket{\psi}$ in class $\mathcal C(\epsilon_p)$. Let $\ket{\phi}_R$ in class $\mathcal C$ be the state with $L_p$ distance smaller than $\epsilon_p$. 
If $\epsilon_p$ is small enough (defined by a threshold depending on the parameters), then the two-hop neighborhood of the RBM associated with state $\ket{\phi}_R$ can be estimated with $\tOrd{ \log n}$ copies of $\ket{\psi}$ in time $\tOrd{n^2}$ with a non-zero constant success probability. 
\end{theorem}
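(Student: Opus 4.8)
The plan is to reduce the quantum problem to a classical structure-learning problem for RBMs and then invoke the robustness of the greedy algorithm of Goel~\cite{goel2019learning} under a small $L_p$ perturbation. First I would observe that measuring copies of $\ket{\psi}$ in the computational basis produces i.i.d.\ samples $x^i \in \{\pm 1\}^n$ from the distribution $\tilde p(x) = |\psi(x)|^2$, which by assumption satisfies $L_p(\ket{\psi},\ket{\phi}_R) = \|\tilde p - p\|_p \le \epsilon_p$, where $p$ is the marginal of an $(\alpha,\beta)$-nondegenerate locally consistent RBM with two-hop degree $d_2$. So the sampling oracle we actually have is a corrupted version of the RBM marginal, and the task becomes: recover $\mathcal{N}_2(i)$ for every visible node $i$ from polynomially many samples of a distribution that is $\epsilon_p$-close (in $L_p$) to the true RBM marginal.

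The core of the argument is to show that the greedy (mutual-information / conditional-covariance based) structure-learning test used in~\cite{bresler2019learning,goel2019learning} is \emph{stable} under this perturbation. Concretely, the greedy algorithm decides whether $j \in \mathcal{N}_2(i)$ by estimating a low-order statistic (an expectation of a product of at most a constant number of spins, or a conditional version thereof) and comparing it to a threshold $\tau$ that separates the "edge" case from the "non-edge" case; the $(\alpha,\beta)$-nondegeneracy guarantees a constant multiplicative gap $\delta = \delta(\alpha,\beta,d_2) > 0$ in this statistic. I would quantify how much such a bounded low-order moment can shift when the underlying distribution moves by $\epsilon_p$ in $L_p$ norm: since the observable is bounded in $[-1,1]$ (a product of $\pm1$ spins) and supported on the full cube, $|\mathbb{E}_{\tilde p}[f] - \mathbb{E}_p[f]| \le \|f\|_\infty \|\tilde p - p\|_1 \le \|\tilde p - p\|_1$, and by the standard norm inequality $\|\cdot\|_1 \le (2^n)^{1-1/p}\|\cdot\|_p$ one has to be careful; instead I would work directly with the fact that the relevant conditional statistics only involve a constant number of coordinates, so one bounds the $L_1$ perturbation of the corresponding marginal distribution, which is controlled by $\epsilon_p$ up to the stated threshold (this is exactly where the "threshold depending on the parameters" in the theorem enters). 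Choosing $\epsilon_p$ below $\delta/C$ for an appropriate constant $C$, the perturbed statistic still lands on the correct side of a shifted threshold.

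With the deterministic stability established, the remaining piece is the sample-complexity and runtime bookkeeping, which I would lift verbatim from~\cite{goel2019learning}. Each of the relevant low-order empirical moments concentrates around its mean to additive accuracy $\delta/(10)$ using $\tOrd{\log n}$ samples by a Hoeffding/Chernoff bound plus a union bound over the $\Ord{n^2}$ candidate pairs (and the $\Ord{n^{O(d_2)}}$, i.e.\ constant-exponent, conditioning sets the greedy procedure examines, which contributes only logarithmic factors hidden in $\tOrd{\cdot}$). The greedy algorithm touches each of the $n$ nodes and, for each, performs a constant number of passes over $\Ord{n}$ candidates with constant-cost statistic evaluations, giving total time $\tOrd{n^2}$. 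The success probability is a nonzero constant after the union bound, and it can be boosted to $1-\delta$ by repetition if desired. The main obstacle, and the step I expect to require the most care, is the stability analysis of the second bullet: making precise that the $L_p$ closeness — which a priori only controls an $\ell_p$ distance between exponentially long probability vectors — nonetheless controls the $\ell_1$ perturbation of the \emph{constant-sized marginals} that the greedy test actually queries, and pinning down the exact threshold on $\epsilon_p$ (as a function of $\alpha,\beta,d_2$, and $p$) for which the argument goes through; the degenerate endpoints $p=1$ (total variation) and $p=\infty$ (max-norm) need to be handled as limiting cases.
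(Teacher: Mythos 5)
Your proposal follows essentially the same route as the paper: measure in the computational basis to obtain samples from $\tilde p$, prove that the average conditional covariance statistic of Goel's greedy algorithm is stable under an $L_p$ perturbation of the distribution, and then import the concentration and runtime bookkeeping from the classical result (Theorem~\ref{Theorem_classical_LRBM}); this is exactly the structure of Lemma~\ref{lemma_local_rbm_global_distance} combined with Theorem~\ref{thmMainLC}.

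One point in your middle paragraph is off, though it does not break the argument. You suggest that because the greedy test only queries marginals over a constant number of coordinates, the $\ell_1$ perturbation of those marginals is controlled by $\epsilon_p$ without the dimension-dependent blow-up from $\|\cdot\|_1 \le (2^n)^{1-1/p}\|\cdot\|_p$. It is not: the paper's Lemma~\ref{lemma_global_distance_subset_bound} shows that for a subset $I$ the marginal perturbation is bounded by $2^{(n-\vert I\vert)(1-1/p)}\epsilon_p$, because marginalizing onto $I$ still sums $\vert p - \tilde p\vert$ over the $2^{n-\vert I\vert}$ configurations of the complementary coordinates, and H\"older's inequality applied to that sum reintroduces essentially the full factor. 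Consequently the admissible threshold in Theorem~\ref{thmMainLC} is $\epsilon_p \le \tau/2^{n(1-1/p)+5}$, i.e.\ exponentially small in $n$ for every $p>1$, and only dimension-free at $p=1$ (total variation); your list of parameters governing the threshold should therefore include $n$. Once you carry that factor through, your deterministic stability step matches the paper's Eq.~(\ref{eq_cov_error_diff}) (a $\tau/8$ shift of the covariance), and the rest of your argument — concentration to $\tau/2$, the $2\tau$ versus $0$ separation from Goel's Corollary~1, and the $\tOrd{n^2}$ runtime — is the paper's proof.
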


The formal version and the proof are given in Theorem \ref{thmMainLC}
in the Appendix.
The algorithm utilizes an observation that if node $u$ and $v$ are two-hop neighbors, their (classical) covariance can be lower bounded by a threshold value \cite{goel2019learning}. 
We show that this value is still valid for quantum states in $\mathcal{C}(\epsilon_p)$.

\textit{Quantum state learning}---With the two-hop structure of the RBM representation learned in the previous step, we can greatly reduce the sample complexity to find parameters to estimate the magnitudes of each amplitude.
If the two-hop degree of the RBM representation of the state is bounded by $d_2$, the magnitudes can be reconstructed using only $\Ord{n 2^{d_2}}$ parameters.  
Otherwise, we need exponentially many (with qubit numbers) parameters in the worst case.

The hardness for the proof is that for quantum states $\ket{\psi}\in \mathcal{C}(\epsilon_p)$, the two-hop degree of the RBM representation of these states may not be bounded by $d_2$, i.e., they may have more complex structures.
We show that if $\epsilon_p$ is small enough, we can still well estimate magnitudes of the states with bounded degree $d_2$.
Also, to learn samples with intrinsic error, i.e., $\epsilon_p$, we use a regression algorithm called \textit{Alphatron}.

Firstly, we show the learning of all parameters. We learn the parameters that fully describe the magnitudes $\tilde{p}(x) = \vert \langle x|\psi\rangle\vert ^2$ of the unknown quantum state, with $\mathcal O\left( {2^{d_2}n^2}\right)$ number of copies of the quantum state. 
Therefore, for $ d_=\mathcal {O}(\text{polylog }n)$, we show a provable efficient learning result for a special class of quantum states.
Recall Def. \ref{definition_quantum_NNQ_class} and Def. \ref{definition_quantum_classes}.
\begin{theorem}[Quantum state learning, informal]\label{Theorem_quantum_state_learning}
Let $\epsilon_p \in (0,1)$.
Suppose we are given many copies of an $n$-qubit unknown quantum state $\ket{\psi}$ in class $\mathcal C(\epsilon_p)$, with magnitudes denoted as $\tilde{p}(x)$. Let $\ket{\phi}_R$ in class $\mathcal C$ be the state with $L_p$ distance smaller than $\epsilon_p$, and the underlying structure of the associated RBM of $\ket{\phi}_R$ is known. 
Let $\epsilon_t>0.$ If the number of copies is $\tOrd{ 2^{d_2}n^2 /\epsilon_t^4}$ and $\epsilon_p$ is small enough, then we find a set of parameters that estimate the magnitudes of the quantum state $\ket{\psi}$, denoted by $\tilde{p}^*(x)$, in time $poly(n)$ with a guarantee of  $\textstyle\sum_x\left\vert \tilde{p}(x) -\tilde{p}^*(x) \right\vert
\in  \Ord{\epsilon_t}$ with high probability. 
\end{theorem}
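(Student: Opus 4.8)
\emph{Proof plan.} The plan is to reduce reconstructing all magnitudes $\tilde p(x)$ to (i) learning, for each visible node $i$, the single-site conditional $\tilde p(x_i = 1 \mid x_{-i})$ from the computational-basis samples, and (ii) stitching these $n$ conditionals into a joint distribution $\tilde p^*$. For the structure we exploit, start from the closed form of the RBM marginal obtained by summing out the hidden layer in Eq.~(\ref{eq_RBM_probability}): $p(x) = \tfrac{1}{Z}\exp\!\big(h^T x + \sum_k \log 2\cosh(\sum_i J_{ik}x_i + g_k)\big)$. Hence the log-odds $\log\frac{p(x_i=1\mid x_{-i})}{p(x_i=-1\mid x_{-i})}$ equals $2h_i + \sum_{k:\,J_{ik}\neq 0}\big[\log\cosh(J_{ik}+a_k) - \log\cosh(a_k-J_{ik})\big]$ with $a_k = \sum_{i'\neq i} J_{i'k}x_{i'} + g_k$. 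Every hidden node $k$ with $J_{ik}\neq 0$ couples $i$ only to nodes in $\mathcal N_2(i)$, so this is a function of $x_{\mathcal N_2(i)}$ alone, hence a multilinear polynomial $\langle w_i,\chi(x_{\mathcal N_2(i)})\rangle$ in the $\le 2^{d_2}$ parity features on $\mathcal N_2(i)$; by $(\alpha,\beta)$-nondegeneracy it is bounded by $\Ord{\beta}$ (using $|\log\cosh(J_{ik}+a)-\log\cosh(a-J_{ik})|\le 2|J_{ik}|$ and $\sum_k|J_{ik}|\le\beta$). Therefore $\mathbb E[x_i\mid x_{-i}] = 2\sigma(\langle w_i,\chi\rangle) - 1$ with $\sigma$ the logistic function, a monotone Lipschitz function of $\langle w_i,\chi\rangle$ that moreover stays away from $\pm1$ by $e^{-\Ord{\beta}}$ --- exactly the hypothesis class the Alphatron algorithm is designed for.

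I would then run Alphatron (Ref.~\cite{goel2019learningalphatron}) for each node $i$ with the kernel $\mathcal K_i(x,x') = \prod_{j\in\mathcal N_2(i)}(1+x_j x'_j)$, which is evaluable in time $\Ord{d_2}$ given the two-hop neighborhood supplied by Theorem~\ref{theorem_LRBM_GD_LD_distance} and satisfies $\mathcal K_i(x,x)\le 2^{d_2}$. Each measurement outcome $x^{(t)}\sim\tilde p$ furnishes a labelled example $(x^{(t)}_{-i},x^{(t)}_i)$ whose regression function is $\mathbb E_{\tilde p}[x_i\mid x_{-i}]$; the promise $\|\tilde p - p\|_p<\epsilon_p$ is converted --- via an $L_p$-to-$L_1$ bound on the joints (the $2^{n(1-1/p)}$ conversion factor being absorbed into the ``$\epsilon_p$ small enough'' threshold for $p>1$), and then to conditional means averaged under the marginal --- into a bound on the \emph{agnostic error} of the best predictor in the kernel class, since the exact RBM conditional of $p$ lies in that class and is $\Ord{\epsilon_p}$-close to $\mathbb E_{\tilde p}[x_i\mid x_{-i}]$ in $L_2(\tilde p)$. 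Alphatron's guarantee then produces, from $\tOrd{2^{d_2}/\delta_0^{2}}$ samples, a hypothesis $\hat g_i$ with $\mathbb E_{x\sim\tilde p}\big[(\hat g_i(x_{-i}) - \mathbb E_{\tilde p}[x_i\mid x_{-i}])^2\big]\le \delta_0 + \Ord{\epsilon_p}$, and the same sample set serves all $n$ nodes at the cost of a $\log n$ union bound absorbed into $\tOrd{\cdot}$.

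Finally I would reconstruct $\tilde p^*$ from the learned conditionals $\hat p(x_i=1\mid x_{-i}) = (1+\hat g_i)/2$ by Brook's lemma: fix a reference configuration $x^0$, set $\tilde p^*(x)\propto\prod_{i=1}^n \hat p\big(x_i\mid x^0_{<i},x_{>i}\big)/\hat p\big(x^0_i\mid x^0_{<i},x_{>i}\big)$, and normalise. Because the conditionals are bounded away from $0$ and $1$, $\log$ is Lipschitz on the relevant interval, and telescoping the Brook's-lemma product against the chain rule for $\tilde p$ bounds $\sum_x|\tilde p(x)-\tilde p^*(x)| = \mathbb E_{\tilde p}\big|\tilde p^*/\tilde p - 1\big|$ by a sum of $n$ terms each governed by $\mathbb E_{\tilde p}\big|\hat g_i - \mathbb E_{\tilde p}[x_i\mid x_{-i}]\big| + \Ord{\epsilon_p}$. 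Setting $\delta_0 = \Tht{\epsilon_t^2/n}$ and taking $\epsilon_p$ below the threshold $\Ord{\epsilon_t/n}$ (which also keeps the agnostic floor under $\epsilon_t$) gives total error $\Ord{\epsilon_t}$ at sample cost $\tOrd{2^{d_2}n^2/\epsilon_t^4}$, with running time $\mathrm{poly}(n)$ dominated by $n$ Alphatron executions. The main obstacle, and where I expect the real work to be, is precisely this last reconstruction together with the agnostic reduction: since $\ket\psi$ is only $\epsilon_p$-close to an RBM, $\tilde p$ is \emph{not} itself a two-hop Markov random field, so neither the two-hop locality of its conditionals nor the consistency of the learned conditionals holds exactly, and one must show (a) Alphatron's agnostic guarantee cleanly absorbs this mismatch and (b) the per-site $L_2(\tilde p)$ errors propagate into $\tilde p^*$ with only a $\sqrt n$ amplification --- for which a martingale/quadrature bound on the telescoped sum, leaning on the $e^{-\Ord{\beta}}$ lower bound on the conditionals, seems to be the right tool.
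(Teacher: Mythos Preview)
Your per-node step is essentially the paper's: they also run Alphatron for each visible node $u$, with the link function $\tanh$ (your $2\sigma-1$) and features the parity monomials on $\mathcal N_2(u)$, treating the $\epsilon_p$-discrepancy as Alphatron's agnostic term. They phrase it as learning the partial MRF potential $q_u(X_{\neq u})=\partial_u q(X)$, but since $\mathbb E[X_u\mid X_{\neq u}]=\tanh(q_u(X_{\neq u}))$ this is exactly your conditional. Their agnostic reduction is slightly different from yours: rather than an $L_2(\tilde p)$ bound on the conditional-mean gap, they bound it \emph{pointwise} via $p(x_{\neq u})\ge\sigma(-2\beta)^{d_2}$ (Lemma~\ref{lemma_appendix_conditional_probability_bound}) and then transfer to a pointwise potential bound $|\tilde q_u-q_u|\le\epsilon'$ using that $\mathrm{arctanh}$ is Lipschitz away from $\pm1$ (Lemma~\ref{appendix_lemma_partial_potental_bound}).

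Where you diverge is the stitching and its analysis. The paper simply collects the learned coefficient vectors $\tilde q_u^*$ into a global potential $\tilde q^*$, sets $\tilde p^*(x)=e^{\tilde q^*(x)}/\tilde Z^*$, and bounds $\sum_x|\tilde p-\tilde p^*|\le 4\sqrt{n2^{d_2}}\,\|\tilde q^*-\tilde q\|_2+\Ord{n\epsilon'}$ by a direct partition-function manipulation plus Cauchy--Schwarz over the $\Ord{n2^{d_2}}$ monomials (Lemma~\ref{theorem_parameter_leanring_main}); the $2^{d_2}$ then cancels against the $1/2^{d_2}$ in the Alphatron coefficient bound (Lemma~\ref{lemma_appendix_potentail_2_norm}), yielding exactly the $\sqrt n$ amplification you want. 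Your Brook's-lemma product actually builds the \emph{same} $\tilde p^*$ --- since $\log\hat p(x_i\mid\cdot)-\log\hat p(x_i^0\mid\cdot)$ is linear in your learned $w_i^*$, the telescoped sum is just $\tilde q^*(x)-\tilde q^*(x^0)$ --- so the objects coincide. The worry is your proposed error analysis: the telescoped increments $\log\hat r_i-\log r_i$ are deterministic functions of $x$ once training is done, not mean-zero under $\tilde p$, so there is no evident martingale structure to buy $\sqrt n$ over the naive $n$. You can sidestep this entirely by abandoning the ratio-telescope and instead bounding $|\tilde q^*(x)-\tilde q(x)|$ by Cauchy--Schwarz on the coefficient vector, as the paper does.
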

The formal version of the theorem and the proof are given in Theorem \ref{thmLearnAll} in the Appendix. 
Remarkably, if $d_2 = \log n$, we find that, for $p=1$,  $\epsilon_p=\epsilon_1$ scales as $1/n^2$, while, for $p>1$, $\epsilon_p$  scales as $1/ 2^{n(1-1/p)}$.

Next, we discuss the partial learning of the quantum state in the same setting.  
We are given measurement results of only a subset of the qubits.
Let $p(x)$ be related to $\ket{\phi}_R$ as before.
For a single $u\in[n]$, it has been shown that the conditional probability depends only on the two-hop neighborhood as $p(x_u\vert x_{\neq u}) = p(x_u\vert x_{\mathcal N_2(u) })$ by the Markov property. For a set of nodes $J\subset [n]$, let $\mathcal N_2(J)$ be the set composed of two-hop neighbors of each node in $J$ excluding the nodes in $J$ itself, $\mathcal N_2(J):= \left(\bigcup_{u \in J} \mathcal N_2(u) \right) \setminus J$.  With the complement $x_{\bar{J}}$, we have analogously that $p(x_J\vert x_{\bar J}) = p(x_J\vert x_{\mathcal N_2(J) })$. 
Our result is to provide estimates for the conditional probabilities $ p(x_J\vert x_{\mathcal N_2(J) })$ involving $\Ord{ \vert J \vert 2^{d_2}}$ parameters, which is significantly fewer than the naive representation of a quantum state that would require a number of parameters exponential in $n$. 

\begin{theorem}
[Quantum state partial learning, informal]\label{theorem_conditional_probability_main}
Let $\ket{\psi}$ and $\ket{\phi}_R$ be as in Theorem \ref{Theorem_quantum_state_learning}, and the two-hop neighborhood for all visible nodes of the RBM associated with $\ket{\phi}_R$ be known.
Let $J \subseteq [n]$ and consider its neighborhood $\mathcal N_2(J)$ and the set $J_{\rm tot} = J\cup \mathcal N_2(J)$.
Suppose we are given many copies of the qubits indexed by $J_{\rm tot}$ of the state $\ket{\psi}$. Let $\epsilon_c>0$.
If the number of copies is $\Ord{2^{d_2}\vert J \vert^2/\epsilon_c^4}$, there exists a small number $\epsilon_p\in (0,1)$, such that we can find a set of parameters, that can estimate the conditional probability $ \tilde{p}(x_{J}\vert x_{\mathcal N_2(J)})$ with a bounded error $\epsilon_c$, in time $poly(n).$
\end{theorem}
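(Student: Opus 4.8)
The plan is to reduce the problem to the full‑state learning result of Theorem~\ref{thmLearnAll}, applied to the sub‑problem living on the qubits $J_{\rm tot}$. First I would use the Markov structure of the RBM: tracing out the hidden layer of the RBM underlying $\ket{\phi}_R$ makes its visible marginal $p$ a Markov random field on the two‑hop graph, so that $p(x_J\vert x_{\bar J})=p(x_J\vert x_{\mathcal N_2(J)})$, and this conditional is a function of $x_{J_{\rm tot}}$ alone: up to normalization over $x_J$ it equals $\exp(h_J^{T}x_J)\prod_k 2\cosh((J^{T}x)_k+g_k)$, the product being restricted to hidden nodes incident to $J$, each factor involving only coordinates of $x_{J_{\rm tot}}$. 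Non‑degeneracy bounds the number of hidden nodes incident to a given visible node by $\Ord{\beta/\alpha}$, each touching at most $d_2+1$ visible nodes, so this object carries only $\Ord{\vert J\vert 2^{d_2}}$ parameters. Hence measuring just the qubits indexed by $J_{\rm tot}$ of $\ket{\psi}$ suffices: it returns independent samples from the marginal $\tilde p(x_{J_{\rm tot}})$.

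Second, I would transfer $L_p$‑closeness down to the level of conditionals. By H\"older's inequality $\sum_x\vert\tilde p(x)-p(x)\vert\le 2^{n(1-1/p)}\epsilon_p$ (equal to $\epsilon_p$ when $p=1$), and marginalization does not increase total variation, so any marginal of $\tilde p$ supported inside $J_{\rm tot}$ lies within $2^{n(1-1/p)}\epsilon_p$ in $L_1$ of the matching marginal of $p$. Every marginal of the RBM on at most $d_2$ spins is lower bounded by a positive constant $c(\alpha,\beta,d_2)$, so forming conditionals by division inflates the error by at most $1/c(\alpha,\beta,d_2)$; thus $\vert\tilde p(x_u=1\vert x_{\mathcal N_2(u)})-p(x_u=1\vert x_{\mathcal N_2(u)})\vert\le\mathrm{poly}(2^{d_2})\,\epsilon_p$ uniformly in the configuration. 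I would then run the Alphatron‑based regression of Ref.~\cite{goel2019learningalphatron} on the $J_{\rm tot}$‑samples, exactly as in the proof of Theorem~\ref{thmLearnAll} but with the global qubit count $n$ replaced by $\Ord{\vert J\vert}$ (absorbing the $d_2$ factors in $\vert J_{\rm tot}\vert$): for each $u\in J_{\rm tot}$ it outputs an estimate of the conditional expectation $\mathbb E_{\tilde p}[x_u\vert x_{\mathcal N_2(u)}]=\tilde p(x_u=1\vert x_{\mathcal N_2(u)})$ to accuracy $\Ord{\epsilon_c/\vert J\vert}$, the $\tilde p$‑versus‑$p$ gap being absorbed into Alphatron's agnostic‑noise term through the bound just obtained. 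A union bound over the $\Ord{\vert J\vert d_2}$ nodes of $J_{\rm tot}$ costs only logarithmic factors, and the same $\Ord{2^{d_2}\vert J\vert^{2}/\epsilon_c^{4}}$ samples serve for all of them, giving the claimed sample complexity; the step is valid once $\epsilon_p$ is below a threshold scaling as $1/n^{2}$ for $p=1$ and as $1/2^{n(1-1/p)}$ for $p>1$, up to factors in $d_2,\alpha,\beta,\vert J\vert,\epsilon_c$, in line with the remark following Theorem~\ref{thmLearnAll}.

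Finally, I would recombine the learned single‑node conditionals into $\tilde p(x_J\vert x_{\mathcal N_2(J)})$, either by reading off the local weights $\{h_j,g_k,J_{ik}\}$ from the learned generalized linear models and substituting them into the closed form above, or via the standard identity (Brook's lemma) that writes a Markov random field's conditional $p(x_J\vert x_{\mathcal N_2(J)})$ as a product of $\Ord{\vert J\vert}$ of its single‑node conditionals; either way the per‑factor error $\Ord{\epsilon_c/\vert J\vert}$ accumulates to total‑variation error $\Ord{\epsilon_c}$, and all arithmetic is $\mathrm{poly}(\vert J_{\rm tot}\vert,2^{d_2},1/\epsilon_c)$, which is $poly(n)$ once $d_2=\Ord{\log n}$. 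The hard part will be exactly the passage from joint‑distribution bounds to conditional bounds — carried out both for the single‑node conditionals (neighborhood of size at most $d_2$) and for $\tilde p(x_J\vert x_{\mathcal N_2(J)})$ itself (conditioning set of size $\Ord{\vert J\vert d_2}$) — since the division by possibly tiny marginals is what dictates how small $\epsilon_p$ must be and imports its dependence on $d_2,\alpha,\beta$ and, for $p>1$, on $n$ through the $L_p$‑to‑$L_1$ conversion. A secondary check is that Alphatron's guarantee, stated for bounded agnostic noise in Ref.~\cite{goel2019learningalphatron}, still applies when the ``noise'' is the structured gap between $\tilde p$ and the RBM conditional rather than independent mean‑zero noise; this goes through exactly as in the proof of Theorem~\ref{thmLearnAll}.
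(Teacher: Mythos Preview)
Your high-level plan is essentially the paper's: exploit the Markov property so that only the coordinates in $J_{\rm tot}$ matter, run the Alphatron-based regression node by node with $n$ replaced by $\vert J\vert$, and assemble the result into an estimator of $\tilde p(x_J\vert x_{\mathcal N_2(J)})$. Two points deserve correction or sharpening.

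First, you propose to run Alphatron for every $u\in J_{\rm tot}$. For $u\in\mathcal N_2(J)\setminus J$ the set $\mathcal N_2(u)$ need not be contained in $J_{\rm tot}$, so samples from the $J_{\rm tot}$-marginal do not determine $\mathbb E[x_u\vert x_{\mathcal N_2(u)}]$. The paper (and your own Brook's-lemma reconstruction) only needs $u\in J$, for which $\mathcal N_2(u)\subseteq J_{\rm tot}$ by definition; restrict to those nodes.

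Second, the paper does not recombine learned single-node conditionals via Brook's lemma and an ``additive $\epsilon_c/\vert J\vert$ per factor sums to $\epsilon_c$'' argument. That accumulation claim is not safe as stated: Brook's lemma gives a telescoping product of ratios of conditionals, so additive errors on the conditionals become multiplicative after $\vert J\vert$ factors, and the conditionals themselves can be as small as $\sigma(-2\beta)$, so you lose factors you have not accounted for. The paper instead works in log-space: from the Alphatron runs it extracts the coefficient vector $\tilde q^{*}(J_t)$ of the restricted potential, writes
\[
\tilde p(x_J\vert x_{\mathcal N_2(J)})=\frac{\exp(\tilde q(x_{J_t}))}{\sum_{x'_J}\exp(\tilde q(x'_{J_t}))},
\]
and bounds $\vert\tilde p-\tilde p^{*}\vert$ directly by $\Ord{\sqrt{\vert J\vert 2^{d_2}}\,\Vert\tilde q^{*}(J_t)-\tilde q(J_t)\Vert_2+\vert J\vert\epsilon'}$ using convexity of $e^{x}$ and a simple fraction-of-sums estimate. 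This is exactly your ``first option'' (read off weights and substitute into the closed form), and it is the route that yields the stated $\Ord{2^{d_2}\vert J\vert^{2}/\epsilon_c^{4}}$ sample bound cleanly; the Brook's-lemma alternative would need a separate, more careful error analysis than you have sketched.
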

The formal version of this theorem is presented in Theorem \ref{theorem_appendix_conditional_probability}.
This result is especially useful if the size of the set $J$ is $\log n$, which implies a sample complexity of ${\rm poly}(\log n)$.

\textit{Coherent bit-flips}--- We consider a special case corresponding to coherent bit-flips.  As before,  
$\ket{\phi}_R = \sum_{x \in \{\pm 1\}^n}  \sqrt {p(x)} \ket{x}$, 
and with  $\varrho\in [0,1]$, define another state
as 
\be
\ket{\Psi} &:=&  \sum_{x\in \{\pm 1\}^n} \sqrt{p(x)} ~ {\otimes_{i=1}^n}\left( \sqrt{1-\varrho}\ket{x_i} + \sqrt{\varrho}\ket{- x_i}\right). \nonumber
\label{eq_bitflip_distribution_error_state_main}
\ee
Note that equivalently we have $\ket{\Psi} = \sqrt{1-\epsilon_\varrho} \ket{\phi}_R+ \sqrt{\epsilon_\varrho} \ket{\varphi} $, where $\epsilon_\varrho = 1-(1-\varrho)^n$ and $\ket \varphi$ is an orthogonal state.
This can be understood as a coherent bit-flip error, studied in quantum error correction, or a quantum generalization of random classification noise in quantum PAC learning \cite{10.5555/3291125.3309633} and Huber contamination \cite{NEURIPS2020_bca382c8}.
We also show that the theorems presented in this paper apply to the case that $\varrho$ is small enough. 
Details can be seen in Section \ref{bitflip_appendix} in the Appendix.

\textit{Ferromagnetic RBMs}---We also explicitly consider a class of quantum states close to NNQ states based on ferromagnetic RBMs which is a special case of the locally consistent RBMs, where both pairwise interactions and external fields are non-negative.
We consider the same problem setting by replacing locally consistent RBMs with ferromagnetic RBMs.
The sample complexity for the structure learning is much better on the lower and upper bound strength, i.e., $\alpha$ and $\beta$, achieved with a 
different algorithm \cite{bresler2019learning}.
Once we learn the structure, the sample complexity for quantum state learning is the same with locally consistent cases.
Further details, along with related results and proofs, are provided in Section \ref{structure_learn_FRBM_appendix} in the Appendix.

\textbf{Numerical simulation}. We illustrate our result with a quantum state that is close to an NNQ state based on a locally consistent RBM with a chain underlying structure, where $d_2 = 2$. We consider three cases with varying system sizes, $n=10,15,20$. We set the threshold $\tau=0.05$, each interaction being $1$, and all external fields being $-0.1$.  
We then construct the probability distribution $\tilde p(x)$ from the probability distribution $p(x)$ of the RBM by randomly changing each $p(x)$ within some range (bounded $L_\infty$  distance). 
Then we sample from the distribution $\tilde{p}(x)$ and run the greedy algorithm to obtain the two-hop neighbors. We claim success when we achieve the same underlying structure as that of the RBM. We show the success probabilities along with the different number of samples of the two-hop neighbors learning in panel (a) of Fig.~\ref{figure_success_probability}. With more samples we achieve a higher success probability, for example, for $n=20$, we achieve a success probability $1$ with a sample size of $3700$. For state learning, we use the Alphatron algorithm to obtain the parameters for magnitudes. As a result, we have the fidelities shown in panel (b) of Fig.~\ref{figure_success_probability}. 
For $n=20$, we achieve an average fidelity surpassing $0.99$ for more than $2000$ samples. 
These sample sizes are notably smaller than the exponential $2^{20}$.

\begin{figure}[htpb!]
\includegraphics[width=0.9\linewidth]{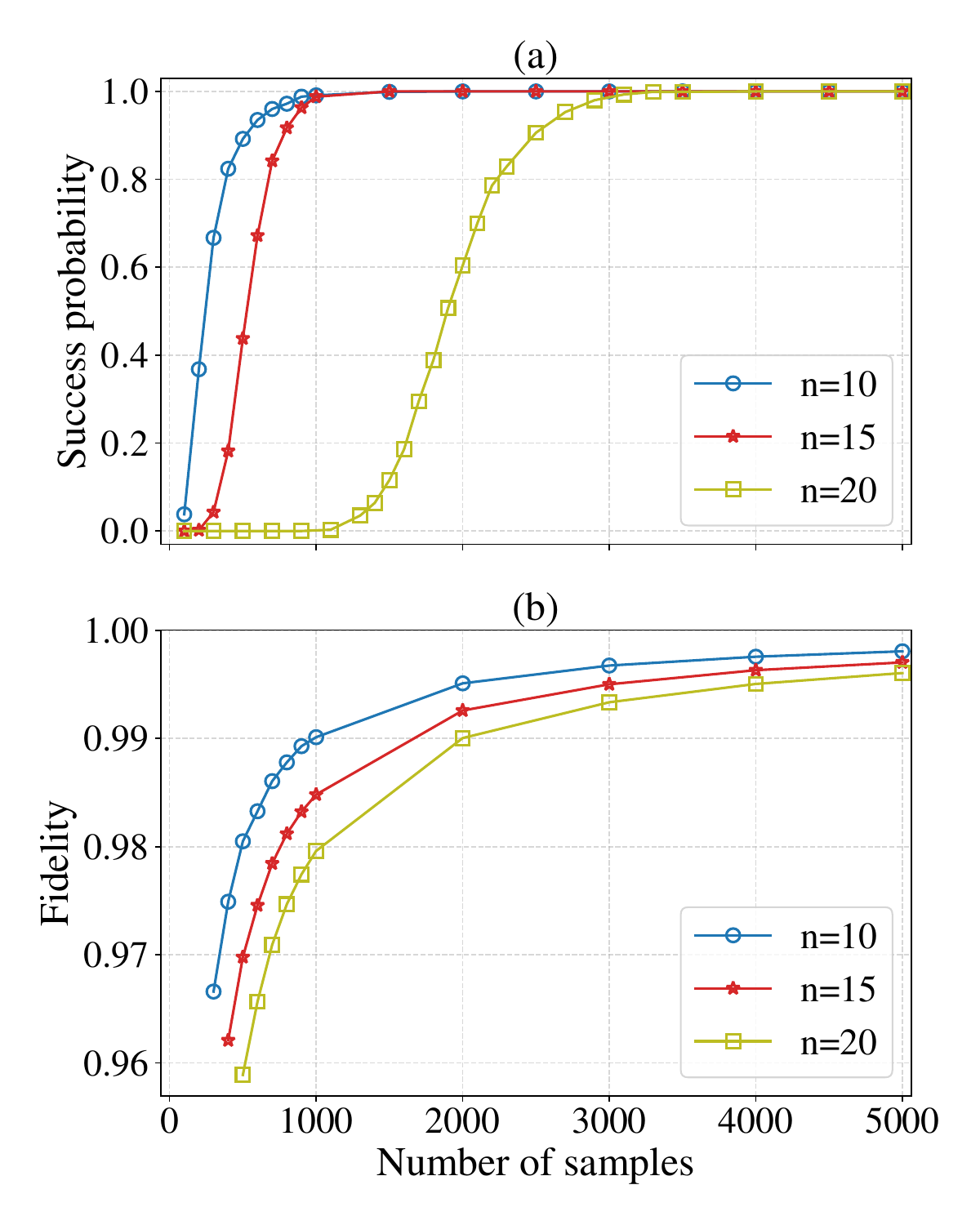}
\caption{Numerical results for learning from measurements of quantum states of a varying number of qubits. (a) The success probability of finding the two-hop neighborhood given a different number of samples. (b) Fidelity of the estimated state from learning from samples. }\label{figure_success_probability}
\end{figure}

\textbf{Conclusion.} We have discussed the learning of quantum states with RBMs using machine learning methods with provable guarantees and near-optimal sample complexity. 
The technical contribution of this work is a robustness result for the RBM learning algorithms. This robustness implies that a much larger class of non-RBM quantum states can be efficiently represented by RBMs with bounded degrees. For such states, the sample complexity for learning them is ${\rm poly}(n)$, in contrast to the $2^n$ dependency of naive tomography. 
Of course, other ways of tomography achieve better sample complexity than naive tomography for certain restricted classes of states, e.g., stabilizer states. We believe our results show alternative ways of quantifying the learnability of quantum states. 
As our methods currently learn the magnitude of the quantum amplitudes, in future work, one can consider an extension to learning the magnitudes and phases of the quantum states. 

\textbf{Acknowledgement} 
This research is supported by the National Natural Science Foundation of China (Grants No. 12204386, No. 62172341), the National Natural Science Foundation of Sichuan Provence (No. 2023NSFSC0447), the Scientific and Technological Innovation Project (No. 2682023CX084), the National Research Foundation, Singapore, and A*STAR under its CQT Bridging Grant and grant NRF2020-NRF-ISF004-3528.
The authors thank Dario Poletti and Feng Pan for their valuable discussions. 
We especially thank Feng Pan for the $n=20$ simulation.

\bibliographystyle{unsrtnat} 
\bibliography{ref}

\appendix
\onecolumngrid
\newpage 
\renewcommand{\appendixname}{Appendix~}
\renewcommand{\thesection}{\Alph{section}}
\renewcommand{\thesubsection}{\arabic{subsection}}
\newpage

\section{Supplementary Material for \\ Provable learning of quantum states with graphical models} 

This supplementary material provides technical details about the theorems discussed in the main paper. It is organized into three main parts, and the overall structure is illustrated in Fig.~\ref{fig.structure}.
The first part is preliminary, where we define the notations and review the necessary concepts related to the Restricted Boltzmann machine (RBM) and the Markov Random Field (MRF). Additionally, we review two greedy algorithms introduced in Ref.~\cite{bresler2019learning,gao2017efficient}, for the convenience of the readers.
The second part is dedicated to robust structure learning. In this part, we establish the robust version of classical learning algorithms and elucidate their application to learning the two-hop neighborhoods of the RBM representation of a quantum state in a certain class from measurement results. The last part is the quantum state learning. With the knowledge of the two-hop neighborhoods learned from measurements, we show that the magnitude of the quantum state can be recovered with a bounded error by using the
Alphatron algorithm \cite{goel2019learningalphatron}.
In particular, we also demonstrate the ability to estimate the distribution of the magnitude of a quantum state that involves a subset of the qubits, conditioned on the configuration we have acquired through measurement results of the remaining qubits.

\begin{figure}[htpb!]
\includegraphics[width=0.7\linewidth]{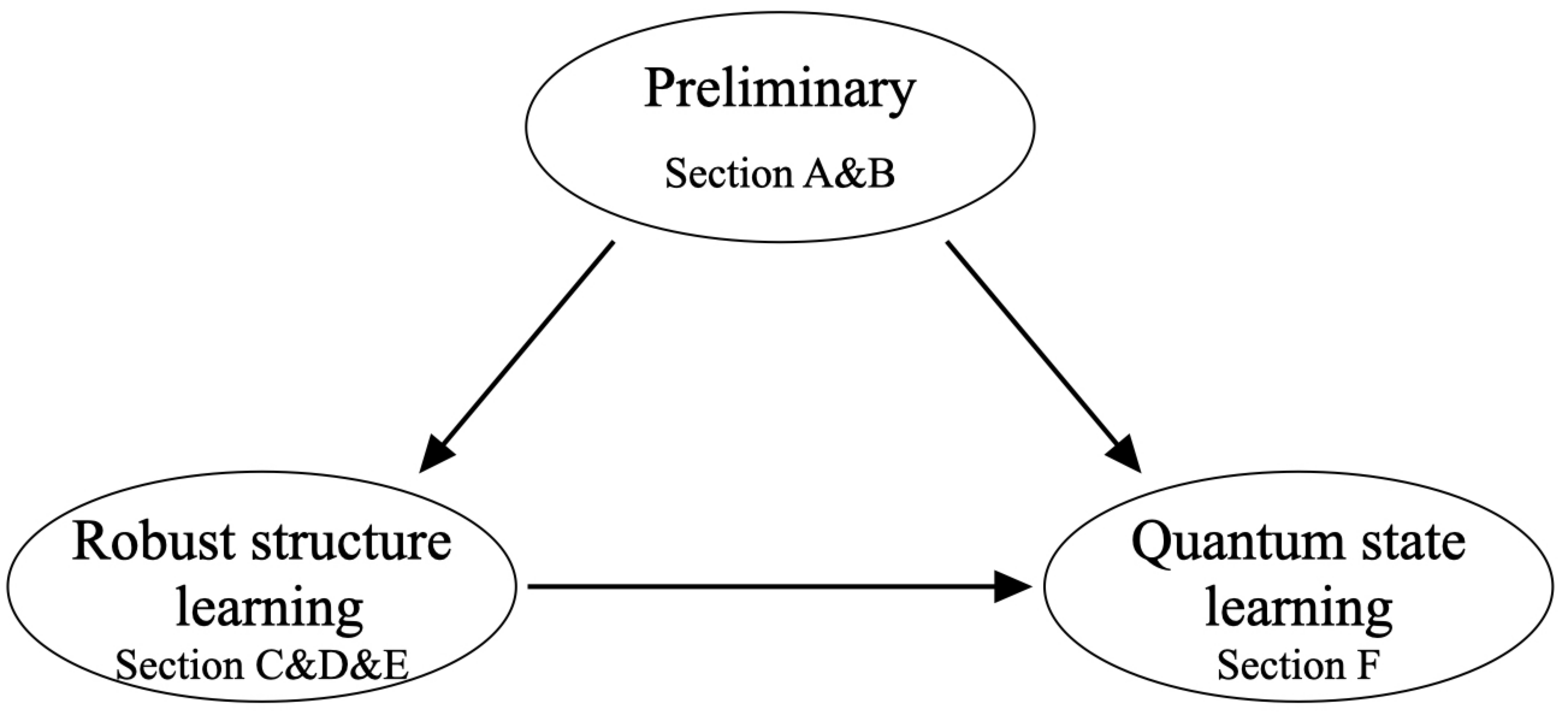}
\caption{Structure for the Appendix} \label{fig.structure}
\end{figure}

\setcounter{secnumdepth}{2}

\section{Preliminaries}

In this section, we first introduce the notations used in this paper.
Then, we review some important properties and results of the Restricted Boltzmann Machine. 
We also introduce two classical greedy algorithms for learning locally consistent and ferromagnetic RBMs for completeness.
Then, we introduce a more general graph model called the Markov random fields (MRF). We review the mapping relation between an RBM and the induced MRF proved in Ref.~\cite{bresler2019learning}. At last, we introduce the Neural Network Quantum (NNQ) state, where the amplitudes are represented by a neural network.

\subsection{Notation}
Let $\mathbb Z_+$ denote the set of positive integers and $[N]=\{1,2,\cdots,N\}$. The set of real numbers is denoted by $\mathbb R$. For a set $I$, $\vert I \vert$ denotes the number of elements in the set.
Let the sigmoid function be $\sigma(x)=1/(1+e^{-x})$. 
For a random variable $X$, the probability for $X=x$ is given by $p(X=x)$.
For simplicity, we sometimes write $p(x)$ with the same meaning.
For a value $a$, we use $\hat{a}$ to represent its empirical version, i.e., estimated from samples.
For a random variable $b$, we use $\mathbb{E}[b]$ to represent its expectation value and use $\hat{\mathbb{E}}[b]$ to represent its empirical expectation value estimated from samples. 

\subsection{Restricted Boltzman Machines}
RBMs are a class of graphical models with latent (hidden) variables that have many applications, such as dimension reduction \cite{doi:10.1126/science.1127647}, feature learning \cite{pmlr-v15-coates11a}, and quantum many-body physics \cite{doi:10.1126/science.aag2302}.
RBMs have also been proven to be universal approximators for arbitrary functions \cite{6796877}.

\begin{defn}[Restricted Boltzmann Machine]
A Restricted Boltzmann Machine $(J,h,g)$ with $n$ visible nodes $\{X_i\}_{i\in [n]}$ and $m$ hidden nodes $\{Y_j\}_{j \in [m]}$ is a weighted bipartite graph and the probability that the model assigns the configuration $(x,y)$, where $x\in \{\pm 1\}^n, y\in \{\pm 1\}^m$, is given by
\begin{eqnarray}
P(X=x,Y=y)=\frac{1}{Z} \exp (x^TJy+h^Tx+g^Ty), \notag
\end{eqnarray}
where $Z$ is the partition function, $J\in \mathbb{R}^{n}\times \mathbb{R}^{m}$ is the interaction matrix, and vector $h\in \mathbb{R}^{n}$, $g\in \mathbb{R}^{m}$ are external fields. 
\end{defn}

Locally consistent and ferromagnetic RBMs are the well-studied subclasses of RBMs, where the definitions are provided as follows.
Notice that a ferromagnetic RBM is a special class of locally-consistent RBM.
\begin{defn}[Locally-Consistent RBM]

An RBM is locally consistent if for each latent node $j\in [m]$, $J_{ij}\geq 0$ or $J_{ij}\leq 0$ for all visible nodes $ I \in [n]$.

\end{defn} 

\begin{defn}[Ferromagnetic RBM]
An RBM is ferromagnetic if the pairwise interactions and external fields are all non-negative i.e. $J_{ij} \geq 0$, $h_i\geq 0$, and $g_j \geq 0$, for all $i \in [n], j \in [m]$.
\end{defn} 

In the following, for simplicity, we use FRBM instead of ferromagnetic RBM.
RBM learning can be divided into structure learning and parameter learning. One purpose of structure learning is to learn the two-hop neighbors of each visible node and use this information for parameter learning. Two visible nodes are two-hop neighbors with each other if they connect to at least one common hidden node. The definition of two-hop neighbors is shown in the following. 
 
\begin{defn}[Two-hop Neighborhood \cite{bresler2019learning}] Let $i \in [n]$ be a visible node for a fixed 
RBM $(J, h, g)$. Denote the two-hop neighborhood of $i$ as $\mathcal N_2(i)$, which is the smallest set 
of visible nodes $S \subset [n] \setminus \{i\}$ such that conditioned on $X_S$,  $X_i$ is conditionally independent of  $X_j$ for all visible nodes $j \in [n]\setminus (S \cup \{i\})$. The  two-hop degree of the RBM is defined as  $ d_2=\max_{i \in[n]}\{|\mathcal N_2(i)|\}$.
\end{defn}

To learn the underlying structure of an RBM, it is necessary to have both lower and upper bounds of the weights. 
We consider the $(\alpha, \beta)$ non-degenerate RBM as follows \cite{bresler2019learning}.
\begin{defn}
An RBM is $(\alpha,\beta)$- non-degenerate if
\begin{itemize}
\item for every $i\in [n] $ and $ j\in[m]$,  $|J_{ij}|\geq \alpha$ if $|J_{ij}|\neq 0$.
\item for every $i\in [n]$, $\sum_j|J_{ij}|+|h_i|\leq \beta$. 
\item for every $j\in [m]$, $\sum_i|J_{ij}|+|g_j|\leq \beta$. 
\end{itemize}
\end{defn}
These assumptions are standard in the literature on learning Ising models and also allow for provable guarantees for the learning of RBMs \cite{bresler2015efficiently}.

\subsection{Classical algorithms for RBM structure learning}

In this section, we review two classical greedy algorithms. 
Bresler, Koehler, and Moitra proposed an algorithm based on influence maximization for learning the two-hop neighborhoods of a visible node for FRBMs \cite{bresler2019learning}. The algorithm takes nearly quadratic time with logarithmic sample complexity with respect to the number of visible nodes. The dependency on the maximum degree and upper bounded strength are single exponential in the algorithm's run-time as well as in optimal sample complexity.
Goel \cite{goel2019learning} on the other hand worked on locally consistent RBMs with arbitrary external fields and proposed an algorithm for learning two-hop neighborhoods on the maximization of conditional covariance which relies on the FKG (Fortuin–Kasteleyn–Ginibre) inequality. The run-time and sample complexity with respect to the number of visible nodes is the same with the FRBM case but the dependency on upper bound strength is doubly exponential.
Note that since an FRBM is also a locally consistent RBM, the algorithm for learning the structure of a locally consistent RBM can be applied to FRBMs as well.

\subsubsection{A greedy algorithm for structure learning of  locally-consistent RBMs}

It has been shown that for any two visible nodes that are the two-hop neighborhood of each other, the conditional covariance can be lower bounded for locally consistent RBMs \cite {goel2019learning}.
Remind that external fields are $h$ and $g$ in Eq.~(\ref{eq_RBM_probability}).
Based on this, the author introduced a classical greedy algorithm that maximizes covariance to learn the two-hop neighborhoods for any visible node of locally consistent RBMs (LC-RBMs) with arbitrary external fields.

\begin{defn}[Conditional covariance]\label{covariance_appen}

The conditional covariance for visible node $u,v$ and a subset $S \subseteq  [n] \setminus  \{u,v\}$   is defined as
\begin{eqnarray}
\text{Cov}(u,v|X_S=x_S)
:= \mathbb{E}[X_uX_v|X_S=x_S] 
- \mathbb{E}[X_u|X_S=x_S ]\mathbb{E}[X_v|X_S=x_S ].
\end{eqnarray}
The average conditional covariance is defined as
\begin{eqnarray}
\textrm{Cov}^{\textrm{avg}}(u,v|S) := \mathbb{E}_{x_S}[\textrm{Cov}(u,v|X_S=x_S) ].  
\end{eqnarray}

\end{defn}

The following property of the average conditional covariance is proved in Ref.~\cite {goel2019learning}.
\begin{lemma}\label{lemma_cov_bound}
Given node $u\in [n]$ and a subset $S \subseteq [n]\setminus \{u\}$ with configuration $X_S$, if node $v\in \mathcal N_2(u)\setminus S$, it satisfies
\begin{eqnarray}
\text{Cov}^{\text{avg}}(u,v|S)\geq \alpha^2 \exp(-12\beta).
\end{eqnarray}
\end{lemma}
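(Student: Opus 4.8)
\textbf{Proof proposal for Lemma \ref{lemma_cov_bound}.}

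The plan is to lower bound the average conditional covariance $\mathrm{Cov}^{\mathrm{avg}}(u,v|S)$ by exhibiting, for a suitable conditioning configuration $x_S$, a single hidden node $j$ that is shared by $u$ and $v$, and then showing that this shared hidden node forces a strictly positive (or, after a sign flip, strictly negative but bounded-away-from-zero) contribution to $\mathrm{Cov}(u,v|X_S=x_S)$. First I would pass from the RBM to its induced Markov random field on the visible nodes by summing out the hidden variables: since $v\in\mathcal N_2(u)\setminus S$, the conditioning set $S$ does not block all paths between $u$ and $v$, so in the induced MRF $X_u$ and $X_v$ remain dependent conditioned on $X_S$. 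The key structural fact (from Ref.~\cite{bresler2019learning,goel2019learning}) is that, because the RBM is locally consistent, each hidden node $j$ contributes a monotone interaction among the visible nodes it touches, so the induced MRF conditioned on any $X_S=x_S$ is itself a monotone (ferromagnetic-after-gauge) system on the remaining visible nodes. This lets me invoke the FKG inequality to conclude $\mathrm{Cov}(u,v|X_S=x_S)\ge 0$ for every $x_S$ (after fixing the gauge via the local consistency signs), so no cancellation occurs when averaging over $x_S$.

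Next I would quantify the covariance from below rather than just establishing its sign. The natural route is to condition further: pick the hidden node $j$ common to $u$ and $v$ and a specific assignment to the remaining visible nodes that are two-hop-adjacent to $j$, so that the conditional law of $(X_u,X_v)$ reduces essentially to a two-spin system coupled only through $Y_j$. Summing out $Y_j$ gives an effective pairwise coupling between $X_u$ and $X_v$ whose magnitude is controlled from below by $|J_{uj}|,|J_{vj}|\ge\alpha$ (non-degeneracy lower bound) and whose "dilution" by the external fields and the other couplings is controlled from above using $\sum_k|J_{uk}|+|h_u|\le\beta$ and the analogous bounds — this is where the $\exp(-12\beta)$ factor will come from, as each of a constant number of worst-case field/coupling contributions contributes a multiplicative $e^{-c\beta}$ (the constant $12$ tracking the number of such terms: fields at $u$, at $v$, at the relevant hidden nodes, and the messages passed). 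I would then argue that the probability of hitting this favourable configuration of the conditioning variables is at least $e^{-O(\beta)}$ as well, so that the averaged quantity retains an $\alpha^2 e^{-12\beta}$ lower bound.

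The step I expect to be the main obstacle is the quantitative message-passing bound: turning the qualitative "FKG gives nonnegativity" into the explicit constant $\alpha^2\exp(-12\beta)$ requires carefully tracking how conditioning on $X_S$ (an arbitrary subset, not necessarily containing or avoiding the neighbours of the shared hidden node) can attenuate the $u$–$v$ correlation, and ensuring that the local consistency hypothesis is used correctly so that different hidden nodes cannot produce competing signs that partially cancel the contribution of $Y_j$. Concretely, I would (i) fix $j\in N(u)\cap N(v)$, (ii) write $\mathrm{Cov}(u,v|X_S=x_S)$ as an average over the values of the other visible neighbours of $j$, (iii) on the event that those take their "aligned" values, lower bound the two-spin covariance by a quantity of the form $\tanh(\alpha)^2 e^{-O(\beta)}\ge\alpha^2 e^{-O(\beta)}$ using $\tanh(\alpha)\ge\alpha e^{-\beta}$-type inequalities valid on the relevant range, and (iv) lower bound the probability of the aligning event by $e^{-O(\beta)}$. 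Finally I would collect the constants to verify they sum to at most $12$, and note that the argument is uniform over $x_S$, giving the claimed bound on $\mathrm{Cov}^{\mathrm{avg}}(u,v|S)$; the full details are in Ref.~\cite{goel2019learning}.
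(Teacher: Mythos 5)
First, note that the paper does not prove Lemma~\ref{lemma_cov_bound} at all: it is imported verbatim from Ref.~\cite{goel2019learning}, introduced only by the sentence ``The following property of the average conditional covariance is proved in Ref.~\cite{goel2019learning}.'' So there is no in-paper argument to compare against, and your proposal is a reconstruction of the external proof. Your skeleton is the right one and matches the cited source at a high level: local consistency makes every hidden node contribute a supermodular potential $\log\cosh\left(\sum_i J_{ij}x_i+g_j\right)$ to the induced MRF, the FKG inequality then gives $\mathrm{Cov}(u,v\vert X_S=x_S)\geq 0$ for every $x_S$ (so the average over $x_S$ cannot cancel), and the quantitative bound must come from a shared hidden node $j$ with $\vert J_{uj}\vert,\vert J_{vj}\vert\geq\alpha$ together with the $\beta$-bounds.

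The genuine gap is in your steps (ii)--(iv). You propose to restrict to the event that the other visible neighbours of $j$ take an ``aligned'' configuration and to lower-bound the probability of that event by $e^{-O(\beta)}$. That probability cannot be so bounded: the hidden node $j$ may have up to $\beta/\alpha$ visible neighbours (each nonzero coupling is at least $\alpha$ in magnitude and their sum is at most $\beta$), so the probability of any one specific joint configuration of them is only guaranteed to be at least $\left(\tfrac12 e^{-2\beta}\right)^{\deg(j)-2}=e^{-\Omega(\beta^2/\alpha)}$, which destroys the claimed $\alpha^2 e^{-12\beta}$ bound when $\alpha$ is small. The correct route, and the one the cited proof takes, avoids any favourable event: condition on \emph{all} remaining visible spins, use the law of total covariance together with the FKG-implied nonnegative correlation of the two conditional means (both are increasing functions of the conditioned spins) to reduce to the fully conditioned covariance, and then observe that the fully conditioned two-spin covariance equals $4(p_{++}p_{--}-p_{+-}p_{-+})$, which is lower bounded \emph{uniformly} over the outside configuration by the discrete second difference of $\log\cosh(J_{uj}x_u+J_{vj}x_v+c)$ (of order $\alpha^2 e^{-2\beta}$ for $\vert c\vert\leq\beta$) times the uniform single-site bounds $p_{\pm\mp}\geq\tfrac14 e^{-4\beta}$; no probability of an aligning event enters, and the constants then genuinely sum to the exponent $12$. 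A minor additional point: since $\tanh(\alpha)\leq\alpha$, the inequality you need is $\tanh(\alpha)\geq\alpha e^{-\beta}$ on the relevant range $\alpha\leq\beta$, which holds but should be stated in that direction.
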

Given $M$ samples of visible nodes $\{X^{i}\in \{\pm 1\}^n \}_{i \in [M]}$, the empirical average conditional covariance is defined as
\begin{eqnarray}
\widehat{\text{Cov}}^{\text{avg}}(u,v \vert S) := \widehat{\mathbb{E}}_{x_S}[ \widehat{\text{Cov}}(u,v|X_S=x_S)]. \label{empi_ave_cov} 
\end{eqnarray}
Using Lemma~\ref{lemma_cov_bound}, the author proposed a greedy algorithm \ref{alg_rbm_cov} by maximizing conditional covariance. The following theorem gives the number of samples required and the run time of the algorithm.  
\begin{theorem}[Theorem 2 in Ref.~\cite{ goel2019learning}]\label{Theorem_classical_LRBM}
Given $M_2$ samples of visible nodes $\{X^{i}\in \{\pm 1\}^n\}_{i\in [M_2]}$ of a  $(\alpha,\beta)$-nondegenerate locally-consistent RBM, for $\delta =\frac{1}{2} e^{-2\beta}$,  $\tau= \frac{1}{2}\alpha^2 \exp(-12\beta)$ and $\gamma=\frac{8}{\tau^2}$ , with probability $1-\zeta$, the two-hop neighbours $\mathcal N_2(u)$ of a visible $u$ can be obtained in time $O(M_2n\gamma)$ by using Algorithm \ref{alg_rbm_cov} as long as 
$$M_2 \geq \Omega \left( ( \log(1/ \zeta)+\gamma \log(n))\frac{2^{2\gamma}}{\tau^2\delta^{2\gamma}}  \right).$$
\end{theorem}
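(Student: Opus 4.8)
The plan is to establish that Algorithm~\ref{alg_rbm_cov} halts with exactly the set $\mathcal N_2(u)$, to bound the number of greedy additions by $\gamma$, and then to supply one uniform concentration estimate that transfers the population-level arguments to the finite-sample setting. First I would verify correctness of the greedy and pruning steps at the population level. Fix $u$ and let $S$ be the current set. By Lemma~\ref{lemma_cov_bound}, as long as $\mathcal N_2(u)\not\subseteq S$ every missing neighbour $v\in\mathcal N_2(u)\setminus S$ has $\mathrm{Cov}^{\mathrm{avg}}(u,v\mid S)\ge\alpha^2 e^{-12\beta}=2\tau$, whereas once $\mathcal N_2(u)\subseteq S$ the Markov property forces $X_u\perp X_v\mid X_S$, hence $\mathrm{Cov}^{\mathrm{avg}}(u,v\mid S)=0$ for all $v\notin S\cup\{u\}$. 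On the (high-probability) event that every empirical $\widehat{\mathrm{Cov}}^{\mathrm{avg}}(u,v\mid S)$ is within $\tau/2$ of its population value, the greedy step adds a true neighbour whenever one is missing (empirical covariance $\ge 3\tau/2>\tau$) and stops exactly when $S\supseteq\mathcal N_2(u)$ (all remaining empirical covariances $\le\tau/2<\tau$). A final pruning pass deleting each $i\in S$ with $\widehat{\mathrm{Cov}}^{\mathrm{avg}}(u,i\mid S\setminus\{i\})<\tau$ then strips off precisely the extra nodes: if $i\notin\mathcal N_2(u)$ then $S\setminus\{i\}\supseteq\mathcal N_2(u)$ makes the covariance $0$, while if $i\in\mathcal N_2(u)$ then $i\in\mathcal N_2(u)\setminus(S\setminus\{i\})$ and Lemma~\ref{lemma_cov_bound} keeps it $\ge 2\tau$, so it is retained. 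Thus the output is $\mathcal N_2(u)$.

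Next I would bound the iteration count with an entropy potential: $H(X_u\mid X_S)$ is non-negative and bounded by a constant. Whenever a node $v$ is added, the population $\mathrm{Cov}^{\mathrm{avg}}(u,v\mid S)>\tau/2$; since the RBM is locally consistent, FKG makes the conditional covariances non-negative, so $\mathbb E_{x_S}\lvert\mathrm{Cov}(u,v\mid x_S)\rvert>\tau/2$, and the fiberwise Pinsker-type bound $I(X_u;X_v\mid X_S)\ge\tfrac12\mathbb E_{x_S}\!\big[\mathrm{Cov}(u,v\mid x_S)^2\big]$ together with Jensen gives $H(X_u\mid X_S)-H(X_u\mid X_S,X_v)=I(X_u;X_v\mid X_S)\ge\tau^2/8$. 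As the potential is non-increasing and drops by at least $\tau^2/8$ per step, there are at most $8/\tau^2=\gamma$ additions, so $\lvert S\rvert\le\gamma$ throughout (and in particular $\lvert\mathcal N_2(u)\rvert\le\gamma$).

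For the sample complexity I would then establish uniform concentration: for an $(\alpha,\beta)$-nondegenerate RBM the conditional law of a visible variable given the rest lies in $[\delta,1-\delta]$ with $\delta=\tfrac12 e^{-2\beta}$, and since a conditional given a subset is a convex combination of these, $p(X_S=x_S)\ge\delta^{\lvert S\rvert}\ge\delta^{\gamma}$ for every configuration. Hence with $M_2$ samples a Chernoff bound puts $\Omega(M_2\delta^{\gamma})$ samples in each bucket $\{X_S=x_S\}$, inside which Hoeffding bounds on the $\pm1$ statistics $X_uX_v,X_u,X_v$ (plus an accurate estimate of the bucket weights) pin down every conditional covariance, and hence their empirically reweighted average over $x_S$, to accuracy $\tau/2$, simultaneously for the fixed $u$, all $v\in[n]$, and all $S$ with $\lvert S\rvert\le\gamma$. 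A union bound over the $\le n$ nodes $v$, the $\le n^{\gamma}$ sets $S$, and the $\le 2^{\gamma}$ configurations $x_S$ contributes the factor $\log(n^{\gamma+1}2^{\gamma}/\zeta)=O(\gamma\log n+\log(1/\zeta))$; absorbing the $2^{\gamma}$ combinatorial term and the weight-estimation loss into powers of $2$ and $\delta$ yields the clean sufficient bound $M_2=\Omega\big((\log(1/\zeta)+\gamma\log n)\,2^{2\gamma}/(\tau^2\delta^{2\gamma})\big)$, and the run time is $O(M_2 n\gamma)$ since each of the $\le\gamma$ iterations scores $\le n$ candidates in one $O(M_2)$-time pass over the data to bucket and accumulate (pruning being no more expensive).

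The main obstacle will be reconciling the competing roles of $\tau$: it must be a constant fraction of the Lemma~\ref{lemma_cov_bound} gap so that true neighbours beat the empirical noise, yet the entropy-potential argument only tolerates $O(1/\tau^2)$ additions, and the concentration has to hold uniformly over $\sim n^{\gamma}2^{\gamma}$ conditioning events while the sample size is permitted to grow only logarithmically in $n$ --- this is exactly what forces the bookkeeping of the $\delta^{\gamma}$ bucket-probability floor and the $2^{\gamma}$ union-bound factor and produces the $2^{2\gamma}/\delta^{2\gamma}$ blow-up. A secondary subtlety is justifying that the \emph{signed} average conditional covariance, rather than its absolute value, is a legitimate progress measure, which is precisely where local consistency enters through FKG positivity.
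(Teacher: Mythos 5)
The paper does not actually prove this statement---it is imported verbatim as Theorem 2 of Ref.~\cite{goel2019learning}---so there is no internal proof to compare against; your reconstruction follows the same route as the cited source (greedy and pruning correctness from Lemma~\ref{lemma_cov_bound} plus the Markov property, the conditional-entropy potential with the fiberwise Pinsker bound and FKG positivity giving at most $\gamma = 8/\tau^2$ additions, and bucketed uniform concentration using the $\delta^{\gamma}$ probability floor), and it is correct. The only wording to tighten is that the greedy step adds the empirical argmax, which need not be a true neighbour; the argument only requires that the algorithm cannot halt while a neighbour is missing and that every added node has population covariance at least $\tau/2$, which is exactly what your potential and pruning steps actually use.
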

Notice that $\gamma \geq d_2,$ then we have $M_2 \geq \Omega \left( ( \log(1/ \zeta)+d_2 \log(n))\frac{2^{2d_2}}{\tau^2\delta^{2d_2}}  \right). $

\begin{figure}[htbp]
\begin{algorithm}[H]
\caption{Greedy algorithm for structure learning of a locally-consistent RBM \cite{ goel2019learning}} \label{alg_rbm_cov}
\begin{algorithmic}[1]
\Require{ Samples $\{X^{i}\}_{i \in [M_2]}$, threshold $\tau$, visible node $u$}
\State Set $S:=\emptyset $ 
\State Let $i^*=\arg\max_v\widehat{\text{Cov}}^{\text{avg}}(u,v\vert S)$, $\eta^*=\max_v\widehat{\text{Cov}}^{\text{avg}}(u,v\vert S)$
\If {$\eta^*\geq \tau$} 
\State $ S=S\cup \{i^*\}$
\Else  
\State go to step 9
\EndIf 
\State Go to step 2
\State pruning step: For each $v\in S$, if $\widehat{\text{Cov}}^{\text{avg}}(u,v\vert S) <\tau$, remove $v$
\Ensure{S}
\end{algorithmic}
\end{algorithm}
\end{figure}

\subsubsection{A greedy algorithm for structure learning of FRBMs}

It has been shown that the two-hop neighbors of an  FRBM can be learned by maximizing the influence function \cite{bresler2019learning}. Given a visible node $u\in [n]$ and a subset $S\subset [n]\setminus \{u\}$  of an FRBM, as in the main paper, the \textit{discrete influence function} is defined as 
\begin{eqnarray}
{I}_u(S) := \mathbb{E}\left [X_u| X_S=\{1\}^{|S|}\right].  \label{eq_definition_influence}
\end{eqnarray}
The discrete influence function is a monotone submodular function for any visible node $u \in [n]$, proved in Ref.~\cite{bresler2019learning}.
The empirical discrete influence function is defined as 
\begin{eqnarray}
\hat{I}_u(S) := \hat{\mathbb{E}}\left[X_u| X_S=\{1\}^{|S|}\right],\label{eq_influence_function}
\end{eqnarray}
where $\hat{\mathbb{E}}$ denotes the empirical expectation based on samples from FRBM. 
Expanding the above equation gives us 
\begin{eqnarray}
\hat{I}_u(S) =   \frac{ \sum_{x_u\in \{\pm 1\}} x_u \hat{p}(X_u=x_u, X_S=\{1\}^{|S|})}{  \hat{p}( X_S=\{1\}^{|S|}) } 
= \frac{2 \hat{p}(X_{S\cup \{u \} }=\{1\}^{|S|+1})}{  \hat{p}( X_S=\{1\}^{|S|}) } -1. \label{influence_expand}
\end{eqnarray}
The distribution probability for $M$ samples  can be obtained by
\begin{eqnarray}
\hat{p}(X=x) =\frac{1}{M}\sum_{i=1}^M \mathbbm{1}_{\{X^{(i)} = x\}}.
\end{eqnarray} 
where $\mathbbm{1}_{\{X^{(i)} = x\}}$ is equal to $1$ when $X^{(i)} = x$, and $0$ otherwise.

The two-hop neighborhoods of each visible node can be found by maximizing the empirical influence function, with the algorithm and theorem provided as follows.

\begin{theorem}[Theorem 6.1 in Ref.~\cite{bresler2019learning}]\label{theorem_rbm_inf}
Given $M_1$ samples $\{X^{i} \in \{\pm1\}^n\}_{i\in[M_1]}$ of the visible nodes of a ferromagnetic Restricted Boltzmann Machine which is $(\alpha,\beta)$ non-degenerate, and has two-hop degree $d_2$. 
For $\delta>0$, if 
\begin{eqnarray}
M_1\geq 2^{2k+3}(d_2/\eta)^2(\log(n)+k\log(en/k))\log(4/\delta),
\end{eqnarray}
where $\eta = \alpha^2 \sigma (-2\beta)(1-\tanh(\beta))^2$, $k=d_2\log(4/\eta)$,  and $\sigma(x)={1}/\left(1+e^{-x}\right)$ for every visible node $u \in [n]$, then Algorithm \ref{algo:greedy} returns $\mathcal{N}_2(u)$ with probability $1-\delta$ in time $\mathcal{O}(M_1kn)$.
\end{theorem}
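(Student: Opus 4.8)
The plan is to reconstruct the influence--maximization analysis of Bresler, Koehler and Moitra. The first move is to marginalize out the hidden layer and work with the Markov random field that the ferromagnetic RBM induces on the visible nodes: its single--vertex neighborhood of $u$ is exactly $\mathcal N_2(u)$, its maximum degree is $d_2$, and --- crucially --- it retains the correlation--inequality structure (FKG for monotonicity, a Griffiths--Hurst--Sherman type second--difference inequality for concavity) that ferromagneticity supplies. Within this model I would isolate the two structural facts that make a greedy rule work. First, for every visible node $u$ the population influence $S \mapsto I_u(S) = \mathbb{E}[X_u \mid X_S = \{1\}^{|S|}]$ is monotone nondecreasing and submodular in $S$; this is where FKG and the GHS--type inequality are used, translating ``condition on $X_i = +1$'' into ``send the external field at $i$ to $+\infty$''. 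Second, a quantitative non--degeneracy estimate: whenever $\mathcal N_2(u) \not\subseteq S$, the aggregate residual influence obeys $I_u(S \cup \mathcal N_2(u)) - I_u(S) \ge \eta$ with $\eta = \alpha^2\,\sigma(-2\beta)\,(1-\tanh\beta)^2$ --- here $\alpha$ provides a nonzero ``signal'' from one genuine neighbor, the field bound $\beta$ gives the $\sigma(-2\beta)$ factor controlling how strongly conditioning can suppress a spin, and a ferromagnetic monotone coupling gives the $(1-\tanh\beta)^2$ factor ensuring the signal survives under the worst--case conditioning. Combining the second fact with submodularity, the single best marginal gain over any such $S$ is at least $\eta/d_2$.

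Given these, the greedy--plus--prune procedure of Algorithm~\ref{algo:greedy} is analyzed as in the locally consistent case (cf.\ Theorem~\ref{Theorem_classical_LRBM}), but with the sharper submodular bookkeeping that yields $k = d_2\log(4/\eta)$. Run greedy with threshold $\tau = \Theta(\eta/d_2)$: so long as a true neighbor is still missing, the best marginal gain exceeds $\tau$, so greedy keeps adding; tracking the potential $\Phi_t := I_u(S_t \cup \mathcal N_2(u)) - I_u(S_t)$, each accepted step multiplies $\Phi$ by at most $(1 - 1/d_2)$, so $\Phi_t \le 2\,e^{-t/d_2}$ and after $k = d_2\log(4/\eta)$ steps $\Phi_t < \eta$, which by the contrapositive of the second structural fact forces $\mathcal N_2(u) \subseteq S$. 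The pruning pass then removes every $v$ whose marginal gain over $S \setminus \{v\}$ is below $\tau$; by the Markov property a non--neighbor contributes zero there, so the output is exactly $\mathcal N_2(u)$.

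It remains to pass from $I_u$ to its empirical version $\hat I_u$ of (\ref{eq_influence_function})--(\ref{influence_expand}). The only delicate point is that $\hat I_u(S)$ is a ratio whose denominator is the empirical probability of the event $X_S = \{1\}^{|S|}$, whose probability is at least $\sigma(-2\beta)^{|S|}$ but no more --- exponentially small in $|S| \le k$ --- which is precisely what forces the $2^{\Theta(k)}$ factor, appearing squared (the $2^{2k}$) because it is a ratio. I would bound $|\hat I_u(S) - I_u(S)|$ by Chernoff bounds on numerator and denominator, demanding enough samples that every conditional estimate encountered during the run is within $\tau/2$ of the truth simultaneously; a union bound over the $\le n$ choices of $u$, the $\le k$ greedy steps, and the $\le \binom{n}{k}$ subsets that could ever be queried supplies the $\log n + k\log(en/k)$ factor, and $\log(4/\delta)$ the confidence, giving $M_1 \ge 2^{2k+3}(d_2/\eta)^2(\log n + k\log(en/k))\log(4/\delta)$ up to the explicit constants. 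The runtime is then immediate: for each of $n$ visible nodes, each of at most $k$ iterations evaluates at most $n$ candidate extensions, and each empirical influence is computed from the $M_1$ samples in $\mathcal{O}(M_1)$ time, for $\mathcal{O}(M_1 k n)$ total.

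I expect the main obstacle to be the two structural facts and, coupled with them, the $k = d_2\log(4/\eta)$ iteration count: showing via ferromagnetic correlation inequalities that a genuine two--hop neighbor always retains aggregate marginal influence at least $\eta$ under arbitrary conditioning, and that submodular greedy with a constant--factor threshold collects all of them while halting after $\Tht{d_2\log(1/\eta)}$ rounds, is where all of the ferromagnetic--model machinery is actually used and where the precise forms of $\eta$ and $k$ get pinned down. By comparison, the concentration step is routine once the exponentially small conditioning probabilities are accounted for correctly.
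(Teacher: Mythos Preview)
The paper does not prove this theorem; it is stated without proof as a quotation of Theorem~6.1 of Bresler--Koehler--Moitra, in the preliminaries section reviewing known results. Your sketch is a faithful reconstruction of their argument (monotone submodularity of $I_u$ via FKG/GHS, the $\eta$ lower bound on residual influence, the $(1-1/d_2)$-contraction potential argument giving $k = d_2\log(4/\eta)$ rounds, and Hoeffding plus union bound for concentration).

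One correction worth flagging: the lower bound on the conditioning event is not $\sigma(-2\beta)^{|S|}$ but simply $p(X_S = \{1\}^{|S|}) \ge 2^{-|S|}$, since in a ferromagnetic model with nonnegative external fields the all-ones configuration is the mode of $X_S$ (this is exactly the bound the present paper uses in its proof of Lemma~\ref{lemma_numsamples_FRBM}). Using $\sigma(-2\beta)^{|S|}$ would produce a $\beta$-dependent base in place of the $2^{2k}$ and would not recover the stated constant $2^{2k+3}$; the sharper $2^{-|S|}$ bound is what makes the sample complexity depend on $\beta$ only through $\eta$ and $k$.
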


\begin{figure}[htbp]
\begin{algorithm}[H] 
\caption{Greedy algorithm for structure learning of a FRBM \cite{bresler2019learning}}
\label{algo:greedy}
\begin{algorithmic}[1]
\Require{$M_1$ samples of an RBM.}
\State Set $S_0 \gets \varnothing$.
\For {$t = 0,\cdots,k-1$}
\State{$j_{t+1} \gets \arg \max_{j\in [n]} \widehat{I}_u(S_t \cup \{j \})$.} \label{classical_ferro_step_min_finding}
\State $S_{t+1} \gets S_t \cup \{j_{t+1} \}$.
\EndFor
\State $\widetilde{\mathcal N}_2(u) \gets \{ j\in S_k : \widehat{I} _u(S_k) - \widehat{I}(S_k \setminus \{j\}) \geq \eta \}$
\Ensure {$\widetilde{\mathcal N}_2(u)$}
\end{algorithmic}\label{algGreedy}
\end{algorithm} 
\end{figure}

The total run time for all $n$ visible nodes comes out to be $\mathcal{O}(M_1kn^2)$. 
Step 3 of Algorithm \ref{algo:greedy} is the most time-consuming part which requires $O(M_1n)$. Note that the number of iterations $k$ depends on the two-hop degree and the upper and lower bounds on the strengths of the RBM.

\subsection{Markov Random Fields}\label{MRF.appendix}
An MRF with $n$ variables can be represented by an undirected graph with $ n$ nodes, where the correlations among the variables can be described by the weights of the hyperedges (fully connected subgraph) in the graph.  

\begin{defn}[Markov Random Field]
The probability distribution  on $x\in \{\pm 1\}^n$  of an $r$-wise MRF of $n$ variables can be expressed as 
\begin{eqnarray}
p(X=x)=\frac{1}{Z}\exp(q(x)),
\end{eqnarray}
where $Z$ is the partition function and $q(x) = \sum_{I\subseteq [n]} q_IX_I$ is a multi-linear polynomial referred to as the potential of the MRF, $q_I$ is the coefficient for the subset $I$, and the monomial $X_I=\prod_{i\in I}X_i$.
\end{defn}

Each monomial $x_I$ in the potential $q(x)$ involves a fully-connected subgraph. The fully-connected subgraph means that each pair of nodes in the subgraph $I$ are connected.

RBMs are a subset of MRFs. Moreover, it has been proved that 
the marginal distribution on the visible nodes of an RBM with two-hop degree $d_2$ is a $d_2$-wise Markov Random Field, e.g., used in Ref.~\cite{NIPS2013_7bb06076, bresler2019learning}.
In the other direction, Ref.~\cite{bresler2019learning} proves that every MRF can be converted to an equivalent Restricted Boltzmann Machine. 
A formal statement is provided as follows.

\begin{theorem}[Theorem 4.4 in Ref.~\cite{bresler2019learning}]
Consider an $r$-wise Markov random field of on $\{\pm 1\}^n$. 
Suppose that the degree of the underlying structure of the MRF is $d$ and the coefficients of each monomial of $q(x)$ are bounded by a constant $M$.
Then there is an RBM with $n$ observable nodes and parameters $(J,h,g)$ such that 
\begin{itemize}
\item The induced MRF of the RBM equals the original MRF, i.e., the marginal distribution of visible nodes equals the original MRF.
\item There are at most $O(n2^d)$ hidden nodes.
\item The degree of every hidden node is at most $r$.
\item The two-hop neighborhood of every visible node equals its original MRF neighborhood. The two-hop neighborhood degree $d_2$ equals the degree $d$ of the structure graph of the MRF.
\end{itemize}
\end{theorem}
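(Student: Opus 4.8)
The plan is to build the RBM explicitly, exploiting the fact that marginalizing out the hidden layer turns an RBM into a log--linear model with a very structured potential. Summing each $y_j\in\{\pm1\}$ out of $\exp(x^TJy+h^Tx+g^Ty)$ produces the factor $2\cosh\!\big(\sum_i J_{ij}x_i+g_j\big)$, so the induced distribution on the visible nodes is proportional to $\exp\!\big(h^Tx+\sum_{j}\phi(\langle J_{\cdot j},x\rangle+g_j)\big)$ with $\phi(t):=\log(2\cosh t)$ and $J_{\cdot j}$ the $j$-th column of $J$. Hence it suffices to write the MRF potential $q(x)=\sum_I q_IX_I$ in the form $h^Tx+\sum_j\phi(\langle J_{\cdot j},x\rangle+g_j)+\mathrm{const}$: the constant is absorbed by the partition function, the degree-$\le 1$ part of $q$ by $h$, and the hidden nodes must produce the remaining monomials. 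Note that $\phi(\langle a,x\rangle+b)$, written as a multilinear function on $\{\pm1\}^n$, involves only the variables in $\mathrm{supp}(a)$; so a hidden node whose weights are supported on a clique $I$ of the MRF graph $G$ affects only monomials $X_T$ with $T\subseteq I$.

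\emph{Realizing one monomial.} I would single out the following as the technical core. For a hidden node with weight vector $a$ supported on a set $I$ and field $g$, the coefficient of $X_I$ in its $\phi$-contribution equals $\Phi_I(a,g):=2^{-|I|}\sum_{s\in\{\pm1\}^I}\big(\prod_{i\in I}s_i\big)\phi(\langle a,s\rangle+g)$, and this function (i) vanishes if any $a_i=0$, and (ii) can be made equal to any prescribed real number by a suitable choice of $a$ and $g$ --- e.g.\ scaling $a$ so that $\phi$ acts like $|\cdot|$ and then tuning $g$. Granting this, one adds hidden nodes clique by clique: list the cliques of $G$ in nonincreasing order of size and maintain a residual polynomial $R$ initialised to $q$; for the current clique $I$, if its coefficient in $R$ is nonzero, choose $(a,g)$ supported on $I$ so that $\Phi_I(a,g)$ equals that coefficient, add the hidden node, and subtract its entire $\phi$-contribution from $R$. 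Each step strictly decreases $\deg R$ (the new node cancels the top monomial and only adds monomials of strictly smaller degree on $I$), so after finitely many steps $R$ is affine and is folded into $h$ and the normalization. (One must check the subtractions never strand an MRF edge by zeroing out all cliques through it; this is arranged by a generic/perturbed choice of $(a,g)$, or by quoting the explicit gadget of Ref.~\cite{bresler2019learning}.) I expect statement (ii) --- realizing an arbitrary monomial coefficient with unit--weight $\phi$ terms and few hidden nodes --- to be the main obstacle; everything else is bookkeeping, and the coefficient bound $M$ on the $q_I$ is what keeps all constructed weights bounded.

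\emph{Verifying the four conclusions.} \emph{(Induced MRF.)} By the marginalization identity, the visible marginal of the constructed RBM is $\propto\exp(q(x))$, which is exactly the given MRF after normalization. \emph{(Number of hidden nodes.)} At most one hidden node is added per clique of $G$, and a clique of a degree-$d$ graph is determined by its smallest vertex together with a subset of that vertex's $\le d$ neighbours, so $G$ has at most $n2^d$ cliques; thus $O(n2^d)$ hidden nodes. (If one instead realizes each monomial with $2^{O(|I|)}$ nodes the count is $O(n2^{d+O(r)})$; it can be brought back to $O(|I|)$ per clique by exploiting that $X_I$ depends only on $\sum_{i\in I}x_i$, which cuts the relevant function space to dimension $|I|+1$.) \emph{(Hidden-node degree.)} Each hidden node has weights supported on a single clique $I$, so its degree is $|I|\le r$. \emph{(Two-hop neighbourhoods.)} Visible nodes $u,v$ are two-hop neighbours iff some hidden node is adjacent to both; the only hidden nodes are clique nodes, the node realizing clique $I$ has every coordinate of its weight vector nonzero on $I$ (by (i), a zero coordinate would kill the $X_I$ coefficient), and no hidden node reaches outside its clique, so $u,v$ are two-hop neighbours iff $\{u,v\}$ lies in a clique of $G$ with nonzero coefficient, i.e.\ iff $u$ and $v$ are adjacent in $G$. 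Hence the two-hop neighbourhood of every visible node coincides with its MRF neighbourhood and $d_2=\max_v|N_G(v)|=d$.
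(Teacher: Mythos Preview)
The paper does not prove this statement: it is quoted verbatim as Theorem~4.4 of Ref.~\cite{bresler2019learning} and used as a black box in the preliminaries, with no proof supplied. So there is no ``paper's own proof'' to compare against.

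That said, your proposal is essentially the construction from Ref.~\cite{bresler2019learning} itself: marginalize out the hidden layer to obtain $\sum_j\log\cosh$ terms, observe that a hidden node supported on a clique $I$ contributes only monomials $X_T$ with $T\subseteq I$, and then process the cliques of $G$ in decreasing size so that each new hidden node cancels one top monomial while only perturbing strictly smaller ones. The bookkeeping for the four bullet points (clique count $\le n2^d$, hidden degree $\le r$, two-hop $=$ MRF neighbourhood) is exactly as you describe. The only place you hedge --- realizing an arbitrary coefficient via $\Phi_I(a,g)$ and ensuring no MRF edge is accidentally zeroed out --- is also where the original reference does the real work with an explicit gadget; your sketch of ``scale $a$ so $\phi$ behaves like $|\cdot|$ and tune $g$'' is the right intuition but would need to be made precise (or simply cited) to be a complete proof.
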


To learn the underlying structure of an MRF is to learn the neighbors of each node. 
There is an optimal algorithm for learning the underlying graph with $\log n$ samples in time $n^{\Ord{t}}$ and learning the parameters with $n^{\Ord{t}}$  samples  \cite{klivans2017learning}.

\subsection{Neural network quantum states}
As mentioned in the main paper, it has been shown that any $n$-qubit quantum state 
can be arbitrarily well approximated by a neural network quantum state (NNQ state) based on RBM with potentially an exponential number of hidden nodes \cite{huang2021neural}. 
Suppose a 
NNQ state based on RBM is given by
\begin{eqnarray}
\ket{\phi}_R=\sum_{x\in\{\pm 1\}^{n}} {\phi(x)} \ket{x}, ~~~~\vert \phi(x)\vert^2 =  {p(x)}, \label{equ_state_phi_appendix}
\end{eqnarray}
where the magnitude  
$p(x)$ is the marginal probability distribution on the visible nodes of RBMs. For simplicity, in this paper, we focus on NNQ states whose amplitudes are real and positive, i.e., $\phi(x)=\sqrt{p(x)}$.
We define a class named $\mathcal C$ which is defined in the main paper, Def. \ref{definition_quantum_NNQ_class}, and a subset of this class which is named $\mathcal C_F.$ The formal definition is shown in the following.
\begin{defn}[FRBM-NNQ states]
\label{definition_quantum_NNQ_class_appendix}
Let $\mathcal H_n$ be the Hilbert space of an $n$-qubit system.  
Define $\mathcal  C_F := \mathcal C(n,d_2,\alpha,\beta) \subset \mathcal H_n$ the class of quantum states based on $(\alpha,\beta)$-nondegenerate ferromagnetic RBM with two-hop degree $d_2$. 
\end{defn}

With the $L_p$ distance defined in Eq.~(\ref{eq_general_error_setting_all_term_bound_main}), we define the following classes. 
\begin{defn}[$\epsilon_p$-close classes]\label{definition_quantum_classes_appendix}
Let $H_n$ be Hilbert space of n-qubits. 
Let $\epsilon_p \geq 0$. Define two classes as follows: 
\begin{itemize}
\item $\mathcal  C(\epsilon_p)\subset H_n$  such that  for each $\ket{\psi} \in \mathcal C(\epsilon_p)$ there exists a $\ket{\phi}_R \in \mathcal C$ with  $L_p$ distance bounded by $\epsilon_p$.  (Same as Def. \ref{definition_quantum_classes}).
\item $\mathcal  C_F(\epsilon_p)\subset H_n$  such that  for each $\ket{\psi} \in \mathcal C_F(\epsilon_p)$ there exists a $\ket{\phi}_R \in \mathcal C_F$ with  $L_p$ distance bounded by $\epsilon_p$.  
\end{itemize}
We say that a $\ket{\psi}$ is associated with the NNQ state $\ket{\phi}_R$.
\end{defn}
To distinguish with FRBM-NNQ states, sometimes we write LRBM-NNQ state to mean that the NNQ state is based on locally-consistent RBMs, which corresponds to Def.~\ref{definition_quantum_NNQ_class} in the main paper.

\section{Bounds for covariance and influence difference under $L_p$ distance}

In this section, we show that if $\epsilon_p$ is small enough (bounded by $\alpha,\beta$ and $n$), the difference of conditional covariance, defined in Def.~\ref{covariance_appen}, between $\ket{\psi}\in \mathcal{C} (\epsilon_p)$ and the associated NNQ state $\ket{\phi}\in \mathcal{C}$ can be bounded.
Similarly, we show that the difference of the discrete influence functions, defined in Eq.~(\ref{eq_definition_influence}), between $\ket{\psi}\in \mathcal{C}_F(\epsilon_p)$ and the associated FRBM-NNQ state can be bounded if their $L_p$ distance is close enough.
We first prove the following two lemmas which will be used later.

\begin{lemma}\label{lemma_global_distance_subset_bound}
Assume the probabilities $\tilde{p}(x)$ and ${p}(x)$ ($x\in \{\pm 1\}^n$), from the distributions $\widetilde{\mathcal D}$ and ${\mathcal D}$ respectively, are $\epsilon_p$ close under the $L_p$ distance described in Eq.~(\ref{eq_general_error_setting_all_term_bound_main}).
Then for any subset $I\subseteq [n]$ and $x_I\in \{\pm 1\}^{\vert I \vert}$, we have
\begin{eqnarray}
\left\vert p(x_I)- \tilde{ p}(x_I) \right\vert 
\leq 2^{(n-\vert I \vert)(1-1/p)}\epsilon_p.
\end{eqnarray} 
\end{lemma}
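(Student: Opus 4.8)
The plan is to reduce the statement about marginals to the given hypothesis about the full distribution by writing each marginal as a sum over the unconstrained coordinates and then applying Hölder's inequality. First I would fix a subset $I\subseteq[n]$ and a partial configuration $x_I\in\{\pm1\}^{|I|}$, and write $p(x_I)=\sum_{x_{\bar I}} p(x_I,x_{\bar I})$ where the sum ranges over the $2^{n-|I|}$ configurations $x_{\bar I}\in\{\pm1\}^{n-|I|}$ of the complementary coordinates, and likewise for $\tilde p(x_I)$. Subtracting and using the triangle inequality gives
\begin{eqnarray}
\left\vert p(x_I)-\tilde p(x_I)\right\vert
\leq \sum_{x_{\bar I}} \left\vert p(x_I,x_{\bar I})-\tilde p(x_I,x_{\bar I})\right\vert. \notag
\end{eqnarray}

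The right-hand side is an $L_1$-type sum over a set of $2^{n-|I|}$ terms, each of which is one of the coordinates of the difference vector $p-\tilde p$. The next step is to bound this $L_1$ sum of $2^{n-|I|}$ terms by the $L_p$ norm of the same terms using Hölder's inequality: for any vector $v$ with $N$ entries, $\|v\|_1\leq N^{1-1/p}\|v\|_p$. Applying this with $N=2^{n-|I|}$ and noting that the $L_p$ norm of the restricted vector is at most the $L_p$ norm of the full difference vector $p-\tilde p$ over all of $\{\pm1\}^n$, which by hypothesis is at most $\epsilon_p$, yields
\begin{eqnarray}
\left\vert p(x_I)-\tilde p(x_I)\right\vert
\leq 2^{(n-|I|)(1-1/p)}\left(\sum_{x_{\bar I}} \left\vert p(x_I,x_{\bar I})-\tilde p(x_I,x_{\bar I})\right\vert^p\right)^{1/p}
\leq 2^{(n-|I|)(1-1/p)}\epsilon_p, \notag
\end{eqnarray}
which is exactly the claimed bound.

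There is no real obstacle here; the only point requiring a small amount of care is the boundary/degenerate cases. For $p=1$ the exponent $1-1/p$ is zero and the inequality $\|v\|_1\leq\epsilon_1$ holds directly with no combinatorial factor, which is consistent. For $p\to\infty$ the factor becomes $2^{n-|I|}$, recovering the crude max-norm-to-marginal bound. One should also note that Hölder's inequality as used requires $p\geq1$, which is part of the standing assumption on the $L_p$ distance. I would present the argument in the two displayed steps above, remarking that the key mechanism is simply that a marginal is a partial sum of at most $2^{n-|I|}$ joint probabilities and that passing from an $\ell_1$ to an $\ell_p$ norm on that many terms costs a factor $2^{(n-|I|)(1-1/p)}$.
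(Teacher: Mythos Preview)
Your proof is correct and follows essentially the same approach as the paper: write the marginal as a sum over the $2^{n-|I|}$ complementary configurations, apply the triangle inequality, then use H\"older's inequality $\|v\|_1\leq N^{1-1/p}\|v\|_p$ with $N=2^{n-|I|}$, and finally bound the restricted $\ell_p$ norm by the full $\ell_p$ norm $\epsilon_p$. Your additional remarks on the $p=1$ and $p\to\infty$ boundary cases are a nice touch not present in the paper but do not change the argument.
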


\begin{proof}
Let $\bar{I}=[n]\setminus I $ and $x_{\bar{I}}\in\{\pm 1\}^{\vert \bar{I} \vert}$.
Let $P_{x_{I}}$ be a vector of  $p(x_I, x_{\bar{I}})$ over all configurations $x_{\bar{I}}\in \{\pm1\}^{\vert \bar{I}\vert}$.
Note that the dimension of $P_{x_{\bar{I}}}$ is $2^{|\bar{I}|}$.
We define $\tilde{P}_{x_I}$ in a similar way.
By definition of the marginal distribution, we have 
\begin{eqnarray}
\left\vert p(x_I)- \tilde{p}(x_I)\right\vert 
& =& \left\vert \sum_{x_{\bar{I}} }p(x_I,x_{\bar{I}})-\sum_{x_{\bar{I}} }\tilde{p}(x_I,x_{\bar{I}}) \right\vert \nonumber\\
& \leq & \sum_{x_{\bar{I}} } \left\vert p(x_I,x_{\bar{I}})-\tilde{p}(x_I,x_{\bar{I}}) \right\vert  \nonumber \\
&=& \|P_{x_{I}}-\tilde{P}_{x_I}\|_1 \nonumber\\
&\leq& 2^{|\bar I|(1-\frac{1}{p})} \|P_{x_{I}}-\tilde{P}_{x_I}\|_p \nonumber \\
&\leq& 2^{\vert \bar{I}\vert (1-\frac{1}{p})}\epsilon_p = 2^{(n - \vert{I}\vert)(1-\frac{1}{p})}\epsilon_p, \label{eq_ferro_bound_maginal_s_and_u}
\end{eqnarray}
where the first inequality is obtained by the triangle inequality, the second inequality comes from the H\"older's inequality and the last inequality is obtained by the fact that $\|P_{x_{I}}-\tilde{P}_{x_I}\|_p$ is smaller than $\epsilon_p$ since $x_I$ is fixed.
Recall that by H\"older's inequality, for a vector $x\in \mathbb{R}^m$ we have
\begin{eqnarray}
    \|x\|_q\leq m^{1/q-1/p}\|x\|_p.
\end{eqnarray}
In this inequality, set $q=1$ and $m=2^{|\bar{I}|}$, we achieve the inequality above.
\end{proof}

\begin{lemma}\label{lemma_appendix_condi_pro_dis_bound}
If $A, B$ are random variables following a distribution $\mathcal D$, and $C, D $ are random variables following a distribution $\mathcal{\widetilde{D}}$ on the same sample space,  we have 
\begin{eqnarray}
\vert p(A\vert B) - \tilde{p}(C\vert D) \vert \leq \frac{1}{p(B)} \left(\left\vert  p(A,B)-\tilde{p}(C,D)
\right\vert + \left\vert  p(B)-\tilde{p}(D)
\right\vert  \right)
\end{eqnarray}
\end{lemma}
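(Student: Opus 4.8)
\textbf{Proof plan for Lemma~\ref{lemma_appendix_condi_pro_dis_bound}.} The plan is to start from the definition of conditional probability, $p(A\vert B) = p(A,B)/p(B)$ and $\tilde p(C\vert D) = \tilde p(C,D)/\tilde p(D)$, and control the difference of these two ratios by a standard ``add and subtract'' trick. First I would write
\begin{eqnarray}
p(A\vert B) - \tilde p(C\vert D)
= \frac{p(A,B)}{p(B)} - \frac{\tilde p(C,D)}{\tilde p(D)},
\end{eqnarray}
and bring both fractions over a common denominator, or equivalently insert the intermediate term $\tilde p(C,D)/p(B)$:
\begin{eqnarray}
p(A\vert B) - \tilde p(C\vert D)
= \frac{p(A,B) - \tilde p(C,D)}{p(B)}
+ \tilde p(C,D)\left(\frac{1}{p(B)} - \frac{1}{\tilde p(D)}\right).
\end{eqnarray}
The second term equals $\tilde p(C,D)\,\big(\tilde p(D) - p(B)\big)/\big(p(B)\tilde p(D)\big)$.

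The key observation making the bound clean is that $\tilde p(C,D) \leq \tilde p(D)$, since a joint probability never exceeds the corresponding marginal; hence the ratio $\tilde p(C,D)/\tilde p(D) \leq 1$. Applying the triangle inequality to the two terms and using this bound yields
\begin{eqnarray}
\left\vert p(A\vert B) - \tilde p(C\vert D)\right\vert
\leq \frac{\left\vert p(A,B) - \tilde p(C,D)\right\vert}{p(B)}
+ \frac{\left\vert \tilde p(D) - p(B)\right\vert}{p(B)},
\end{eqnarray}
which is exactly the claimed inequality after factoring out $1/p(B)$. Here I am implicitly assuming $p(B) > 0$ and $\tilde p(D) > 0$ so the conditional probabilities are well defined; if either marginal were zero the statement would be vacuous or degenerate, so this is not a real restriction.

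There is essentially no serious obstacle: the only thing to be a little careful about is the choice of intermediate term in the ``add and subtract'' step. One must split so that a \emph{numerator} difference and a \emph{marginal} difference appear, rather than, say, $p(A,B)$ times a difference of reciprocals (which would produce an extra factor and a weaker bound). Choosing to put the common denominator as $p(B)$ (the one that appears in the final statement) and using $\tilde p(C,D)\le \tilde p(D)$ to kill the prefactor on the second term is what makes the constant come out to exactly $1$. This lemma will be combined with Lemma~\ref{lemma_global_distance_subset_bound} later: the two quantities $\left\vert p(A,B)-\tilde p(C,D)\right\vert$ and $\left\vert p(B)-\tilde p(D)\right\vert$ are marginal differences over subsets of $[n]$, so each is bounded by a factor of the form $2^{(n-\vert I\vert)(1-1/p)}\epsilon_p$, giving the desired control of the conditional-probability (and hence conditional-covariance and influence) differences in terms of $\epsilon_p$.
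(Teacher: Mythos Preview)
Your proposal is correct and follows essentially the same approach as the paper: both insert the intermediate term $\tilde p(C,D)/p(B)$, apply the triangle inequality, and then use $\tilde p(C,D)/\tilde p(D)=\tilde p(C\vert D)\leq 1$ to absorb the prefactor on the marginal-difference term. The only cosmetic difference is that the paper writes out the intermediate equality line by line before bounding, while you describe the same manipulation more compactly.
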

\begin{proof}
We expand the conditional probability by using Bayes' theorem and using an ancillary conditional probability  as following
\begin{eqnarray}
\vert p(A\vert B) - \tilde{p}(C\vert D) \vert  & = &  \left\vert \frac{p(A,B)}{p(B)} -  \frac{\tilde{p}(C,D)}{p(B)} + \frac{\tilde{p}(C,D)}{p(B)}  -\frac{\tilde{p}(C,D)}{\tilde{p}(D)}    \right\vert\nonumber\\
& \leq & 
\left\vert \frac{p(A,B)}{p(B)} -  \frac{\tilde{p}(C,D)}{p(B)} \right\vert + \left\vert \frac{\tilde{p}(C,D)}{p(B)}  -\frac{\tilde{p}(C,D)}{\tilde{p}(D)}    \right\vert\nonumber\\
& = & 
\left\vert \frac{p(A,B)}{p(B)} -  \frac{\tilde{p}(C,D)}{p(B)} 
\right\vert + \frac{\tilde{p}(C,D)}{\tilde{p}(D)}  \left\vert \frac{{p}(B) -\tilde{p}(D) }{p(B)}  \right\vert\nonumber\\
&\leq & 
\frac{1}{p(B)} \left(\left\vert  p(A,B)-\tilde{p}(C,D)
\right\vert + \left\vert  p(B)-\tilde{p}(D)
\right\vert  \right),
\end{eqnarray}
where the first inequality is obtained by using the triangle inequality and the last inequality results from $\frac{\tilde{p}(C,D)}{\tilde{p}(D)}=\tilde{p}(C|D)\leq 1$.
\end{proof}

\subsection{Conditional covariance difference with LRBM-NNQ states}

Assume we are given an $n$-qubit quantum state $\ket{\psi}$ in class $\mathcal C(\epsilon_p)$ as defined in Def.~\ref{definition_quantum_classes_appendix} and $\ket{\phi}_R \in \mathcal C$ is an associate NNQ state. The magnitudes of each amplitude in the computational basis are $\tilde{p}(x)$ and ${p}(x)$ respectively. We show that the difference between the average conditional covariance of any $u,v\in[n]$ conditioned on a $S\subset [n]\setminus \{u,v\}$ of these two probability distributions can be bounded by a function of $\tau,$ if $\epsilon_p$ is small enough. 

\begin{lemma}\label{lemma_local_rbm_global_distance}
If the probabilities $\tilde{p}(x)$ and ${p}(x)$ ($x\in \{\pm\}^n$), from the distributions $\widetilde{\mathcal D}$ and ${\mathcal D}$ respectively, satisfy the $L_p$ distance described in Eq.~(\ref{eq_general_error_setting_all_term_bound_main}), and $\mathcal D$ is a distribution from a locally consistent RBM. Let  ${\text{Cov}}^{\text{avg}}(u,v \vert S)$  and $\widetilde{{\text{Cov}}}^{\text{avg}}(u,v \vert S)$  as average conditional covariance of distributions $\mathcal D$ and $\widetilde{\mathcal D}$ respectively. Let $\tau>0$ and a constant $C>0$, if the following constraint is satisfied
\begin{eqnarray}
\epsilon_p \leq \frac{\tau}{2^{2+n(1-1/p)} C }\label{appendix_eq_error_LRBM} ,
\end{eqnarray}
where $C>0$ is constant,
we have 
\begin{eqnarray}
\bigl \vert {\text{Cov}}^{\text{avg}}(u,v \vert S) - \widetilde{{\text{Cov}}}^{\text{avg}}(u,v \vert S) \bigl\vert  \leq \frac{\tau}{C}.
\end{eqnarray}
\end{lemma}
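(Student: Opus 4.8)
\textbf{Proof proposal for Lemma~\ref{lemma_local_rbm_global_distance}.}
The plan is to unfold the average conditional covariance into a weighted sum of products of (marginal and joint) probabilities, and then bound the discrepancy term by term using Lemma~\ref{lemma_global_distance_subset_bound} and Lemma~\ref{lemma_appendix_condi_pro_dis_bound}. First I would write, for fixed $x_S$,
$\text{Cov}(u,v\vert X_S=x_S) = \mathbb{E}[X_uX_v\vert x_S] - \mathbb{E}[X_u\vert x_S]\mathbb{E}[X_v\vert x_S]$,
and expand each conditional expectation as a signed sum of conditional probabilities $p(x_u,x_v\vert x_S)$ and $p(x_u\vert x_S)$, $p(x_v\vert x_S)$; then average over $x_S$ with weight $p(x_S)$ (resp.\ $\tilde p(x_S)$). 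The quantity $\bigl\vert \text{Cov}^{\text{avg}} - \widetilde{\text{Cov}}^{\text{avg}}\bigr\vert$ then becomes a finite sum of differences of the form $\bigl\vert p(x_S)\,p(\cdot\vert x_S) - \tilde p(x_S)\,\tilde p(\cdot\vert x_S)\bigr\vert$, which equals $\bigl\vert p(\cdot, x_S) - \tilde p(\cdot, x_S)\bigr\vert$ after clearing denominators; in fact it is cleanest to first rewrite $p(x_S)\mathbb{E}[X_u\vert x_S]\mathbb{E}[X_v\vert x_S]$ by multiplying and dividing appropriately and use Lemma~\ref{lemma_appendix_condi_pro_dis_bound} to handle the mismatch between $p(x_S)$ and $\tilde p(x_S)$ in the denominators.

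Next I would invoke Lemma~\ref{lemma_global_distance_subset_bound}: for any subset $I\subseteq[n]$ and any fixed partial configuration $x_I$, $\vert p(x_I) - \tilde p(x_I)\vert \le 2^{(n-\vert I\vert)(1-1/p)}\epsilon_p \le 2^{n(1-1/p)}\epsilon_p$, the last step just dropping the (favorable) dependence on $\vert I\vert$ to get a uniform bound. Each term in the expanded covariance difference is controlled by such a marginal-difference bound (for the subsets $S$, $S\cup\{u\}$, $S\cup\{v\}$, $S\cup\{u,v\}$), with an $O(1)$ combinatorial prefactor coming from the $\pm1$ sign sums and from the $1/p(x_S)$ factors (which are bounded below because $\mathcal D$ comes from a locally consistent RBM with bounded parameters — this is where the RBM hypothesis and the implicit lower bound $p(x_S)\ge$ const enter, absorbed into the constant $C$). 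Collecting the prefactors into a single constant, one gets $\bigl\vert \text{Cov}^{\text{avg}}(u,v\vert S) - \widetilde{\text{Cov}}^{\text{avg}}(u,v\vert S)\bigr\vert \le 2^{2}\,C\,2^{n(1-1/p)}\epsilon_p$ for a suitable constant $C$, and then the hypothesized bound $\epsilon_p \le \tau/(2^{2+n(1-1/p)}C)$ on $\epsilon_p$ immediately yields the claimed bound $\tau/C$.

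The main obstacle I anticipate is the bookkeeping around the denominators: the covariance is intrinsically a ratio (conditional expectations divide by $p(x_S)$), so to reduce everything to the \emph{joint}-marginal differences controlled by Lemma~\ref{lemma_global_distance_subset_bound} one has to carefully apply Lemma~\ref{lemma_appendix_condi_pro_dis_bound} (or an analogous add-and-subtract trick) to separate the ``numerator mismatch'' from the ``denominator mismatch,'' and one must justify the uniform lower bound on $p(x_S)$ for the RBM distribution — i.e.\ that conditioning probabilities are bounded away from $0$ by a quantity depending only on $\beta$ and $\vert S\vert \le d_2$. Once that lower bound is in hand and folded into $C$, the rest is a routine triangle-inequality expansion. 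A secondary, purely cosmetic, point is to make sure the $2^{n(1-1/p)}$ factor is tracked consistently so that it cancels exactly against the denominator in the constraint on $\epsilon_p$; I would keep the exponent symbolic ($n(1-1/p)$) throughout rather than specializing $p$, so that both the $p=1$ (giving $2^0=1$) and $p>1$ regimes mentioned after Theorem~\ref{Theorem_quantum_state_learning} come out of the same calculation.
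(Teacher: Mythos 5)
Your overall plan (expand the average conditional covariance, split into the two terms, and control each difference via Lemma~\ref{lemma_global_distance_subset_bound} and Lemma~\ref{lemma_appendix_condi_pro_dis_bound} with the triangle inequality) is the same as the paper's, and the treatment of the first term $\sum_{x_u,x_v}x_ux_v\,p(x_u,x_v)$ is fine. The genuine gap is in how you propose to handle the denominators $1/p(x_S)$: you plan to invoke a uniform lower bound $p(x_S)\ge \mathrm{const}$ for the RBM and ``fold it into $C$.'' This does not prove the lemma as stated. The constant $C$ here is a free parameter that reappears in the conclusion $\tau/C$ (it is set to $C=8$ in Theorem~\ref{thmMainLC}), so it cannot silently absorb an RBM-dependent quantity; if a lower bound on $p(x_S)$ entered the estimate, the constraint on $\epsilon_p$ would have to carry an explicit factor like $\sigma(-2\beta)^{|S|}$, and the stated threshold $\tau/(2^{2+n(1-1/p)}C)$ has no such factor. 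Moreover $|S|$ is bounded by $\gamma=8/\tau^2$ from Theorem~\ref{Theorem_classical_LRBM}, not by $d_2$, so the degradation would be substantial. Your arithmetic also does not close: from $\bigl|\cdot\bigr|\le 2^{2}C\,2^{n(1-1/p)}\epsilon_p$ and $\epsilon_p\le \tau/(2^{2+n(1-1/p)}C)$ you get $\tau$, not $\tau/C$; the intermediate bound must have prefactor exactly $2^{2+n(1-1/p)}$ with no extra constant.

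The paper avoids the lower bound entirely by a cancellation you are missing. Write the second term as $\sum_{x_u,x_v,x_S}x_ux_v\,p(x_u\vert x_S)\,p(x_v,x_S)$, i.e.\ absorb $p(x_S)$ into one factor so that only \emph{one} conditional probability remains. Then add and subtract so that the difference of conditionals is always multiplied by the joint $\tilde p(x_v,x_S)$:
\begin{equation*}
p(x_u\vert x_S)\bigl(p(x_v,x_S)-\tilde p(x_v,x_S)\bigr)+\tilde p(x_v,x_S)\bigl(p(x_u\vert x_S)-\tilde p(x_u\vert x_S)\bigr).
\end{equation*}
The first piece is bounded using $p(x_u\vert x_S)\le 1$; in the second, applying Lemma~\ref{lemma_appendix_condi_pro_dis_bound} (with the roles chosen so the denominator is $\tilde p(x_S)$) produces a factor $\tilde p(x_v,x_S)/\tilde p(x_S)=\tilde p(x_v\vert x_S)\le 1$, so the denominator cancels exactly and everything reduces to joint-marginal differences controlled by Lemma~\ref{lemma_global_distance_subset_bound}. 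This yields $|\mathcal T_2-\widetilde{\mathcal T}_2|\le 3\cdot 2^{n(1-1/p)}\epsilon_p$ and the total $2^{2+n(1-1/p)}\epsilon_p$ with no dependence on $\alpha$, $\beta$, or $|S|$, which is what the stated constraint requires.
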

\begin{proof}
Expand the average conditional covariance defined in Eq.~(\ref{empi_ave_cov})  as the following
\begin{eqnarray}
{\text{Cov}}^{\text{avg}}(u,v \vert S)
&=&{\mathbb{E}}_{x_S}\Bigl[ {\mathbb{E}}[X_uX_v\vert X_S=x_S]-{\mathbb{E}}[X_u\vert X_S=x_S]{\mathbb{E}}[X_v\vert X_S=x_S] \Bigl] 
\nonumber\\
&=&\sum_{x_u,x_v}x_ux_v  {p}(x_u,x_v)
- \sum_{x_u,x_v,x_S}x_ux_v {p}(x_u\vert x_S){p}(x_v\vert x_S)p(x_S) 
\nonumber\\
&=&\sum_{x_u,x_v}x_ux_v {p}(x_u, x_v)
-\sum_{x_u,x_v,x_S}x_ux_v{p}(x_u\vert x_S){p}(x_v, x_S).
\end{eqnarray}

We consider the difference between the average conditional covariance ${\text{Cov}}^{\text{avg}}(u,v \vert S)$  and $ \widetilde{\text{Cov}}^{\text{avg}}(u,v \vert S)$. Let $\mathcal T_1$ and $\mathcal T_2$ represent the first and second terms of the above equation as follows  
\begin{eqnarray}
\mathcal T_1 &=&\sum_{x_u,x_v}x_ux_v {p}(x_u,x_v),\nonumber\\
\mathcal T_2 &=&\sum_{x_u,x_v,x_S}x_ux_v{p}(x_u\vert x_S){p}(x_v, x_S),\label{eq_definition_T2}
\end{eqnarray}
and $\widetilde{\mathcal T}_1$ and $\widetilde{\mathcal T}_2$ represent the corresponding terms in $ \widetilde{\text{Cov}}^{\text{avg}}(u,v \vert S)$. 
We have
\begin{eqnarray}
\left\vert {\text{Cov}}^{\text{avg}}(u,v  \vert S)- \widetilde{\text{Cov}}^{\text{avg}}(u,v \vert S)\right\vert 
& = &\left\vert \mathcal T_1+\mathcal T_2- \widetilde{\mathcal T}_1-\widetilde{\mathcal T}_2\right\vert 
\leq  \left\vert \mathcal T_1-\widetilde{\mathcal T}_1\right\vert
+\left\vert \mathcal T_2- \widetilde{\mathcal T}_2\right\vert \label{equ2}.
\end{eqnarray}
The first term can be divided into two parts: (1) the configurations of node $v$ and $u$ are identical; (2) the configurations of node $v$ and $u$ are different.  We have
\begin{eqnarray}
\left\vert \mathcal T_1-\widetilde{\mathcal T}_1\right\vert 
&=& \left\vert \sum_{x_u,x_v}x_u x_v  p (x_u,x_v)- \sum_{x_u,x_v}x_u x_v \tilde{p}(x_u,x_v) \right\vert  \nonumber\\
&=&\left\vert \sum_{x_u=x_v} \left( p (x_u,x_v)- \tilde{p}(x_u,x_v)\right)  +
\sum_{x_u=-x_v}  \left( p (x_u,x_v)- \tilde{p}(x_u,x_v)\right) \right\vert   \nonumber\\
&\leq &  \left\vert \sum_{x_u=x_v}  p (x_u,x_v)- \tilde{p}(x_u,x_v) \right\vert  + \left\vert  \sum_{x_u=-x_v}   p (x_u,x_v)- \tilde{p}(x_u,x_v)\right\vert,
\end{eqnarray}
where the first inequality comes from the triangle inequality. 
By using Lemma \ref{lemma_global_distance_subset_bound} for set $I = \{u,v\},$ we have
\begin{eqnarray}
\left\vert \mathcal T_1-\widetilde{\mathcal T}_1\right\vert  \leq 2^{n(1-1/p)}\epsilon_p. \label{eq_local_RBM_global_term_1}
\end{eqnarray}

Similarly, we divide the second term in Eq.~(\ref{equ2}) into two parts: whether the configurations of node $u$ and $v$ are the same or not.
For simplicity, we consider the case when $x_u=x_v$.
We have  
\begin{eqnarray}
&& \left\vert \sum_{x_S}  \sum_{x_u=x_v}    \Bigl( p(x_u \vert x_S) p(x_v,x_S)  - \tilde{p}(x_u \vert x_S) \tilde{p}(x_v,x_S)  \Bigl) \right\vert   \nonumber\\
& = &  \left\vert \sum_{x_S}  \sum_{x_u=x_v}   
p(x_u \vert x_S) ( p(x_v,x_S) - \tilde{p}(x_v,x_S)) +  p(x_u \vert x_S)\tilde{p}(x_v,x_S)- \tilde{p}(x_u \vert x_S) \tilde{p}(x_v,x_S) \right\vert  
\nonumber\\
& = &  \left\vert \sum_{x_S}  \sum_{x_u=x_v}   
p(x_u \vert x_S) ( p(x_v,x_S) - \tilde{p}(x_v,x_S)) +  \tilde{p}(x_v,x_S)\left(p(x_u \vert x_S)- \tilde{p}(x_u \vert x_S) \right)  \right\vert  
\nonumber\\
& \leq &   \sum_{x_S}  \sum_{x_u=x_v}   
p(x_u \vert x_S) \bigl\vert p(x_v,x_S) - \tilde{p}(x_v,x_S)  \bigl\vert + \tilde{p}(x_v,x_S)\bigl\vert p(x_u \vert x_S)- \tilde{p}(x_u \vert x_S) \bigl\vert    
\nonumber\\
& \leq &   \sum_{x_S}  \sum_{x_u=x_v}   
\bigl\vert p(x_v,x_S) - \tilde{p}(x_v,x_S)  \bigl\vert + \left(\left\vert p(x_u, x_S)- \tilde{p}(x_u, x_S) \right\vert+ \left\vert p(x_S)- \tilde{p}(x_S) \right\vert\right) \tilde{p}(x_v\vert x_S)   
\nonumber\\
& \leq &   \sum_{x_S}  \sum_{x_u=x_v}   
\left\vert p(x_v,x_S) - \tilde{p}(x_v,x_S)  \right\vert + \left\vert p(x_u, x_S)- \tilde{p}(x_u,  x_S) \right\vert+ \left\vert p(x_S)- \tilde{p}(x_S) \right\vert,    
\end{eqnarray}
where the first inequality is obtained by using the triangle inequality, the second inequality is the result of the fact that the conditional probability is always bounded by $1$ and Lemma \ref{lemma_appendix_condi_pro_dis_bound}.
Combine with the case $x_u=-x_v$,  we have
\begin{eqnarray}
\left\vert \mathcal T_2-\widetilde{\mathcal T}_2\right\vert 
\leq 3\cdot 2^{n(1-1/p)}\epsilon_p  \label{eq_local_RBM_global_term_2}
\end{eqnarray}

Combine all these together, we have
\begin{eqnarray}
\bigl\vert {\text{Cov}}^{\text{avg}}(u,v \vert S) - \widetilde{{\text{Cov}}}^{\text{avg}}(u,v \vert S) \bigl\vert  \leq 2^{n(1-1/p)}\epsilon_p +3\cdot 2^{n(1-1/p)}\epsilon_p \leq 2^{2+n(1-1/p)}\epsilon_p.
\end{eqnarray}
For a constant $C>0$, set 
$ 2^{2+n(1-1/p)}\epsilon_p \leq \frac{\tau}{C}$ and we prove the lemma.
\end{proof}

\subsection{Discrete influence function difference with FRBM-NNQ states}

Similar to the previous subsection, assume we are given an $n$-qubit quantum state $\ket{\psi}$ in class $\mathcal C_F(\epsilon_p)$ as defined and   $\ket{\phi}_R \in \mathcal C_F$ is an associate FRBM-NNQ state, and the magnitudes in the computational basis are $\tilde{p}(x)$ and ${p}(x)$ respectively. We show that the difference between the discrete influence functions of these two probability distributions (magnitudes) can be bounded by $O(\eta)$ if $\epsilon_p$ is small enough. 

\begin{lemma}\label{lemma_FRBM_global distance}
If the probabilities $\tilde{p}(x)$ and ${p}(x)$ ($x\in \{\pm\}^n$), from the distributions $\widetilde{\mathcal D}$ and ${\mathcal D}$ respectively, satisfy the $L_p$ distance described in Eq.~(\ref{eq_general_error_setting_all_term_bound_main}), and $\mathcal D$ is a distribution from an ferromagnetic RBM. Let $\widetilde{{ I}}_u(S\cup \{j\})$  and ${I}_u(S\cup \{j\})$  as the influence function of distributions $\mathcal D$ and $\widetilde{\mathcal D}$ respectively, and $\eta>0$, $\vert S \vert = s\leq k$. For a constant $C>0$, if the following constraint
\begin{eqnarray}
\epsilon_p \leq \frac{\eta}{ 2^{(n-k-1)(1-1/p)+k+3}C} 
\end{eqnarray}
is satisfied, we have
\begin{eqnarray}
\vert {I}_u(S\cup \{j\})-\widetilde{{I}}_{ u}( S\cup \{ j\})\vert \leq \frac{\eta}{C}.\label{eq_lemma_FRBM_global_bound}
\end{eqnarray}
\end{lemma}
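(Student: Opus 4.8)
The plan is to reduce the discrete influence function to an affine function of a single conditional probability, apply the conditional-probability perturbation bound of Lemma~\ref{lemma_appendix_condi_pro_dis_bound}, control the resulting marginal deviations through Lemma~\ref{lemma_global_distance_subset_bound}, and finally exploit the ferromagnetic structure of $\mathcal D$ to lower bound the probability of the conditioning event. First, by the expansion in Eq.~(\ref{influence_expand}) we have $I_u(S\cup\{j\}) = 2\,p\bigl(X_u=1\mid X_{S\cup\{j\}}=\{1\}^{s+1}\bigr)-1$, and likewise for $\widetilde I_u$ with $\widetilde p$, so
\begin{equation}
\bigl\vert I_u(S\cup\{j\})-\widetilde I_u(S\cup\{j\})\bigr\vert = 2\,\bigl\vert p\bigl(X_u=1\mid X_{S\cup\{j\}}=\{1\}^{s+1}\bigr)-\widetilde p\bigl(X_u=1\mid X_{S\cup\{j\}}=\{1\}^{s+1}\bigr)\bigr\vert .
\end{equation}
Applying Lemma~\ref{lemma_appendix_condi_pro_dis_bound} with $A=C=\{X_u=1\}$ and $B=D=\{X_{S\cup\{j\}}=\{1\}^{s+1}\}$ bounds the right-hand side by $\frac{2}{p(B)}$ times the sum of the joint deviation $\bigl\vert p(X_{S\cup\{j\}\cup\{u\}}=\{1\}^{s+2})-\widetilde p(X_{S\cup\{j\}\cup\{u\}}=\{1\}^{s+2})\bigr\vert$ and the marginal deviation $\bigl\vert p(B)-\widetilde p(B)\bigr\vert$.

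Since $u\notin S\cup\{j\}$, the two deviation terms are differences of $\mathcal D$ and $\widetilde{\mathcal D}$ restricted to fixed subsets of $[n]$ of sizes $s+2$ and $s+1$, so Lemma~\ref{lemma_global_distance_subset_bound} bounds them by $2^{(n-s-2)(1-1/p)}\epsilon_p$ and $2^{(n-s-1)(1-1/p)}\epsilon_p$ respectively. The key remaining ingredient is a lower bound on $p(B)=p(X_{S\cup\{j\}}=\{1\}^{s+1})$. For this I would use that $\mathcal D$ is ferromagnetic: regarding the RBM as a ferromagnetic Ising model on the bipartite visible$\,\cup\,$hidden graph with non-negative external fields, conditioning any subset of visible spins to $+1$ again yields a ferromagnetic RBM (only the hidden fields $g_j$ increase, by $\sum_{i\in S}J_{ij}\ge 0$), so by the GKS/FKG monotonicity each remaining visible spin satisfies $p(X_i=1\mid X_T=\{1\}^{|T|})\ge \frac{1}{2}$ for any $i\notin T$; iterating the chain rule gives $p(B)\ge 2^{-(s+1)}$.

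Combining these estimates with $2^{(n-s-2)(1-1/p)}+2^{(n-s-1)(1-1/p)}\le 2\cdot 2^{(n-s-1)(1-1/p)}$ yields
\begin{equation}
\bigl\vert I_u(S\cup\{j\})-\widetilde I_u(S\cup\{j\})\bigr\vert \le 2^{\,s+3+(n-s-1)(1-1/p)}\,\epsilon_p .
\end{equation}
For $p\ge 1$ the exponent $s+(n-s-1)(1-1/p)$ is nondecreasing in $s$, so over the range $s\le k$ it is maximized at $s=k$, giving the uniform bound $2^{\,(n-k-1)(1-1/p)+k+3}\epsilon_p$; substituting the hypothesized bound on $\epsilon_p$ then gives $\bigl\vert I_u(S\cup\{j\})-\widetilde I_u(S\cup\{j\})\bigr\vert\le \eta/C$, which is Eq.~(\ref{eq_lemma_FRBM_global_bound}).

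I expect the lower bound $p(B)\ge 2^{-(s+1)}$ to be the only step that genuinely uses the ferromagnetic hypothesis, and hence the main obstacle: for a generic RBM the all-ones marginal on a set of size $s+1$ can be exponentially smaller than $2^{-(s+1)}$, which is precisely what would break the matching between the $2^{k}$ factor in the claimed threshold for $\epsilon_p$ and the $1/p(B)$ loss coming from Lemma~\ref{lemma_appendix_condi_pro_dis_bound}. The rest — the affine reduction of $I_u$ and the propagation of the H\"older-type losses already packaged into Lemmas~\ref{lemma_global_distance_subset_bound} and~\ref{lemma_appendix_condi_pro_dis_bound} — is routine bookkeeping.
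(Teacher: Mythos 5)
Your proposal is correct and follows essentially the same route as the paper's proof: reduce the influence to a single conditional probability, apply Lemma~\ref{lemma_appendix_condi_pro_dis_bound} and Lemma~\ref{lemma_global_distance_subset_bound}, and use ferromagnetism to get $p(X_{S\cup\{j\}}=\{1\}^{s+1})\ge 2^{-(s+1)}$ (the paper asserts this via ``all-ones is the most probable of the $2^{s+1}$ configurations,'' your FKG/chain-rule derivation is an equivalent justification). Your explicit observation that the combined exponent $s+(n-s-1)(1-1/p)$ is increasing in $s$ and hence maximized at $s=k$ is in fact a cleaner handling of the final step than the paper's displayed chain in Eq.~(\ref{eq_bound_result_sum_ferromagnetic}), whose last inequality is written in a confusing direction even though the conclusion is the same.
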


\begin{proof}
By the definition of the influence in Eq.~(\ref{eq_definition_influence}), we have  
\begin{eqnarray}
\left\vert {I}_u(S\cup \{j\})-\tilde{{I}}_{ u}( S\cup \{ j\})\right\vert 
&=& \left\vert  {\mathbb{E}}[X_u| X_{S\cup \{j\} }=\{1\}^{s +1}]- \widetilde{{\mathbb{E}}}[ X_{ u}| X_{ S\cup \{ j\}}=\{1\}^{s+1}] \right\vert \nonumber\\
&=& 2\left\vert p(X_u=1| X_{S\cup \{ j\}}=\{1\}^{s+1})- \tilde{p}(X_{u}=1| X_{ S\cup \{ j\} }=\{1\}^{s+1}) \right\vert, \label{eq_influence_difference}
\end{eqnarray}
where $s=\vert S \vert$ is the number of nodes contained in set $S$.   
For the sake of convenience, 
let 
\begin{eqnarray}
a & := & p(X_u=1, X_{S\cup \{ j\}}=\{1\}^{s+1})\nonumber\\
b & := & p( X_{S\cup \{ j\}}=\{1\}^{s+1})\nonumber\\
\tilde{a} & := & \tilde{p}(X_{u}=1, X_{ S\cup \{ j\} }=\{1\}^{s+1}) \nonumber\\
\tilde{b} & := & \tilde{p}( X_{ S\cup \{ j\} }=\{1\}^{s+1}). \label{equ_FRBM_abcd}
\end{eqnarray}
By Eq.~(\ref{eq_influence_difference}) we have 
\begin{eqnarray}
\vert {I}_u(S\cup \{j\})-\tilde{{I}}_{ u}( S\cup \{ j\})\vert 
= 2\left\vert \frac{a}{b}-\frac{ \tilde{a}}{ \tilde{b}} \right\vert  
\leq  \frac{2}{b} \left(\left\vert a- \tilde{a}\right\vert  +  \left\vert {b}-{ \tilde{b}}\right\vert \right)    
\end{eqnarray}
where the first inequality is obtained by using Lemma \ref{lemma_appendix_condi_pro_dis_bound}.
By Lemma \ref{lemma_global_distance_subset_bound}, the distance $\vert {I}_u(S\cup \{j\})-\tilde{{I}}_{ u}( S\cup \{ j\})\vert $ can be upper bounded  by

\begin{eqnarray}
\frac{2}{b} \left(\left\vert a- \tilde{a}\right\vert  +  \left\vert {b}-{ \tilde{b}}\right\vert \right)  
\leq \frac{2(2^{n-s-2})^{(1-1/p)}\epsilon_p +2(2^{n-s-1})^{(1-1/p)}\epsilon_p }{2^{-(s+1)}}   
\leq {(2^{n-s-1})^{(1-1/p)}2^{k+3}}\epsilon_p
\end{eqnarray}
where we use $b\geq 2^{-(s+1)} $ (because in a ferromagnetic model, $x_S=\{1\}^s$ is the most possible configuration to observe for $X_S$), the inequality of the inequality is obtained from the facts that $\vert a - \tilde{a} \vert < (2^{n-s-2})^{(1-1/p)}\epsilon_p, \vert b - \tilde{b} \vert < (2^{n-s-1})^{(1-1/p)}\epsilon_p, b\geq 2^{-(s+1)}.$  Then, for a constant $C\geq 0$, if it satisfies that 
\begin{eqnarray}
{(2^{n-s-1})^{(1-1/p)}2^{k+3}}\epsilon_p \leq \frac{\eta}{C}, 
~ \mbox{i.e.},~  \epsilon_p \leq \frac{\eta}{C (2^{n-s-1})^{(1-1/p)}2^{k+3}} 
\leq \frac{\eta}{C 2^{(n-k-1)(1-1/p)+k+3}}, \label{eq_bound_result_sum_ferromagnetic}
\end{eqnarray}
the Eq.~(\ref{eq_lemma_FRBM_global_bound}) is then satisfied. 
\end{proof}

\section{Bounds for covariance and influence difference under coherent bit-flip distance}\label{bitflip_appendix}

In this section, we consider the case that two quantum states are close under the coherent bit-flip distance. We discuss the difference between the influence functions of the magnitudes 
(average conditional covariances) of a quantum state and an FRBM-NNQ state (LRBM-NNQ state) under this distance. 

Let $\ket{\phi}_R$ be the same as Eq.~(\ref{equ_state_phi_appendix}).
We consider a quantum state $\ket{\Psi}$
\be
\ket{\Psi}=  \textstyle\sum_{x} \sqrt{p(x)}  ~{\otimes_{i=1}^n}\left( \sqrt{\bar{\varrho}}\ket{x_i} + \sqrt{\varrho}\ket{- x_i}\right), \label{eq_bitflip_distribution_error_state}
\ee
where $x \in \{\pm 1\}^n$ and $\bar{\varrho}= 1-\varrho$.
The difference between $\ket{\phi}_R$ and $\ket{\Psi}$ can be understood as a coherent bit-flip error, studied in quantum error correction, or a quantum generalization of random classification noise in quantum PAC learning \cite{10.5555/3291125.3309633} and Huber contamination \cite{NEURIPS2020_bca382c8}.
Assuming that $\ket{\phi}_R$ can be efficiently represented by an RBM, similar to the $L_p$ distance in the previous section, we define two classes of quantum states in the following. 

\begin{defn}[Equivalence class with coherent bit-flip]\label{definition_quantum_classes_bit_flip_appendix}
Let $H_n$ be the Hilbert space of $n$-qubits. 
Let $\varrho \geq 0$. Define two classes as follows: 
\begin{itemize}
\item $\mathcal  C(\varrho)\subset H_n$  such that  for each $\ket{\Psi} \in \mathcal C(\varrho)$ there exists a $\ket{\phi}_R \in \mathcal C$ with one qubit coherent bit flip probability bounded by $\varrho$.  
\item $\mathcal  C_F(\varrho)\subset H_n$  such that  for each $\ket{\Psi} \in \mathcal C_F(\varrho)$ there exists a $\ket{\phi}_R \in \mathcal C_F$ with one qubit coherent bit flip probability bounded by $\varrho$.  
\end{itemize}
We call $\ket{\phi}_R$ the associated NNQ state of state $\ket{\Psi}.$
\end{defn}

If we measure the quantum state  $\ket{\Psi}$ in the computational basis, we obtain $\ket{x_i}$ in probability $1-\varrho$ and $\ket{- x_i}$ with probability $\varrho$.  Let $Y_i$ be the new variable, such that $Y_i=-X_i$ with probability $\varrho$ and  $Y_i=X_i$ with probability $\bar{\varrho}=1-\varrho$. 
We have the following lemma which will be used later.

\begin{lemma}\label{lemma_flip_error_bino_bound_sum}
For a subset $I\subseteq [n]$, let 
$\chi_I :=\sum_{k_I\in K_I}\left(p(Y_I=k_I)- \bar{\varrho}^{\vert I \vert} p(X_I=k_I)\right),$  $k_I\in \{\pm 1\}^{\vert I \vert}$ and $K_I \subseteq \{\pm 1\}^{\vert I \vert},$
we have $\chi_I  \leq  1-\bar{\varrho}^{\vert I \vert}$ and 
\begin{eqnarray}
\left\vert \textstyle\sum_{k_I\in K_I}\left(p[Y_I=k_I] -p(X_I=k_I)\right) \right\vert  \leq   2(1-\bar{\varrho}^{\vert I \vert}).
\end{eqnarray}
\end{lemma}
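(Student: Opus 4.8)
The plan is to analyze each bit-flip independently. First I would observe that the coherent bit-flip acts coordinate-wise: when we measure $\ket{\Psi}$ in the computational basis, each coordinate $Y_i$ equals $X_i$ with probability $\bar\varrho$ and $-X_i$ with probability $\varrho$, independently across $i\in I$, conditioned on the values of $X_I$. Hence for a fixed target configuration $k_I\in\{\pm1\}^{\vert I\vert}$, writing $w(k_I,x_I)$ for the number of coordinates where $k_I$ and $x_I$ disagree, we have $p(Y_I=k_I)=\sum_{x_I}p(X_I=x_I)\,\varrho^{w(k_I,x_I)}\bar\varrho^{\vert I\vert - w(k_I,x_I)}$. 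The term $x_I=k_I$ contributes exactly $\bar\varrho^{\vert I\vert}p(X_I=k_I)$, so $p(Y_I=k_I)-\bar\varrho^{\vert I\vert}p(X_I=k_I)$ is a sum of nonnegative terms, one for each $x_I\neq k_I$.

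Next, for the first inequality I would sum over $k_I\in K_I\subseteq\{\pm1\}^{\vert I\vert}$. Since all the leftover terms are nonnegative, extending the sum from $K_I$ to all of $\{\pm1\}^{\vert I\vert}$ only increases it; but $\sum_{k_I}p(Y_I=k_I)=1$ and $\sum_{k_I}\bar\varrho^{\vert I\vert}p(X_I=k_I)=\bar\varrho^{\vert I\vert}$, so
\begin{eqnarray}
\chi_I \;\le\; \sum_{k_I\in\{\pm1\}^{\vert I\vert}}\Bigl(p(Y_I=k_I)-\bar\varrho^{\vert I\vert}p(X_I=k_I)\Bigr) \;=\; 1-\bar\varrho^{\vert I\vert},\nonumber
\end{eqnarray}
and $\chi_I\ge 0$ by the nonnegativity just noted. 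For the second claim, I would split $\sum_{k_I\in K_I}(p(Y_I=k_I)-p(X_I=k_I))$ by inserting $\pm\bar\varrho^{\vert I\vert}p(X_I=k_I)$: the first piece is $\chi_I$, bounded in $[0,1-\bar\varrho^{\vert I\vert}]$, and the second piece is $(\bar\varrho^{\vert I\vert}-1)\sum_{k_I\in K_I}p(X_I=k_I)$, whose absolute value is at most $1-\bar\varrho^{\vert I\vert}$ since $\sum_{k_I\in K_I}p(X_I=k_I)\in[0,1]$. The triangle inequality then gives the bound $2(1-\bar\varrho^{\vert I\vert})$.

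I do not expect a serious obstacle here; the lemma is essentially a bookkeeping statement. The one point requiring a little care is justifying that the coordinates of the bit-flip channel act independently on the measurement outcomes — this follows directly from the tensor-product form $\otimes_{i=1}^n(\sqrt{\bar\varrho}\ket{x_i}+\sqrt\varrho\ket{-x_i})$ in Eq.~(\ref{eq_bitflip_distribution_error_state}), since measuring a product state in the computational basis yields independent outcomes per qubit conditioned on $x$. Everything else is manipulation of nonnegative quantities and the normalization constraints, so the proof should be short.
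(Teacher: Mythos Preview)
Your proposal is correct. The second inequality is handled exactly as in the paper (insert $\pm\bar\varrho^{\vert I\vert}p(X_I=k_I)$ and apply the triangle inequality), but your argument for $\chi_I\le 1-\bar\varrho^{\vert I\vert}$ is genuinely different and cleaner. The paper expands $\chi_I$ explicitly by the number of flipped coordinates, bounds each inner sum $\sum_{k_I\in K_I}p(\cdot)\le 1$, and then collapses the resulting $\sum_{t=1}^{\vert I\vert}\binom{\vert I\vert}{t}\varrho^t\bar\varrho^{\vert I\vert-t}$ via the binomial theorem. You instead observe that each summand $p(Y_I=k_I)-\bar\varrho^{\vert I\vert}p(X_I=k_I)$ is nonnegative, enlarge $K_I$ to all of $\{\pm1\}^{\vert I\vert}$, and finish by normalization of the two marginals. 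Your route avoids the combinatorial bookkeeping entirely and makes the nonnegativity $\chi_I\ge 0$ immediate; the paper's route, on the other hand, exhibits the precise structure of $\chi_I$ as a weighted sum over flip patterns, which is what the paper later reuses (e.g.\ in Lemma~\ref{lemma_FRBM_bitflip_error}) when it writes $c=\bar\varrho^{s+2}a+\chi_{S\cup j\cup u}$.
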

\begin{proof}
For a vertices subset $I\subseteq [n]$, the  probability of $Y_I$  can be represented by the probability distribution of $X_I$ as follows
\begin{eqnarray}
\sum_{k_I\in K_I} p(Y_I=k_I)= \bar{\varrho}^{\vert I \vert} \sum_{k_I\in K_I}  p(X_I=k_I)+\chi_I  
\end{eqnarray}
We have 
\begin{eqnarray}
\chi_I & = & {\varrho} \bar{\varrho}^{\vert I \vert -1}  \sum_{i\in I} \sum_{k_I\in K_I}  p(X_i=-k_i,X_{I\setminus \{i\}}=k_{I\setminus i})
\nonumber\\
& &+  {\varrho}^2\bar{\varrho}^{ \vert I\vert -2}  \sum_{i,j\in I } \sum_{k_I\in K_I} p\left(X_{\{i,j\}}=\{-k_i,-k_j\},X_{I \setminus \{i,j\}}=k_{I\setminus \{i,j\}}\right) \nonumber\\
& &
+\cdots 
+  {\varrho}^{\vert I\vert }\sum_{k_I\in K_I} p(X_I=-k_I) 
\nonumber\\
&\leq & \sum_{t=1}^{\vert I \vert}  \binom{\vert I\vert}{t} {\varrho}^t \bar{\varrho}^{\vert I\vert-t} 
\nonumber\\
&=& ({\varrho}+(1-{\varrho}))^{\vert I\vert}-\bar{\varrho}^{\vert I\vert} \leq 1-\bar{\varrho}^{\vert I\vert}.
\end{eqnarray}
where the last inequality we use the fact that each $\sum_{k_I\in K_I} p(\cdot)\leq 1$.Then we can obtain
\begin{eqnarray}
\left\vert \sum_{k_I\in K_I}\left(p(Y_I=k_I) -p(X_I=k_I)\right) \right\vert  \leq   (1-\bar{\varrho}^{\vert I \vert})\sum_{k_I\in K_I}p(X_I=k_I) + \vert \chi_I \vert   \leq 2(1-\bar{\varrho}^{\vert I \vert}),
\end{eqnarray}
where the first inequality is obtained by using triangle inequality. 
\end{proof}

\subsection{Conditional covariance distance with LRBM-NNQ states}

Now we turn to the locally consistent RBM. Suppose $\ket{\phi}_R$ is an LRBM-NNQ state. Given an unknown quantum state which is a result of a coherent bit-flip with a probability of $\varrho$ for each qubit of state $\ket{\phi}_R,$ we can bound the difference between the
average conditional covariances of these two states by $O(\tau)$ if $\varrho$ is small enough.

\begin{lemma}\label{lemma_bitflip_LCRBM}

If the probabilities $\tilde{p}(x)$ $(x\in \{\pm\}^n)$ and $\tilde{p}(y)$ from the distributions ${\mathcal D}$  and ${\widetilde{\mathcal D}}$ respectively, ${\mathcal D}$ is from of a locally consistent RBM. For each $i\in[n]$, we have $y_i=-x_i$ with probability $\varrho$ and  $y_i=x_i$ with probability $\bar{\varrho}=1-\varrho$. 
Let  ${\text{Cov}}^{\text{avg}}(u,v \vert S)$  and $\widetilde{{\text{Cov}}}^{\text{avg}}(u,v \vert S)$  as average conditional covariance of distributions $\mathcal D$ and $\widetilde{\mathcal D}$ respectively, and  $\gamma, \tau > 0,$ $\vert S \vert \leq \gamma$. 
 For a constant $C>0$, if it satisfies
\begin{eqnarray}
{\varrho}\leq \frac{\tau}{16C(\gamma +1)},
\end{eqnarray}
  we have $$
\vert {\text{Cov}}^{\text{avg}}(u,v \vert S) - \widetilde{{\text{Cov}}}^{\text{avg}}(u,v \vert S) \vert \leq \frac{\tau}{C}.$$
\end{lemma}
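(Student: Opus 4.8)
The plan is to mirror the structure of the $L_p$-distance proof in Lemma~\ref{lemma_local_rbm_global_distance}, replacing each appeal to Lemma~\ref{lemma_global_distance_subset_bound} with the coherent-bit-flip bound from Lemma~\ref{lemma_flip_error_bino_bound_sum}. First I would expand the average conditional covariance exactly as in the proof of Lemma~\ref{lemma_local_rbm_global_distance}, writing ${\text{Cov}}^{\text{avg}}(u,v \vert S) = \mathcal T_1 + \mathcal T_2$ with $\mathcal T_1 = \sum_{x_u,x_v} x_u x_v\, p(x_u,x_v)$ and $\mathcal T_2 = -\sum_{x_u,x_v,x_S} x_u x_v\, p(x_u\vert x_S)\, p(x_v,x_S)$, and likewise for the flipped distribution $\widetilde{\mathcal D}$ in the variables $y$. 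Then $\vert {\text{Cov}}^{\text{avg}} - \widetilde{{\text{Cov}}}^{\text{avg}} \vert \le \vert \mathcal T_1 - \widetilde{\mathcal T}_1 \vert + \vert \mathcal T_2 - \widetilde{\mathcal T}_2 \vert$ by the triangle inequality, so it suffices to bound each term.

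For $\vert \mathcal T_1 - \widetilde{\mathcal T}_1 \vert$, I split the sum over $x_u = x_v$ and $x_u = -x_v$ as before; each group of configurations is a set $K_I$ with $I = \{u,v\}$, so Lemma~\ref{lemma_flip_error_bino_bound_sum} gives $\vert \sum_{k_I \in K_I}(p(Y_I = k_I) - p(X_I = k_I))\vert \le 2(1 - \bar\varrho^2)$ for each group, hence $\vert \mathcal T_1 - \widetilde{\mathcal T}_1 \vert \le 4(1 - \bar\varrho^2)$. For $\vert \mathcal T_2 - \widetilde{\mathcal T}_2 \vert$, I use the same telescoping decomposition as in Lemma~\ref{lemma_local_rbm_global_distance}: insert $\pm p(x_u\vert x_S)\tilde p(x_v, x_S)$, apply the triangle inequality, use that conditional probabilities are at most $1$, and invoke Lemma~\ref{lemma_appendix_condi_pro_dis_bound} to reduce the conditional-probability difference $\vert p(x_u\vert x_S) - \tilde p(x_u \vert x_S)\vert$ to the marginal differences $\vert p(x_u,x_S) - \tilde p(x_u,x_S)\vert$ and $\vert p(x_S) - \tilde p(x_S)\vert$. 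Each resulting marginal-difference sum is again of the form handled by Lemma~\ref{lemma_flip_error_bino_bound_sum} with $I \subseteq S \cup \{u,v\}$ of size at most $\vert S\vert + 1 \le \gamma + 1$, so each is bounded by $2(1 - \bar\varrho^{\,\gamma+1})$; collecting the three terms and the two cases $x_u = \pm x_v$ yields $\vert \mathcal T_2 - \widetilde{\mathcal T}_2 \vert \le 12(1 - \bar\varrho^{\,\gamma+1})$ or a similar small constant multiple.

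Combining, $\vert {\text{Cov}}^{\text{avg}} - \widetilde{{\text{Cov}}}^{\text{avg}} \vert \le 16(1 - \bar\varrho^{\,\gamma+1})$, and then I use the elementary estimate $1 - (1-\varrho)^{\gamma+1} \le (\gamma+1)\varrho$ (Bernoulli's inequality) to get the bound $16(\gamma+1)\varrho$. Setting $16(\gamma+1)\varrho \le \tau/C$, i.e.\ $\varrho \le \tau/(16C(\gamma+1))$, gives the claim. The main obstacle I anticipate is bookkeeping: tracking which subset $I$ appears at each step, confirming that $\vert I\vert \le \gamma+1$ in every term (so that a single $\bar\varrho^{\,\gamma+1}$ factor dominates), and making sure the constant $16$ actually absorbs the $4$ from $\mathcal T_1$ plus the three $\mathcal T_2$ contributions across both parity cases — the inequalities are routine but the combinatorial accounting and the monotonicity $\bar\varrho^{\,j} \ge \bar\varrho^{\,\gamma+1}$ for $j \le \gamma+1$ must be applied carefully, exactly as the analogous $2^{(n-|I|)(1-1/p)}$ factors were handled in Lemma~\ref{lemma_local_rbm_global_distance}.
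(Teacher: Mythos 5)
Your proposal follows essentially the same route as the paper's proof: the same $\mathcal T_1+\mathcal T_2$ decomposition, the same split over the parity cases $x_u=\pm x_v$, the substitution of Lemma~\ref{lemma_flip_error_bino_bound_sum} for Lemma~\ref{lemma_global_distance_subset_bound} in each marginal-difference bound, the same use of Lemma~\ref{lemma_appendix_condi_pro_dis_bound} for the conditional-probability term, and the same final chain $4(1-\bar\varrho^2)+12(1-\bar\varrho^{s+1})\leq 16(1-\bar\varrho^{\gamma+1})\leq 16(\gamma+1)\varrho$. The only cosmetic difference is that the paper handles $\mathcal T_1$ by rewriting the two parity sums as $2\bigl\vert\sum_{x_u=x_v}(p-\tilde p)\bigr\vert$ before invoking the flip bound, whereas you bound each parity group separately; both give $4(1-\bar\varrho^2)$.
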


\begin{proof}
Recall that the average conditional covariance in  Eq.~(\ref{equ2}) can be expanded  as follows
\begin{eqnarray}
{\text{Cov}}^{\text{avg}}(u,v \vert S) 
= \sum_{x_u,x_v}x_ux_v {p}(x_u,x_v)
-\sum_{x_u,x_v,x_S}x_ux_v{p}(x_u\vert x_S){p}(x_v, x_S).
~~\label{eq_cov_flip}
\end{eqnarray}
We first consider the difference between the first term corresponding to state $\ket{\phi}_R$ and $\ket{\Psi}$. Let $x_u,x_v\in \{\pm 1 \}$, and ${p}(Y_u=x_u,Y_v=x_v)=\tilde{p}(X_u=x_u, X_v=x_v)= \bar{\varrho}^2 p(X_u=x_u,X_v=x_v)+\chi_{u,v}$,  we have
\begin{eqnarray}
\left\vert \sum_{x_u,x_v}x_u x_v {p}(x_u,x_v)- \sum_{x_u,x_v}x_u x_v \tilde{p}(x_u,x_v) \right\vert  
&=& \left\vert \sum_{x_u=x_v} \left( {p}(x_u,x_v)- \tilde{p}(x_u,x_v)\right) 
\right.
-\left.
\sum_{x_u=-x_v} \left( {p}(x_u,x_v)-\tilde{p}(x_u,x_v)\right)
\right\vert 
\nonumber\\
&= & 2 \left\vert \sum_{x_u=x_v} \left( {p}(x_u,x_v)-\tilde{p}(x_u,x_v)\right) \right\vert  \nonumber\\ 
&\leq & 2 
\sum_{x_u=x_v} (1-\bar{\varrho}^2) {p}(x_u, x_v) +2 \chi_{u,v} \leq 4(1-\bar{\varrho}^2),\label{eq_flip_cov_first_term_bound}
\end{eqnarray}
where the term after the second equal sign is obtained by the fact that $x_u^2=1$, $\sum_{x_u=-x_v}  {p}(x_u, x_v) = 1-\sum_{x_u=x_v}{p}(x_u, x_v) $, and $\sum_{x_u=-x_v}  \tilde{p}(x_u, x_v) = 1-\sum_{x_u=x_v}\tilde{p}(x_u, x_v),$ the last inequality is obtained by using Lemma $\ref{lemma_flip_error_bino_bound_sum}.$

Now turn to the second term of Eq.~(\ref{eq_cov_flip}). For each configuration $x_S\in\{\pm 1\}^{\vert S \vert }$, we first consider the case that $x_u=x_v$.  Let $\tilde{p}(x_u) = p(Y_u=x_u)$, then we can have the same result in Eq.~(\ref{eq_local_RBM_global_term_2})
\begin{eqnarray}
\left\vert \mathcal T_2-\widetilde{\mathcal T}_2\right\vert 
& \leq &   \sum_{x_S}  \sum_{x_u=x_v}   
\left\vert p(x_v,x_S) - \tilde{p}(x_v,x_S)  \right\vert + \left\vert p(x_u, x_S)- \tilde{p}(x_u,  x_S) \right\vert+ \left\vert p(x_S)- \tilde{p}(x_S) \right\vert,\nonumber\\
& \leq & 2(1-\bar{\varrho}^{s+1}) + 2(1-\bar{\varrho}^{s})+2(1-\bar{\varrho}^{s+1}) \leq 6(1-\bar{\varrho}^{s+1})
\end{eqnarray}
where the second inequality is according to Lemma \ref{lemma_flip_error_bino_bound_sum}. 
We can obtain the same bound when $x_u=-x_v$. Combine with the bound in Eq.~(\ref{eq_flip_cov_first_term_bound}) of the first term, it yields that
\begin{eqnarray}
\left\vert {\text{Cov}}^{\text{avg}}(u,v \vert S)-{\widetilde{\text{Cov}}}^{\text{avg}}(u,v \vert S) \right\vert 
&\leq &4(1-\bar{\varrho}^2) + 12(1-\bar{\varrho}^{s+1})
\nonumber\\
&\leq & 16(1-\bar{\varrho}^{s+1})\leq 16(1-\bar{\varrho}^{\gamma +1})\leq 16(\gamma +1)\varrho,
\end{eqnarray}
where the second last inequality uses the fact that $s\leq \gamma$ and the last inequality is obtained by using the union bound.
Let $ 16(\gamma +1)\varrho \leq \frac{\tau}{C},$
we have
\begin{eqnarray}
{\varrho}\leq \frac{\tau}{16C(\gamma +1)}.
\end{eqnarray}

\end{proof}

\subsection{Discrete influence distance with FRBM-NNQ states}
We now discuss the difference of the discrete influence function between an FRBM-NNQ state $\ket{\phi}_R$ and a $\ket{\psi}\in \mathcal{C}(\varrho)$, where the probability distribution (magnitude) denoted as $\mathcal D$ and $\widetilde{\mathcal D}$ respectively. 
Denote $\widetilde{{ I}}_u(S\cup \{j\})$ as the influence function of distribution $\widetilde{\mathcal D}$ and ${I}_u(S\cup \{j\})$ as the influence of istribution ${\mathcal D}$, where the bit flip probability for each qubit is $\varrho$.  We show that the difference of  $\widetilde{{ I}}_u(S\cup \{j\})$  and ${ I}_u(S\cup \{j\})$ can be bounded by $\Ord{\eta}$ if $\varrho$ is small enough. 

\begin{lemma}\label{lemma_FRBM_bitflip_error}
If the probabilities $\tilde{p}(x)$ $(x\in \{\pm\}^n)$ and $\tilde{p}(y)$ from the distributions ${\mathcal D}$  and ${\widetilde{\mathcal D}}$ respectively, ${\mathcal D}$  is from an ferromagnetic RBM. And for each $i\in[n]$, we have $y_i=-x_i$ with probability $\varrho$ and  $y_i=x_i$ with probability $\bar{\varrho}=1-\varrho$. 
Let $\widetilde{{ I}}_u(S\cup \{j\})$  and ${I}_u(S\cup \{j\})$  as the influence function of distributions $\mathcal D$ and $\widetilde{\mathcal D}$ respectively, and $\eta > 0, \vert S \vert \leq k.$ For a constant $C>0$, if 
\begin{eqnarray}
\varrho \leq 
\frac{\eta}{(4 + 2^{k+3})(k+2) C},
\label{eq_probability_bound_FRBM}
\end{eqnarray}
 we have
\begin{eqnarray}
\vert {I}_u(S\cup \{j\})-\widetilde{{I}}_{ u}( S\cup \{ j\})\vert \leq \frac{\eta}{C}.\label{eq_lemma_FRBM_bitflip_bound}
\end{eqnarray}
\end{lemma}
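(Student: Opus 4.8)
\textbf{Proof proposal for Lemma~\ref{lemma_FRBM_bitflip_error}.}
The plan is to follow exactly the template of Lemma~\ref{lemma_FRBM_global distance}, but replacing the $L_p$-distance bound from Lemma~\ref{lemma_global_distance_subset_bound} with the coherent-bit-flip bound from Lemma~\ref{lemma_flip_error_bino_bound_sum}. First I would start from the expression for the influence difference already derived in Eq.~(\ref{eq_influence_difference}), namely
\begin{eqnarray}
\left\vert {I}_u(S\cup \{j\})-\widetilde{{I}}_{u}(S\cup \{j\})\right\vert
= 2\left\vert p(X_u=1\vert X_{S\cup\{j\}}=\{1\}^{s+1}) - \tilde p(X_u=1\vert X_{S\cup\{j\}}=\{1\}^{s+1})\right\vert, \nonumber
\end{eqnarray}
with $s=\vert S\vert \le k$. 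Reusing the abbreviations $a,b,\tilde a,\tilde b$ of Eq.~(\ref{equ_FRBM_abcd}) and Lemma~\ref{lemma_appendix_condi_pro_dis_bound}, this is at most $\frac{2}{b}\left(\vert a-\tilde a\vert + \vert b-\tilde b\vert\right)$.

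Next I would bound the numerator using Lemma~\ref{lemma_flip_error_bino_bound_sum}: taking $I=S\cup\{u,j\}$ (so $\vert I\vert=s+2$) and the singleton set $K_I$ fixing all these bits to $1$ gives $\vert a-\tilde a\vert \le 2(1-\bar\varrho^{\,s+2})$, and taking $I=S\cup\{j\}$ (so $\vert I\vert=s+1$) gives $\vert b-\tilde b\vert \le 2(1-\bar\varrho^{\,s+1})$. For the denominator I would reuse the ferromagnetic observation from the proof of Lemma~\ref{lemma_FRBM_global distance} that $x_{S\cup\{j\}}=\{1\}^{s+1}$ is the most probable configuration, hence $b\ge 2^{-(s+1)}$. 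Combining, the influence difference is bounded by $2^{s+2}\left(2(1-\bar\varrho^{\,s+2})+2(1-\bar\varrho^{\,s+1})\right)\le 2^{s+3}\cdot 2(1-\bar\varrho^{\,s+2})$; then applying the union-bound estimate $1-\bar\varrho^{\,m}\le m\varrho$ and $s\le k$ yields a bound of the form $(4+2^{k+3})(k+2)\varrho$ up to the constants appearing in the statement. (A small amount of care is needed to match the exact constants $4+2^{k+3}$ and $k+2$ in Eq.~(\ref{eq_probability_bound_FRBM}); I would organize the two numerator terms so that one contributes the $4$ and the other the $2^{k+3}$ after multiplying by $2^{s+1}\le 2^{k+1}$.) Setting this $\le \eta/C$ and solving for $\varrho$ gives the claimed condition.

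The main obstacle I anticipate is purely bookkeeping: getting the exponents in $1-\bar\varrho^{\,m}$ and the powers of $2$ to line up so that the final constant is literally $(4+2^{k+3})(k+2)$ rather than something slightly larger. In particular one must be careful that Lemma~\ref{lemma_flip_error_bino_bound_sum} is applied to the \emph{joint} events (including the conditioning variable $u$), that $b\ge 2^{-(s+1)}$ is the correct normalization exponent, and that the largest exponent appearing, $s+2\le k+2$, is what produces the $(k+2)$ factor. No genuinely new idea beyond Lemmas~\ref{lemma_appendix_condi_pro_dis_bound} and \ref{lemma_flip_error_bino_bound_sum} should be required.
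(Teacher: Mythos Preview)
Your proposal is correct and follows the same route as the paper. The one bookkeeping point you flag is real: applying the composite bound $|a-\tilde a|\le 2(1-\bar\varrho^{\,s+2})$ from Lemma~\ref{lemma_flip_error_bino_bound_sum} and then dividing by $b\ge 2^{-(s+1)}$ yields a constant $2^{k+4}$, not $4+2^{k+3}$. To hit the stated constant, use the finer decomposition $\tilde a=\bar\varrho^{\,s+2}a+\chi_{S\cup j\cup u}$ (and similarly for $\tilde b$): the piece $|a-\bar\varrho^{\,s+2}a|/b = (1-\bar\varrho^{\,s+2})\,a/b \le 1-\bar\varrho^{\,s+2}$ does \emph{not} pick up the $2^{s+1}$ factor (this is where the ``$4$'' comes from), and only the remainder $\chi/b$ gets amplified by $2^{s+1}$ (this gives the $2^{k+3}$). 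This is exactly what the paper does; otherwise your outline matches it step for step.
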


\begin{proof}
For simplicity,  let 
$a, b$ as in Eq.~(\ref{equ_FRBM_abcd}), and 
\begin{eqnarray}
c & := & {p}\left(Y_{u}=1, Y_{ S\cup \{ j\} }=\{1\}^{s+1}\right) \nonumber\\
d & := & {p}\left( Y_{ S\cup \{ j\} }=\{1\}^{s+1}\right) 
\end{eqnarray}
Then according to Lemma  \ref{lemma_flip_error_bino_bound_sum} with just one term in the sum where $K_I = \{1\}^{\vert I \vert }$,  we have 
$c =\bar{\varrho}^{s+2}a+\chi_{S\cup j\cup u},$
$d  = \bar{\varrho}^{s+1}b+\chi_{S\cup j}$
and the bound
\begin{eqnarray}
\chi_{S\cup j\cup u} &\leq & 1-\bar{\varrho}^{s+2},\nonumber\\
\chi_{S\cup j} &\leq & 1-\bar{\varrho}^{s+1}.\label{appendix_eq_flip_bound_FRBM} 
\end{eqnarray}
We now  expand the influence of state $\ket{\phi}_R$ and   $\ket{\psi}$
\begin{eqnarray}
\left\vert {I}_u(S\cup \{j\})-\widetilde{I}_{ u}( S\cup \{ j\})\right\vert 
& =& 2\left\vert {p}\left(X_u=1| X_{S\cup \{ j\}}=\{1\}^{s+1}\right)- {p}\left(Y_{u}=1| Y_{ S\cup \{ j\} }=\{1\}^{s+1}\right) \right\vert. \nonumber\\
& \leq & 2\left\vert \frac{a-c}{b}\right\vert  + 2\left\vert \frac{b-d}{b}\right\vert  \nonumber\\
& \leq &  2 \left\vert \frac{a-\bar{\varrho}^{s+2}a -\chi_{S\cup j \cup u}}{b}  \right\vert  + 2\left\vert \frac{b-\bar{\varrho}^{s+1}b-\chi_{S\cup j}}{b}\right\vert  \nonumber\\
& \leq &  2(2-\bar{\varrho}^{s+2}-\bar{\varrho}^{s+1}) + \left\vert \frac{\chi_{S\cup j \cup u}}{b}  \right\vert  + 2\left\vert \frac{\chi_{S\cup j}}{b}\right\vert  \nonumber\\
& \leq &  2(2-\bar{\varrho}^{s+2}-\bar{\varrho}^{s+1}) + 2(2-\bar{\varrho}^{s+2}-\bar{\varrho}^{s+1}) /{2^{-s-1}} \nonumber\\
& \leq &  2(2-2\bar{\varrho}^{s+2}) + 2^{s+2}(2-2\bar{\varrho}^{s+2})\nonumber\\
& \leq &  4(1-\bar{\varrho}^{k+2}) + 2^{k+3}(1-\bar{\varrho}^{k+2})\nonumber\\
&= & (4 + 2^{k+3})(1-\bar{\varrho}^{k+2}) \leq (4 + 2^{k+3})(k+2){\varrho} \nonumber
\end{eqnarray}
where the first inequality is obtained by using Lemma \ref{lemma_appendix_condi_pro_dis_bound},  the third and forth inequalities we  use the fact that $b\geq 2^{-s-1}$and  Eq.~(\ref{appendix_eq_flip_bound_FRBM}),  the second last inequality we use the fact that $s\leq k$, and the last inequality is obtained by union bound $(1-{\varrho})^{k+2}\geq 1-(k+2)\varrho.$  Let $(4 + 2^{k+3})(k+2){\varrho} \leq \frac{\eta}{C},$  we obtain Eq.~(\ref{eq_probability_bound_FRBM}). 
\end{proof}

\section{Robust structure learning}

In the previous two sections, we have shown that if the $L_p$ distance or the coherent bit-flip probability between two distributions $\mathcal D$ and $\widetilde{\mathcal D}$ is small enough, 
we can bound the covariance distance and the influence function distance between these two distributions when  $\mathcal D$ is from locally consistent and ferromagnetic RBMs respectively.
In this section, we prove the robust structure learning, i.e., we can learn the underlying structure of the RBM representation of state $\ket{\phi}_R$ by giving many copies of $\ket{\psi}\in \mathcal{C}(\epsilon_p)$ or $\ket{\Phi}\in\mathcal{C}(\varrho)$.

\subsection{Robust two-hop neighbors estimation of LRBMs}

Here we consider the case for locally-consistent RBM first. 
Assume we are given many copies of an unknown state $\ket{\psi}\in \mathcal{C}(\epsilon_p)$.  We show that if $\epsilon_p$ is small enough for $\ket{\psi}$ with its associate NNQ state $\ket{\phi}_R\in \mathcal C$, we prove that for a visible $v \in [n]\setminus (S \cup\{u\})$ in the RBM representation of  $\ket{\phi}_R$, $\tau$ can be a  threshold to distinguish whether node $v$ is a two-hop neighborhood of node $u$ or not. 
A similar statement stands for $\ket{\Psi}\in \mathcal{C}(\varrho)$ under the coherent bit-flip distance.

\begin{theorem} \label{thmMainLC}
Let us be given $M_2$ copies of a $n$-qubit unknown quantum state in the union class  $\mathcal C(\epsilon_p)\cup \mathcal C(\varrho)$ defined in Def.~\ref{definition_quantum_classes_appendix} and \ref{definition_quantum_classes_bit_flip_appendix}, $\tau,\gamma$ as in Theorem \ref{Theorem_classical_LRBM}. Denote the average conditional covariance of the RBM representation of the unknown state as $ {\widetilde{\text{Cov}}}^{\text{avg}}(u,v \vert S)$ and the empirical average conditional covariance as  $\widehat{\widetilde{\text{Cov}}}^{\text{avg}}(u,v \vert S)$ for $u,v\in n$ and $S\subset [n]\setminus \{u,v\}.$ 
If $M_2$ satisfies the constraint in Theorem \ref{Theorem_classical_LRBM}, 
 and at least one of the constraints is satisfied, i.e.,  
\begin{eqnarray}
&& (1).  ~ L_p :  \epsilon_p \leq \frac{\tau}{2^{n(1-1/p)+5} };\nonumber
\\
&& (2). \text{~Bit-flip probability}: {\varrho}\leq \frac{\tau}{2^7(\gamma +1)}. 
\label{eq_three_cases_bound_LCRBM}
\end{eqnarray}
 we have
\begin{eqnarray}
\begin{cases}
&\widehat{\widetilde{\text{Cov}}}^{\text{avg}}(u,v \vert S)  >\tau, \text{~for any~} v\in  \mathcal N_2(u),  \\
&\widehat{\widetilde{\text{Cov}}}^{\text{avg}}(u,v \vert S)
<\tau, \text{~for~} v\notin  \mathcal N_2(u).
\end{cases}
\end{eqnarray}
\end{theorem}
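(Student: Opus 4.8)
The plan is to combine three ingredients: the classical covariance gap (Lemma~\ref{lemma_cov_bound}), the robustness bounds from the previous sections (Lemma~\ref{lemma_local_rbm_global_distance} and Lemma~\ref{lemma_bitflip_LCRBM}), and a standard empirical-concentration argument for the conditional covariance. First I would fix a visible node $u$, a subset $S\subseteq[n]\setminus\{u\}$ with $|S|\leq\gamma$, and a node $v\in[n]\setminus(S\cup\{u\})$, and write the triangle-inequality decomposition
\begin{eqnarray}
\bigl\vert \widehat{\widetilde{\text{Cov}}}^{\text{avg}}(u,v\vert S) - {\text{Cov}}^{\text{avg}}(u,v\vert S) \bigr\vert
&\leq& \bigl\vert \widehat{\widetilde{\text{Cov}}}^{\text{avg}}(u,v\vert S) - {\widetilde{\text{Cov}}}^{\text{avg}}(u,v\vert S) \bigr\vert \nonumber \\
&& {}+ \bigl\vert {\widetilde{\text{Cov}}}^{\text{avg}}(u,v\vert S) - {\text{Cov}}^{\text{avg}}(u,v\vert S) \bigr\vert, \nonumber
\end{eqnarray}
where ${\text{Cov}}^{\text{avg}}$ refers to the underlying true RBM distribution $\mathcal D$ of $\ket{\phi}_R$, and $\widetilde{\text{Cov}}^{\text{avg}}$ to the distribution $\widetilde{\mathcal D}$ of the unknown state $\ket{\psi}$ (or $\ket{\Psi}$).

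For the second (\emph{systematic}) term, I would invoke Lemma~\ref{lemma_local_rbm_global_distance} in the $L_p$ case with constant $C=4$, so that the constraint $\epsilon_p\leq\tau/2^{n(1-1/p)+5}$ in Eq.~(\ref{eq_three_cases_bound_LCRBM}) gives $\bigl\vert{\widetilde{\text{Cov}}}^{\text{avg}} - {\text{Cov}}^{\text{avg}}\bigr\vert\leq\tau/4$; in the coherent bit-flip case I would invoke Lemma~\ref{lemma_bitflip_LCRBM} with $C=8$ and use $|S|\leq\gamma$, so that $\varrho\leq\tau/(2^7(\gamma+1))$ again yields a bound of $\tau/8\leq\tau/4$. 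For the first (\emph{statistical}) term, the empirical average conditional covariance of Eq.~(\ref{empi_ave_cov}) is a bounded functional of the samples, and the analysis underlying Theorem~\ref{Theorem_classical_LRBM} already shows that with $M_2$ samples satisfying the stated bound (with $\delta=\tfrac12 e^{-2\beta}$, $\gamma=8/\tau^2$), one has $\bigl\vert\widehat{\widetilde{\text{Cov}}}^{\text{avg}}(u,v\vert S) - \widetilde{\text{Cov}}^{\text{avg}}(u,v\vert S)\bigr\vert\leq\tau/4$ simultaneously for all the relevant $(u,v,S)$ triples with probability $1-\zeta$; I would cite this concentration step rather than reprove it, noting only that it is distribution-agnostic and hence applies to $\widetilde{\mathcal D}$ just as to $\mathcal D$. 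Adding the two pieces gives $\bigl\vert\widehat{\widetilde{\text{Cov}}}^{\text{avg}}(u,v\vert S) - {\text{Cov}}^{\text{avg}}(u,v\vert S)\bigr\vert\leq\tau/2$.

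Finally I would close the argument with the two-sided threshold test. If $v\in\mathcal N_2(u)\setminus S$, Lemma~\ref{lemma_cov_bound} gives ${\text{Cov}}^{\text{avg}}(u,v\vert S)\geq\alpha^2\exp(-12\beta) = 2\tau$, so $\widehat{\widetilde{\text{Cov}}}^{\text{avg}}(u,v\vert S)\geq 2\tau - \tau/2 > \tau$. If $v\notin\mathcal N_2(u)$, then conditioned on $X_S$ (in particular taking $S\supseteq\mathcal N_2(u)$ along the greedy trajectory) the variables $X_u$ and $X_v$ are conditionally independent under $\mathcal D$, so ${\text{Cov}}^{\text{avg}}(u,v\vert S)=0$ and hence $\widehat{\widetilde{\text{Cov}}}^{\text{avg}}(u,v\vert S)\leq\tau/2<\tau$; the care needed here is the same subtlety as in Goel's original proof, namely that the greedy algorithm only reaches the "no two-hop neighbor" verdict once $S$ contains enough of $\mathcal N_2(u)$, and I would reuse that structural fact from Theorem~\ref{Theorem_classical_LRBM}. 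The main obstacle I anticipate is bookkeeping the constants so that the systematic-error budget ($\tau/4$), the statistical-error budget ($\tau/4$), and the classical gap ($2\tau$ versus $0$) fit together cleanly, and making sure the union bound over all $(u,v,S)$ triples visited by the greedy algorithm (there are at most $\mathrm{poly}(n)\cdot 2^{O(\gamma)}$ of them) is absorbed into the already-stated sample bound — neither is deep, but the coherent bit-flip branch requires re-checking that $|S|\leq\gamma$ holds throughout the greedy run.
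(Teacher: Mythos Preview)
Your proposal is correct and follows essentially the same route as the paper: the same triangle-inequality split into a statistical piece (empirical vs.\ true $\widetilde{\text{Cov}}^{\text{avg}}$) and a systematic piece (true $\widetilde{\text{Cov}}^{\text{avg}}$ vs.\ true $\text{Cov}^{\text{avg}}$), the same invocation of Lemmas~\ref{lemma_local_rbm_global_distance} and~\ref{lemma_bitflip_LCRBM} for the latter, and the same two-sided threshold test against the classical gap $2\tau$ vs.\ $0$ from Goel. The only differences are in constant bookkeeping: the paper takes $C=8$ in both lemmas (so the systematic error is $\tau/8$) and cites the concentration step as giving $\tau/2$ rather than your $\tau/4$, for a combined bound of $5\tau/8$; your budget of $\tau/4+\tau/4=\tau/2$ works equally well arithmetically, and even if the cited concentration truly yields only $\tau/2$, your systematic $\tau/4$ still leaves $3\tau/4<\tau$ of total slack, so nothing breaks. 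Your observation that the concentration step is distribution-agnostic (hence applies to $\widetilde{\mathcal D}$) is exactly the point the paper uses implicitly.
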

\begin{proof}
Given two nodes $u,v$ and a subset $S\subseteq [n]\setminus\{u,v\}$ with $\vert S \vert \leq \gamma$, it has been proved in \cite{gao2017efficient} that if the number of samples satisfies the constraint in Theorem \ref{Theorem_classical_LRBM}, then with probability $1-\rho$, we have
\begin{eqnarray}
\left\vert \widehat{\widetilde{\text{Cov}}}^{\text{avg}}(u,v \vert S) -\widetilde{\text{Cov}}^{\text{avg}}(u,v \vert S)    \right\vert \leq \tau / 2 \label{eq_cov_distance_empirical_true}
\end{eqnarray}

By using  Lemma \ref{lemma_local_rbm_global_distance},  Lemma \ref{lemma_bitflip_LCRBM}, if at least one of the constraints in Eq.~(\ref{eq_three_cases_bound_LCRBM}) satisfied, the difference between the average conditional covariance of the RBM representation of state $\ket{\phi}_R$ (which we denote as $ {\text{Cov}}^{\text{avg}}(u,v \vert S)$ ) and the one of state $\ket{\psi}$ (which we denote as $ {\widetilde{\text{Cov}}}^{\text{avg}}(u,v \vert S)$ ) is bounded as follows
\begin{eqnarray}
\vert {\text{Cov}}^{\text{avg}}(u,v \vert S) - \widetilde{{\text{Cov}}}^{\text{avg}}(u,v \vert S) \vert  \leq \frac{\tau}{8}\label{eq_cov_error_diff}
\end{eqnarray}
Here we set constant $C=8$. 
Combining with Eq.~(\ref{eq_cov_error_diff}) and Eq.~\ref{eq_cov_distance_empirical_true}, we have
\begin{eqnarray}
\left\vert \widehat{\widetilde{\text{Cov}}}^{\text{avg}}(u,v \vert S) -{\text{Cov}}^{\text{avg}}(u,v \vert S)    \right\vert 
& \leq  & \left\vert \widehat{\widetilde{\text{Cov}}}^{\text{avg}}(u,v \vert S) -\widetilde{\text{Cov}}^{\text{avg}}(u,v \vert S)    \right\vert  
+ \left\vert {\widetilde{\text{Cov}}}^{\text{avg}}(u,v \vert S) -{\text{Cov}}^{\text{avg}}(u,v \vert S)    \right\vert  
\nonumber\\
& \leq & \frac{\tau}{2}+\frac{\tau}{8} \label{equ_dis_between_hat_tilde_cov_and_cov}
\end{eqnarray}
From corollary 1 in Ref.~\cite{goel2019learning}, we have
\begin{eqnarray}
\begin{cases} 
{\text{Cov}}^{\text{avg}}(u,v \vert S) \geq 2\tau,~~~&\text{if }  v\in  \mathcal N_2(u) \\
\text{Cov}^{\text{avg}}(u,v \vert S) =0 ,~~~&\text{if } v\notin  \mathcal N_2(u) 
\end{cases}
\end{eqnarray}

Combining with Eq.~(\ref{equ_dis_between_hat_tilde_cov_and_cov}), we have
\begin{eqnarray}
\begin{cases} 
\widehat{\widetilde{\text{Cov}}}^{\text{avg}}(u,v \vert S) \geq 2\tau-\frac{\tau}{8}-\frac{\tau}{2}>\tau,~~~&\text{if }  v\in  \mathcal N_2(u) \\ 
\widehat{\widetilde{\text{Cov}}}^{\text{avg}}(u,v \vert S) \leq \frac{\tau}{8}+\frac{\tau}{2} <\tau  ,~~~&\text{if } v\notin  \mathcal N_2(u).
\end{cases}
\end{eqnarray}
\end{proof}

\subsection{Robust two-hop neighbors estimation of FRBMs }\label{structure_learn_FRBM_appendix}

Assume we are given a quantum state $\ket{\psi}$ close to an FRBM-NNQ state $\ket{{\phi}}_R$. We will show that the difference between the theoretical and empirical influence function of the magnitude probability distribution of state $\ket{\psi}$ can be bounded with enough samples and for any subset $S\subset [n]$ with size $\vert S \vert \leq k$ the following constraint is satisfied
\begin{eqnarray}
\bigl| \tilde{p}(X_S=\{1\}^s)- p(X_S=\{1\}^s)\bigl| \leq \xi,\label{equ_set_S_bound}
\end{eqnarray}
where $ \xi \geq>0,$ $p(\cdot)$ and $\tilde{p}(\cdot)$ are the distribution probability corresponding to $\ket{\phi}_R$ and $\ket{\psi}$ respectively. 
Assuming $\ket{\phi}_R$ is a FRBM-NNQ state, we have the following lemma.
\begin{lemma}\label{lemma_numsamples_FRBM}
Let $\delta,\epsilon,k > 0$, probability $p(x)$ and $\tilde{p}(x)$ is from distribution $\mathcal D$ and $\widetilde{\mathcal{D}}$ of an RBM with $n$ visible noedes respectively, $\mathcal D$ is from an Ferromagnetic RBM,  and Eq.~(\ref{equ_set_S_bound}) is satisfied for any $S\subset [n]$. 
Assume we have many samples from distribution $\widetilde{\mathcal{D}}$, 
to estimate the $\widetilde{I}_u(S)$ with additive precision $\epsilon$, i.e., $\vert \widetilde{I}_u(S)-\widehat{\widetilde{I}}_u(S) \vert \leq \epsilon$ for all $S \subset[n]$ satisfying $|S| = s \leq k$, with probability at least $1-\delta$, 
it suffices to take $T$ many samples with 
\begin{eqnarray}
T \geq \frac{2}{\epsilon^2(2^{-k}-{\xi})^2}\left(\log(n) + k \log\left(\frac{en}{k}\right)\right) \log\left(\frac{4}{\delta}\right).\label{equ_sample_FRBM}
\end{eqnarray}

\end{lemma}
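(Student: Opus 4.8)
The plan is to reduce the estimation of $\widetilde I_u(S)$ to the estimation of two ordinary marginal probabilities of the distribution $\widetilde{\mathcal D}$, and then to apply Hoeffding's inequality together with a union bound over the relevant $(u,S)$ pairs. By the expansion in Eq.~(\ref{influence_expand}) applied to $\widetilde{\mathcal D}$ we have $\widehat{\widetilde I}_u(S) = 2\,\hat{\tilde p}(X_{S\cup\{u\}}=\{1\}^{s+1})/\hat{\tilde p}(X_S=\{1\}^{s}) - 1$ and $\widetilde I_u(S) = 2\,\tilde p(X_{S\cup\{u\}}=\{1\}^{s+1})/\tilde p(X_S=\{1\}^{s}) - 1$, i.e. both are affine images of a conditional probability. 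Writing $a,b$ for the numerator and denominator of $\widetilde I_u(S)$ (up to the overall factor $2$) and $\hat a,\hat b$ for their empirical versions, the same algebraic manipulation used in Lemma~\ref{lemma_appendix_condi_pro_dis_bound} gives
\begin{eqnarray}
\bigl|\widetilde I_u(S)-\widehat{\widetilde I}_u(S)\bigr|
&=& 2\Bigl|\tfrac{a}{b}-\tfrac{\hat a}{\hat b}\Bigr|
\;\le\; \frac{2\bigl(|a-\hat a|+|b-\hat b|\bigr)}{\tilde p(X_S=\{1\}^{s})},
\end{eqnarray}
using $\hat a/\hat b=\hat{\tilde p}(X_u=1\mid X_S=\{1\}^s)\le 1$. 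Hence it suffices to estimate every marginal $\tilde p(X_T=\{1\}^{|T|})$ with $|T|\le k+1$ to a common additive precision $t$, provided the denominator $\tilde p(X_S=\{1\}^s)$ is bounded away from zero.

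That denominator bound is exactly where the ferromagnetic assumption and the hypothesis Eq.~(\ref{equ_set_S_bound}) enter. Because $\mathcal D$ is ferromagnetic, $\{1\}^s$ is the modal configuration of $X_S$ under $\mathcal D$ (the same fact used in the proof of Lemma~\ref{lemma_FRBM_global distance}), so $p(X_S=\{1\}^s)\ge 2^{-s}\ge 2^{-k}$ whenever $s\le k$; combined with Eq.~(\ref{equ_set_S_bound}) this yields $\tilde p(X_S=\{1\}^s)\ge 2^{-k}-\xi$, which is positive under the implicit requirement $\xi<2^{-k}$. Plugging this in gives $|\widetilde I_u(S)-\widehat{\widetilde I}_u(S)|\le 4t/(2^{-k}-\xi)$, so choosing $t$ proportional to $\epsilon(2^{-k}-\xi)$ forces $|\widetilde I_u(S)-\widehat{\widetilde I}_u(S)|\le\epsilon$; the same concentration event also guarantees $\hat{\tilde p}(X_S=\{1\}^s)>0$, so the empirical ratio is well defined.

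It remains to estimate all the required marginals simultaneously. Each $\hat{\tilde p}(X_T=\{1\}^{|T|})$ is the empirical mean of $T$ i.i.d.\ $\{0,1\}$-valued random variables (cf.\ the empirical distribution defined above), so Hoeffding's inequality gives $\Pr[\,|\hat{\tilde p}-\tilde p|>t\,]\le 2e^{-2Tt^2}$. The events that actually appear range over at most $n\binom{n}{k}\le n(en/k)^{k}$ pairs $(u,S)$ with $|S|\le k$ (each contributing one size-$s$ and one size-$(s{+}1)$ marginal), so a union bound makes the total failure probability at most $2n(en/k)^{k}e^{-2Tt^2}$, which is $\le\delta$ as soon as $T\ge \tfrac{1}{2t^2}\log(2n(en/k)^{k}/\delta)$. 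Substituting $t\asymp\epsilon(2^{-k}-\xi)$ and bounding $\log(2n(en/k)^{k}/\delta)\le (\log n+k\log(en/k))\log(4/\delta)$ via the elementary inequality $a+b\le ab$ for $a,b\ge 2$ yields the stated bound Eq.~(\ref{equ_sample_FRBM}). I expect the only slightly delicate bookkeeping to be matching the exact numerical constant $2$ in front, which I would absorb into the loose steps above; the genuine conceptual obstacle is the denominator estimate, since without ferromagneticity the configuration $\{1\}^s$ need not be likely and without Eq.~(\ref{equ_set_S_bound}) one cannot transfer the $2^{-k}$ lower bound from $p$ to $\tilde p$, so both hypotheses are used precisely there.
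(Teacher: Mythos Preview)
Your argument is correct but follows a genuinely different route from the paper. You estimate the two marginals $\tilde p(X_{S\cup\{u\}}=\{1\}^{s+1})$ and $\tilde p(X_S=\{1\}^{s})$ separately by Hoeffding, and then combine them through the ratio inequality of Lemma~\ref{lemma_appendix_condi_pro_dis_bound}, using the denominator lower bound $\tilde p(X_S=\{1\}^{s})\ge 2^{-k}-\xi$. The paper instead uses the rejection-sampling argument of Ref.~\cite{bresler2019learning}: it first shows via Hoeffding that the count $T_S$ of samples with $X_S=\{1\}^s$ is at least $\tfrac{1}{2}(2^{-k}-\xi)T$ with high probability, and then observes that conditioned on $T_S$ these samples are i.i.d.\ from the conditional law, so $\widehat{\widetilde I}_u(S)$ is a direct empirical mean of $T_S$ bounded variables and a second application of Hoeffding gives $\Pr(|\widetilde I_u(S)-\widehat{\widetilde I}_u(S)|\ge\epsilon\mid T_S)\le 2e^{-2T_S\epsilon^2}$; combining the two events and taking a union bound over $n(en/k)^k$ pairs yields the stated constant $2$. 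Your decomposition is arguably more elementary (no conditioning step), but the ratio bound costs you roughly a factor of $4$ in the constant, as you anticipated; the paper's conditioning trick avoids ever forming the ratio and is what recovers the sharper prefactor. Both proofs use the ferromagnetic modal-configuration bound and Eq.~(\ref{equ_set_S_bound}) at exactly the same place, namely to lower bound the probability (equivalently, the expected subsample size) of the conditioning event.
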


\begin{proof}
The proof is partially based on the previous work \cite{bresler2019learning}. 
Observe that for an FRBM with non-negative external fields,  the configuration $X_S={1}^s$ is the most probable. We have
\begin{eqnarray}
p(X_S=\{1\}^s) \geq 2^{-s}.
\end{eqnarray}
By Eq.~(\ref{equ_set_S_bound}) we obtain
\begin{eqnarray}
\tilde{p}(X_S=\{1\}^s) \geq 2^{-s}-\xi.
\end{eqnarray}

Additionally, notice that the total number of possible sets $S$ is bounded by $\sum_{j=0}^k \binom{n}{j}\leq (en/k)^k.$ 
Now, consider taking $T$ samples.
For each $S$, we define $T_S$ as the number of samples where $X_S=\{1\}^s$. Applying Hoeffding's inequality, we have
\begin{eqnarray}
\tilde{p} (T_S-\mathbb E[T_S]\leq -t)\leq e^{-2t^2/T}.
\end{eqnarray}
Since $\mathbb E[T_S]\geq (2^{-k}-\xi)T $ for $\vert S \vert \leq k$,
\begin{eqnarray}
p\left(T_S< \frac{2^{-k}-\xi}{2} T\right)\leq e^{-\frac{1}{2}T(2^{-k}-\xi)^2}.
\end{eqnarray}

Using the conventional rejection sampling argument,  we see that the samples where $X_S={1}^s$ are independent and identically distributed samples drawn from the conditional law. One way to comprehend this is to regard each sample as generated by first sampling $X_S$ and then subsequently sampling the remaining spins conditioned on $X_S$. Therefore,  through another utilization of Hoeffding's inequality, we can deduce that for a specific selection of $u$ and $S$, we obtain the following
\begin{eqnarray}
p\left(\vert \widetilde I_u(S)-\widehat{\widetilde I}_u(S) \vert\geq \epsilon \vert T_S
\right)\leq 2e^{-2T_S^2\epsilon^2/T_S} \leq 2e^{-2T_S\epsilon^2}.
\end{eqnarray}
Then employing the law of total expectation, we obtain
\begin{eqnarray}
p\left(\vert \widetilde I_u(S)-\widehat{\widetilde I}_u(S) \vert\geq \epsilon\right) 
&=& \mathbb E\biggl[p\left( \left\vert \widetilde I_u(S)-\widehat{\widetilde I}_u(S) \right\vert\geq \epsilon \vert T_S\right)\biggl]\nonumber\\
&\leq& 2 \mathbb E[e^{-2T_S\epsilon^2}]\nonumber\\
&=& 2\mathbb E\biggl[\left(\textbf{1}_{T_S<(2^{-k-1}-\frac{\xi}{2})T}+\textbf{1}_{T_S\geq (2^{-k-1}-\frac{\xi}{2})T}\right)e^{-2T_S\epsilon^2} \biggl]\nonumber\\ 
&\leq & 4e^{-2T(2^{-k-1}-\frac{\xi}{2})^2\epsilon^2} 
\end{eqnarray}

By the union bound, the probability that $\vert \widetilde{I}_u(S)-\widehat{\widetilde I}_u(S) \vert\geq \epsilon $ for some $u,S$ is at most $n(\frac{en}{k})^k\cdot 4e^{-\frac{1}{2}T (2^{-k}-\xi)^2\epsilon^2}$.
Therefore if we take $T \geq \frac{2}{(2^{-k}-{\xi})^2} (1/\epsilon^2)(\log(n)+k \log(en/k)) \log(4/\delta)$ the result follows.
\end{proof}

We now prove that if given enough samples of a quantum state in the union class $\mathcal C_F(\epsilon_p)\cup C_F(\varrho)$, we can estimate the structure of the state (two-hop neighborhoods of each visible node) if $\epsilon$ or $\varrho$ is small enough. 

\begin{theorem}
Let us be given $T_2$ copies of a quantum state in the union class $\mathcal C_F(\epsilon_p)\cup C_F(\varrho)$ defined in Def.~\ref{definition_quantum_classes_appendix} and 
\ref{definition_quantum_classes_bit_flip_appendix}, $k,\eta,\delta,d_2$ as in Theorem \ref{theorem_rbm_inf}.  Denote the influence function of the RBM representation of the unknown state as $\widetilde{I}_u(S)$ and the empirical influence function as $\widehat{\widetilde{I}}(S)_u$ for $u \in [n]$ and $S\subset [n]\setminus \{u\}.$ 
Assume at least one of the following constraints is satisfied, 
\begin{eqnarray}
&&(1).~ L_p\ \text{distance} : \epsilon_p \leq \frac{\eta}{2^{(n-k-1)(1-1/p)+k+6}};\\
&&(2).~ \text{Coherent bit-flip probability:~} \varrho \leq 
\frac{\eta}{8(4 + 2^{k+3})(k+2)},\label{eq_three_cases_bound}
\end{eqnarray}
and the number of copies satisfies $$T \geq 2^{2k+5}(d_2/\eta)^2(\log(n)+k\log(en/k))\log(4/\delta),$$
we have
\begin{eqnarray}
\vert \widehat{\widetilde{I}}_{ u}(S) - \widehat{\widetilde{I}}_{ u}(S\setminus \{ j\}) \vert
& > & \eta,   \text{~if~} j\in \mathcal N_2(u),\nonumber\\
\vert \widehat{\widetilde{I}}_{ u}(S) -\widehat{\widetilde{I}}_{ u}( S\setminus \{j\}) \vert
& < & \eta, \text{~if~} j\notin\mathcal N_2(u).\label{eq_two_hop_nei_find_FRBM}
\end{eqnarray}

\end{theorem}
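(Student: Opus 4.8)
The plan is to mirror the proof of Theorem \ref{thmMainLC}, with the discrete influence function in the role of the average conditional covariance: Lemma \ref{lemma_FRBM_global distance} and Lemma \ref{lemma_FRBM_bitflip_error} take the place of Lemma \ref{lemma_local_rbm_global_distance} and Lemma \ref{lemma_bitflip_LCRBM}, and Lemma \ref{lemma_numsamples_FRBM} supplies the statistical part. Write $I_u$ for the influence function of the associated FRBM-NNQ state $\ket{\phi}_R$, $\widetilde I_u$ for that of the RBM representation of the unknown state, and $\widehat{\widetilde I}_u$ for its empirical version obtained from the $T$ copies. The claim will follow by a triangle inequality that combines three contributions for every $u\in[n]$ and every $S$ of size at most $k$ touched by the greedy routine (Algorithm \ref{algo:greedy}): the statistical error $\widehat{\widetilde I}_u\to\widetilde I_u$, the model-mismatch error $\widetilde I_u\to I_u$, and the structural gap of the true ferromagnetic model.

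First I would discharge the hypothesis of Lemma \ref{lemma_numsamples_FRBM}, namely Eq.~(\ref{equ_set_S_bound}). In the $L_p$ branch this is Lemma \ref{lemma_global_distance_subset_bound} applied to the all-ones marginals, and in the coherent bit-flip branch it is the single-term version ($K_I=\{1\}^{|I|}$) of Lemma \ref{lemma_flip_error_bino_bound_sum}; in both cases the theorem's constraint on $\epsilon_p$, respectively $\varrho$, is tailored so that the resulting $\xi$ obeys $2^{-k}-\xi\ge 2^{-k-1}$. Feeding $\xi\le 2^{-k-1}$ together with the target precision $\epsilon=\eta/(2d_2)$ into Eq.~(\ref{equ_sample_FRBM}) reproduces exactly the advertised copy count $T\ge 2^{2k+5}(d_2/\eta)^2(\log n+k\log(en/k))\log(4/\delta)$, so Lemma \ref{lemma_numsamples_FRBM} delivers $|\widehat{\widetilde I}_u(S)-\widetilde I_u(S)|\le \eta/(2d_2)$ simultaneously over all $u$ and all $|S|\le k$ with probability at least $1-\delta$.

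Next I would invoke the model-mismatch bounds. Applying Lemma \ref{lemma_FRBM_global distance} (in the $L_p$ branch) or Lemma \ref{lemma_FRBM_bitflip_error} (in the bit-flip branch) with constant $C=8$ to both $S$ and $S\setminus\{j\}$, each of size at most $k$, gives $|\widetilde I_u(S)-I_u(S)|\le \eta/8$ and likewise for $S\setminus\{j\}$; here the theorem's constraints on $\epsilon_p$ and $\varrho$ coincide with the hypotheses of those lemmas taken with $C=8$. Together with the previous step this controls $|\widehat{\widetilde I}_u(S)-I_u(S)|$ and $|\widehat{\widetilde I}_u(S\setminus\{j\})-I_u(S\setminus\{j\})|$, and hence the deviation of the empirical influence difference from the true difference $|I_u(S)-I_u(S\setminus\{j\})|$. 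The last ingredient is the structural gap for the true ferromagnetic model used in the proof of Theorem \ref{theorem_rbm_inf} (cf.~\cite{bresler2019learning}): along the sets produced by the greedy procedure, $|I_u(S)-I_u(S\setminus\{j\})|$ is bounded below by a fixed multiple of $\eta$ when $j\in\mathcal N_2(u)$ and equals $0$ when $j\notin\mathcal N_2(u)$. Substituting this gap and the accumulated errors, and using $d_2\ge 1$ so that $\eta/(2d_2)\le\eta/2$, separates the two cases across the threshold $\eta$, which is Eq.~(\ref{eq_two_hop_nei_find_FRBM}).

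I expect the main obstacle to be the constant bookkeeping in this final combination: the quantity of interest is a \emph{difference} of two influence evaluations, so both the statistical and the model-mismatch errors enter twice, and one must be careful about the exact value of the true ferromagnetic gap (and, if need be, about slightly tightening $\epsilon$ in Lemma \ref{lemma_numsamples_FRBM}) so that the strict inequalities in Eq.~(\ref{eq_two_hop_nei_find_FRBM}) hold while the sample complexity stays at the claimed $2^{2k+5}(d_2/\eta)^2(\log n+k\log(en/k))\log(4/\delta)$. A secondary point to verify is that the greedy routine only ever evaluates influences on sets whose all-ones marginal is at least $2^{-k-1}$ after perturbation, which is precisely what the $\xi$-bound from the first step guarantees.
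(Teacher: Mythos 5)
Your strategy is exactly the paper's: condition (\ref{equ_set_S_bound}) is discharged with $\xi=2^{-k-1}$ via Lemma \ref{lemma_global_distance_subset_bound} (resp.\ Lemma \ref{lemma_flip_error_bino_bound_sum} in the bit-flip branch), the model mismatch is controlled by Lemma \ref{lemma_FRBM_global distance} and Lemma \ref{lemma_FRBM_bitflip_error} with $C=8$, the structural gap $\vert I_u(S)-I_u(S\setminus\{j\})\vert\ge 2\eta$ for $j\in\mathcal N_2(u)$ (and $=0$ otherwise) comes from Lemma 6.3 of \cite{bresler2019learning}, and everything is combined by the same five-term triangle inequality. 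The one place where your proposal does not close is the constant you flagged yourself: with $\epsilon=\eta/(2d_2)$ the statistical error per influence evaluation is only guaranteed to be $\le\eta/2$ when $d_2=1$, so the accumulated error is $2\cdot\tfrac{\eta}{2}+2\cdot\tfrac{\eta}{8}=\tfrac{5\eta}{4}$, and the two cases land at $2\eta-\tfrac{5\eta}{4}=\tfrac{3\eta}{4}$ and $\tfrac{5\eta}{4}$ respectively --- both on the wrong side of the threshold $\eta$. The paper instead takes $\epsilon=\eta/(4d_2)$, giving per-evaluation statistical error $\le\eta/4$ and hence $2\eta-\tfrac{\eta}{2}-\tfrac{\eta}{4}>\eta$ and $\tfrac{\eta}{2}+\tfrac{\eta}{4}<\eta$, so the separation holds.

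You are right, however, that $\epsilon=\eta/(2d_2)$ is what literally reproduces the stated prefactor: substituting $\xi=2^{-k-1}$ and $\epsilon=\eta/(4d_2)$ into Eq.~(\ref{equ_sample_FRBM}) yields $2^{2k+7}(d_2/\eta)^2(\log n+k\log(en/k))\log(4/\delta)$, not $2^{2k+5}(\cdots)$. So the tension you identified is real, but it sits in the paper's stated constant rather than in the argument: the correct resolution is to take $\epsilon=\eta/(4d_2)$ and accept the asymptotically irrelevant factor-of-four increase in the sample complexity, not to weaken $\epsilon$ so as to fit the advertised prefactor. With that single adjustment your proof coincides with the paper's.
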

\begin{proof}
We first show that if we choose $\xi\leq 2^{-k-1}$ in Lemma \ref{lemma_numsamples_FRBM}, the constraints in Eq.~(\ref{equ_set_S_bound}) can be satisfied under conditions of either the $L_p$ distance or the coherent bit-flip.
For the $L_p$ distance, by Lemma \ref{lemma_FRBM_global distance} and  \ref{lemma_global_distance_subset_bound} and set $C=8$, we have 
$$ \left\vert \widetilde{p}(X_S=\{1\}^s) - {p}(X_S=\{1\}^s) \right\vert \leq 2^{(n-k)(1-{1}/{p})} \frac{\eta}{2^{(n-k-1)(1-{1}/{p})+k+6}}   \leq  2^{-k-1}.$$
We then turn to the coherent bit-flip case. By Lemma \ref{lemma_FRBM_bitflip_error} and also set $C=8$, we have
$$\left\vert \widetilde{p}(X_S=\{1\}^s) - {p}(X_S=\{1\}^s) \right\vert \leq 2(1-\bar{\varrho}^s)\leq 2(1-\bar{\varrho}^{s+1}) \leq \frac{\eta}{2^{s+5}+4} \leq 2^{-k-1}.$$
By using Lemma \ref{lemma_numsamples_FRBM}, and setting $\xi = 2^{-k-1},$  $\epsilon =\frac{\eta}{4d_2}$ in Eq. (\ref{equ_sample_FRBM}), we have 
\begin{eqnarray}
T \geq 2^{2k+5}(d_2/\eta)^2(\log(n)+k\log(en/k))\log(4/\delta).
\end{eqnarray}

Now we turn to prove Eq.~(\ref{eq_two_hop_nei_find_FRBM}).
By using Lemma \ref{lemma_numsamples_FRBM}, as we set $\epsilon =\frac{\eta}{4d_2}$ in Eq. (\ref{equ_sample_FRBM}), we have 
\begin{eqnarray}
\vert \widetilde I_u(S)-\widehat{\widetilde I}_u(S)\vert \leq \eta/(4d_2)\leq \eta/4 .\label{equ_diff_real_empirical}
\end{eqnarray}
From  Lemma 6.3 in Ref.\cite{bresler2019learning}, for any two-hop neighbours $j\in S$, $j\in \mathcal N_2(u)$ we have
\begin{eqnarray}
\vert I_u(S)-I_u(S\setminus \{j\}) \vert \geq 2\eta \label{equ_sj_2eta}.
\end{eqnarray}
If $j$ is not a two-hop neighbourhood of $u$, i.e. $j\notin  \mathcal N_2(u)$, then 
\begin{eqnarray}
\vert I_u(S)-I_u(S\setminus \{j\}) \vert =0. \label{equ_sj_0}
\end{eqnarray}

Then by using Lemma \ref{lemma_FRBM_global distance} and Lemma \ref{lemma_FRBM_bitflip_error}, if at least one
of the constraints in Eq.~(\ref{eq_three_cases_bound}) is satisfied, the difference between the influence of state $\ket{\phi}_R$ as ${I}_u(S\cup \{j\})$ ) and the one of state $\ket{\psi}$ (which we denote as $ \widetilde{I}_u(S\cup \{j\})$ ) is bounded as follows
\begin{eqnarray}
\vert {I}_u(S\cup \{j\})-{\widetilde{I}}_{ u}( S\cup \{ j\})\vert \leq \frac{\eta}{8}.\label{eq_bound_dif_inf_empirical_error}
\end{eqnarray}

By the triangle inequality, we have 
\begin{eqnarray}
\left\vert 
I_u(S)-I_u(S\setminus \{j\}) \right\vert
& \leq & \left\vert I_u(S)-\widetilde{I}_u(S) \right\vert
+ \left\vert \widetilde{I}_u(S)- \widehat{\widetilde{I}}_u(S) \right\vert 
+ \left\vert \widehat{\widetilde{I}}_u(S)- \widehat{\widetilde{I}}_u(S\setminus \{j\})\right\vert \nonumber\\ 
& & 
+\left\vert \widehat{\widetilde{I}}_u(S\setminus \{j\}) - \widetilde{I}_u(S\setminus \{j\})\right\vert
+\left\vert  \widetilde{I}_u(S\setminus \{j\})-{I}_u(S\setminus \{j\}) \right\vert.
\end{eqnarray}

Combining Eqs.~(\ref{eq_bound_dif_inf_empirical_error})(\ref{equ_diff_real_empirical}) (\ref{equ_sj_2eta}), 
for any $j\in  \mathcal N_2(u)$ we have
\begin{eqnarray}
\vert {\widehat{\widetilde{I}}}_{ u}(S) - \widehat{\widetilde{I}}_{ u}(S\setminus \{ j\}) \vert
\geq 2\eta-2\cdot\frac{\eta}{4}-2\cdot\frac{\eta}{8}\geq  2\eta-\frac{\eta}{2}-\frac{\eta}{4}>\eta.
\end{eqnarray}

On the other hand, if $j\notin  \mathcal N_2(u)$, combining Eqs.(\ref{eq_bound_dif_inf_empirical_error})(\ref{equ_diff_real_empirical})(\ref{equ_sj_0}), we will obtain
\begin{eqnarray}
\vert \widehat{\widetilde{I}}_{u}( S) -\widehat{\widetilde{I}}_{ u}( S\setminus \{ j\}) \vert
\leq 2 \cdot\frac{\eta}{4}+2\cdot\frac{\eta}{8} < \eta.
\end{eqnarray}
With the threshold $\eta$, we can distinguish whether a node $j$ is a two-hop neighborhood of $u$. 
\end{proof}

Notice that in our robust setting, the number of copies of state required here is slightly bigger than the number of samples required in Ref.~\cite{bresler2019learning}.

\section{Quantum state learning}

In this section, we demonstrate the process of quantum state learning with the estimated two-hop neighborhood structure.
We prove that we can learn magnitudes of each amplitude $\tilde{p}(x)$ of the unknown quantum state $\ket{\psi}$ with good precision. 
The main idea is to learn the parameters of the induced Markov random field (MRF), introduced in Section~\ref{MRF.appendix}.

For a quantum state $\ket{\psi}\in \mathcal{C}(\epsilon_p)$ and its associate NNQ state $\ket{\phi}_R\in \mathcal{C}$, we say that the underlying structure of $\ket{\phi}_R$ is an estimate of the underlying structure of $\ket{\psi}$.
Similar statement stands for $\ket{\Psi}\in\mathcal{C}(\varrho)$ and its associate state $\ket{\phi}_R$.
As described above, the magnitude of a quantum state can be represented by the marginal probability distribution of an RBM. 
In Section~\ref{MRF.appendix}, we mention that for an RBM, there is always an induced MRF, indicating that we can learn the probability distribution of an RBM by learning the induced MRF instead. 
Since we use an estimated structure to approximate the unknown quantum state, there will be some intrinsic error between our estimated result and the real quantum state.
To learn with intrinsic error, we employ an algorithm called Alphatron \cite{goel2019learningalphatron} to find parameters that recover the magnitudes with $ \text{poly} (n)$ samples. Additionally, we can also learn the conditional probability of a set of qubits $J$ conditioned on the remaining qubits, with $\text{poly} (\vert J \vert)$ samples if $d_2$ is equal to $\log \vert J \vert$.

The results are achieved by employing the Alphatron algorithm described in the following.

\begin{theorem}[Alphatron \cite{goel2019learningalphatron}]\label{theorem_alphatron}
Let $\mathcal K$ be a kernel function corresponding to feature map $\phi$ such that for all variable $X'$ in an input domain  $\mathcal X$, $ \Vert \phi(X')\Vert \leq B_1$. Consider samples  $(X'_i,Y'_i)_{i=1}^M$ drawn i.i.d from distribution $\mathcal D$ on $\mathcal X \times [0,1]$ such that $\mathbb E[Y'\vert X'] =\mathcal U (\langle v,\phi(X')\rangle+\epsilon(X'))$ where $u : \mathbb R \to [0,1]$ is a known $L$-Lipschitz non-decreasing function,  $\epsilon : \mathbb R^n\to [-\epsilon_1,\epsilon_1]$ for $\epsilon_1>0$ such that $\mathbb E [\epsilon(X')^2]\leq \epsilon_b$ and $\Vert v \Vert \leq B_2$. Then for $\delta \in (0,1)$, with probability $1-
\delta$, Alphatron with $\lambda = 1/L$, $T=CBL\sqrt{M/\log(1/\delta)}$ and $N=C'M\log(T/\delta)$ for large enough constants $ C, C' >0,$ outputs a hypothesis $h$ such that,
\begin{eqnarray}
\mathbb E_{X',Y'}[\left(h(x)-\mathbb E [Y'\vert X']\right)^2] \leq \Ord{L\sqrt{\epsilon_b} + L \epsilon_1 \sqrt[4]{\frac{\log(1/\delta)}{M}}+ B_1B_2L\sqrt{\frac{\log(1/\delta)}{M}}},
\end{eqnarray}
in time $poly(n, log(1/\delta),t_{\phi}),$  where $t_{\phi}$ is the time required to compute the kernel function $\mathcal K$.
\end{theorem}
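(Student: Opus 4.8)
The plan is to prove Theorem~\ref{theorem_alphatron} by lifting the Isotron/GLMtron potential argument to the reproducing kernel Hilbert space (RKHS) $\mathcal H$ of $\mathcal K$ and to the agnostic-type noise carried by $\epsilon(\cdot)$. Write $\mathcal K(x,x')=\langle\phi(x),\phi(x')\rangle_{\mathcal H}$ with $\|\phi(x)\|\le B_1$. The Alphatron iterates are coefficient vectors $\alpha^t\in\mathbb R^M$ with RKHS representatives $v_t:=\sum_{i=1}^M\alpha_i^t\phi(X_i')$ and round-$t$ hypothesis $h^t(x)=u(\langle v_t,\phi(x)\rangle)$. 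The update $\alpha_i^{t+1}=\alpha_i^t+\tfrac{\lambda}{M}(Y_i'-h^t(X_i'))$ is a kernelized gradient step on the empirical squared loss, so that $v_{t+1}=v_t+\tfrac{\lambda}{M}\sum_i(Y_i'-h^t(X_i'))\phi(X_i')$. The central object is the potential $\Phi_t:=\|v_t-v\|_{\mathcal H}^2$, with $\Phi_0\le B_2^2$ (initializing $v_0=0$) and $\Phi_t\ge 0$.

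First I would prove the \emph{descent lemma}. Expanding $\Phi_{t+1}$ produces a cross term $2\lambda\langle v_t-v,g_t\rangle$, where $g_t=\tfrac1M\sum_i(Y_i'-h^t(X_i'))\phi(X_i')$, plus a quadratic term $\lambda^2\|g_t\|^2$. Decomposing $Y_i'-h^t(X_i')$ into its conditional mean and a centred part and using $\|\phi\|\le B_1$, one bounds $\|g_t\|^2$ by $\Ord{B_1^2\widehat{\mathrm{err}}(h^t)+B_1^2/M}$ up to concentration, where $\widehat{\mathrm{err}}(h^t):=\tfrac1M\sum_i(h^t(X_i')-Y_i')^2$. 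For the cross term, replace the empirical average by its population value up to a deviation $\Delta_t$, then add and subtract $u(\langle v,\phi(X')\rangle)$ inside the link; monotonicity and $L$-Lipschitzness of $u$ give the one-point-convexity bound $\big(u(\langle v,\phi(X')\rangle)-u(\langle v_t,\phi(X')\rangle)\big)\langle v_t-v,\phi(X')\rangle\le-\tfrac1L\big(u(\langle v,\phi(X')\rangle)-u(\langle v_t,\phi(X')\rangle)\big)^2$, while the gap to $\mathbb E[Y'|X']=u(\langle v,\phi(X')\rangle+\epsilon(X'))$ is controlled via $|\epsilon|\le\epsilon_1$, $\mathbb E[\epsilon^2]\le\epsilon_b$, and $|\langle v_t-v,\phi(X')\rangle|\le(R+B_2)B_1$, with $R$ a bound on $\|v_t\|$ established below. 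Choosing $\lambda$ small enough to dominate the $\lambda^2 B_1^2\widehat{\mathrm{err}}(h^t)$ contribution yields $\Phi_t-\Phi_{t+1}\ge\tfrac{\lambda}{L}\mathrm{err}(h^t)-\Ord{\lambda^2 B_1^2/M}-\Ord{\lambda r_t}-2\lambda|\Delta_t|$, where $\mathrm{err}(h^t):=\mathbb E[(h^t(X')-\mathbb E[Y'|X'])^2]$ and $r_t$ gathers the $\epsilon$-dependent residuals. Summing over $t=0,\dots,T-1$, telescoping $\Phi_t$, and bounding $\tfrac1T\sum_t r_t$ through $|\epsilon|\le\epsilon_1$ and $\mathbb E[\epsilon^2]\le\epsilon_b$ bounds the iterate-averaged error $\tfrac1T\sum_t\mathrm{err}(h^t)$ by $\Ord{\tfrac{LB_2^2}{\lambda T}+\tfrac{\lambda LB_1^2}{M}+L\sqrt{\epsilon_b}+L\epsilon_1\sqrt[4]{\tfrac{\log(1/\delta)}{M}}+\tfrac{L}{T}\sum_t|\Delta_t|}$; with $\lambda=1/L$ and $T=\Tht{B_1B_2L\sqrt{M/\log(1/\delta)}}$ the first two terms match the statistical floor, and the fourth-root term comes from an AM--GM split pairing the $\epsilon_1$-bounded excess against a deviation of width $\Tht{\sqrt{\log(1/\delta)/M}}$.

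Second I would discharge the \emph{uniform concentration}. Since $\Phi_t\le\Phi_0\le B_2^2$, every $v_t$ lies in the fixed ball $\{\|w\|\le R\}$ with $R=\Ord{B_2}$, so a Rademacher-complexity (equivalently, covering-number) estimate for the linear class $\{x\mapsto\langle w,\phi(x)\rangle:\|w\|\le R\}$ over $M$ points gives $\max_t|\Delta_t|=\Ord{B_1B_2\sqrt{\log(1/\delta)/M}}$ with probability $1-\delta/2$, and likewise for $|\widehat{\mathrm{err}}(h^t)-\mathrm{err}(h^t)|$. Finally Alphatron returns the best of $h^1,\dots,h^T$ on a fresh validation set of size $N=\Tht{M\log(T/\delta)}$; a Hoeffding estimate of each $\mathbb E[(h^t(X')-Y')^2]$ with a union bound over the $T$ candidates shows the selected hypothesis has squared error within $\Ord{\sqrt{\log(T/\delta)/N}}$ of $\min_t\mathrm{err}(h^t)$, hence within the claimed bound. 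For runtime, there are $T=\mathrm{poly}$ rounds, each costing $\Ord{M}$ kernel evaluations per sample point and $t_\phi$ per evaluation, plus the validation pass, giving $\mathrm{poly}(n,\log(1/\delta),t_\phi)$.

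The step I expect to be the main obstacle is the descent lemma under the \emph{in-argument} noise $\epsilon(X')$: the clean one-point-convexity inequality holds only for the noiseless target $u(\langle v,\phi(X')\rangle)$, so one must split $\mathbb E[Y'|X']-h^t(X')$, bound the extra cross term $\langle v_t-v,\phi(X')\rangle\big(u(\langle v,\phi(X')\rangle+\epsilon(X'))-u(\langle v,\phi(X')\rangle)\big)$ using $L$-Lipschitzness together with both moment constraints on $\epsilon$, and only afterwards invoke monotonicity --- all while keeping signs straight so that telescoping retains the factor $\lambda/L$. The delicate bookkeeping is to route the $\epsilon_b$ contribution into an irreducible additive term $L\sqrt{\epsilon_b}$ (with no gain in $M$) while suppressing the $\epsilon_1$ contribution to $L\epsilon_1(\log(1/\delta)/M)^{1/4}$; the careful separation of the two $\epsilon$-moments in the descent step is precisely what achieves this.
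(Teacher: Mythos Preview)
The paper does not prove this statement at all: Theorem~\ref{theorem_alphatron} is quoted verbatim from \cite{goel2019learningalphatron} and used as a black box in the subsequent Lemma~\ref{lemma_appendix_potentail_2_norm} and Theorems~\ref{thmLearnAll} and~\ref{theorem_appendix_conditional_probability}. There is therefore no ``paper's own proof'' to compare against; the authors simply cite the Alphatron guarantee and instantiate its parameters ($L=1$, $\mathcal U=\tanh$, $B_1=2^{d_2/2}$, $B_2=\beta'$, $\epsilon_1=\epsilon_b^{1/2}=\epsilon'$) for their setting.

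Your sketch is a reasonable outline of how the original Alphatron analysis in \cite{goel2019learningalphatron} proceeds --- the kernelized Isotron/GLMtron potential argument with the one-point-convexity inequality from monotonicity of $u$, uniform Rademacher concentration over the RKHS ball, and hold-out validation to pick the best iterate --- but reproducing that proof is outside the scope of the present paper, which treats the result as imported.
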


Consider an unknown quantum state that $\ket{ \psi }\in \mathcal C(\epsilon_p)$ or $\ket{ \Psi }\in \mathcal C(\varrho)$, and $\ket{\phi}_R$ is the associated NNQ state. Let $q(X)$ and $\tilde{q}(X)$ be the potential of the MRF induced on the observed nodes corresponding to  $\ket{\phi}_R$ and the unknown quantum state respectively, and the partial derivative over node $u$ as  $\tilde{q}_u(X_{\neq u}) =\partial_u \tilde{q}(X)$, ${q}_u(X_{\neq u}) =\partial_u {q}(X).$ Observe that  $ q_u(X)$  contains at most $2^{d_2}$ terms because of the bounded degree constraint, while $\tilde q_u(X_{\neq u})$ contains at most $m_u = \sum_{i=1}^{n-1}\binom{n}{i}$ terms because there are $m_u$ possible monomial containing nodes $u$ for a general MRF. Therefore,  we can divide the partial potential $\tilde q_u(X_{\neq u})$ of the MRF representation of the unknown state into two parts, i.e.,
\begin{eqnarray}
\tilde q_u(X_{\neq u}) =  q_u(X_{\neq u}) + q_\epsilon(X_{\neq u}),\label{eq_q_u_with_error_terms}
\end{eqnarray}
where $ q_u(X_{\neq u})$ contains the monomials from subset in $\mathcal{N}_2(u)$, and we treat $q_\epsilon (X_{\neq u})$ as the error term. 
We show that we can learn the parameters by using the Alphatron algorithm in Ref.~\cite{goel2019learningalphatron} with the knowledge of the underlying graph of the induced MRF, which is the two-hop neighbor structure of the corresponding RBM. 
Particularly, for a node $u\in [n]$, we consider using the Alphatron algorithm to learn the parameters of the partial potential with the knowledge of $\mathcal N_2(u)$.

\begin{lemma}\label{appendix_lemma_partial_potental_bound}
Let $p(X)$ be the marginal probability distribution from a distribution $\mathcal D$, which is over the visible node from an $(\alpha,\beta)$-non-degenerate RBM, $\tilde{p}(X)$ from distribution $\widetilde{\mathcal D}$ as the probabilities distribution which is close to $p(X)$, such that $\vert\mathbb E_{\mathcal D}[{X}_u\vert X_{\neq u}] -  \mathbb E_{\widetilde{\mathcal D}}[ {X}_u\vert X_{\neq u}] \vert \leq \epsilon$  for any $u\in [n]$ and $\epsilon> 0$. Let $q(X)$ and $\tilde{q}(X)$ be the corresponding potentials of the induced MRFs respectively.
For a node $u$, the partial potential  $\tilde{q}_u(X)$ and ${q}_u(X)$ of the MRFs satisfy 
\begin{eqnarray}
\vert \tilde q_u( X_{\neq u}) -q_u( X_{\neq u}) \vert & \leq &  \epsilon', \\
\vert \tilde q_u( X_{\neq u})  \vert &\leq & \beta +  \epsilon',
\end{eqnarray}
where $\epsilon' = \epsilon/(1-(\tanh(\beta)+\epsilon)^2)$.
\end{lemma}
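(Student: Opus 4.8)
## Proof Plan for Lemma \ref{appendix_lemma_partial_potental_bound}

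The plan is to exploit the explicit relationship between the conditional expectation $\mathbb{E}[X_u \mid X_{\neq u}]$ and the partial potential $q_u(X_{\neq u})$ in a Markov random field over $\{\pm 1\}^n$. For a binary MRF, conditioning on all but one node $u$ gives a logistic form: since $p(X_u = x_u \mid X_{\neq u}) \propto \exp(x_u \, q_u(X_{\neq u}))$ (the terms in $q(X)$ not containing $X_u$ cancel in the conditional), one has $\mathbb{E}[X_u \mid X_{\neq u}] = \tanh(q_u(X_{\neq u}))$. Applying this to both $\mathcal D$ and $\widetilde{\mathcal D}$, the hypothesis $\vert \mathbb{E}_{\mathcal D}[X_u \mid X_{\neq u}] - \mathbb{E}_{\widetilde{\mathcal D}}[X_u \mid X_{\neq u}] \vert \leq \epsilon$ becomes $\vert \tanh(q_u) - \tanh(\tilde q_u) \vert \leq \epsilon$. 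Inverting, $\vert q_u - \tilde q_u \vert \leq \vert \mathrm{arctanh}(\tanh(q_u)) - \mathrm{arctanh}(\tanh(q_u) \pm \epsilon)\vert$, and using that $\mathrm{arctanh}$ has derivative $1/(1-z^2)$, a mean-value estimate on the interval between $\tanh(q_u)$ and $\tanh(\tilde q_u)$ yields the bound with $\epsilon' = \epsilon/(1 - (\tanh\beta + \epsilon)^2)$, provided one controls $\vert \tanh(q_u)\vert$.

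First I would establish the logistic identity carefully, noting that the non-degeneracy of the RBM and the equivalence of RBMs and induced MRFs (cited earlier in the excerpt) guarantee $q(X)$ exists and is a bona fide multilinear potential, so that $q_u(X_{\neq u}) = \partial_u q(X)$ makes sense. Next I would bound $\vert q_u(X_{\neq u})\vert \leq \beta$: the partial potential collects coefficients of monomials containing $u$, and for the RBM-induced MRF these are controlled by $\sum_j \vert J_{uj}\vert + \vert h_u\vert \leq \beta$ from the $(\alpha,\beta)$-non-degeneracy; this gives $\vert \tanh(q_u)\vert \leq \tanh\beta$, hence $\vert \tanh(\tilde q_u)\vert \leq \tanh\beta + \epsilon$, so the whole segment between the two $\tanh$ values lies in $[-(\tanh\beta+\epsilon), \tanh\beta+\epsilon]$ where $1/(1-z^2) \leq 1/(1-(\tanh\beta+\epsilon)^2)$. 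Then the first inequality follows by the mean value theorem applied to $\mathrm{arctanh}$, and the second follows from the triangle inequality $\vert \tilde q_u\vert \leq \vert q_u\vert + \vert \tilde q_u - q_u\vert \leq \beta + \epsilon'$.

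The main obstacle I anticipate is justifying that $\vert q_u(X_{\neq u})\vert \leq \beta$ as a function of the conditioning configuration $X_{\neq u}$, rather than just as a sum of absolute coefficients. Because $q_u(X_{\neq u})$ is a signed sum $\sum_{I \ni u} q_I \prod_{i \in I \setminus u} x_i$, the worst case over $x$ is indeed $\sum_{I \ni u}\vert q_I\vert$, and one must argue this is at most $\beta$ for the MRF induced from the $(\alpha,\beta)$-non-degenerate RBM. This requires tracing through the RBM-to-MRF construction: marginalizing out hidden node $j$ contributes a term $\log\cosh(\sum_i J_{ij} x_i + g_j)$ whose multilinear expansion has coefficients bounded (in $\ell_1$ over monomials touching $u$) by $\vert J_{uj}\vert$ times a factor absorbed by the $\tanh$/$\cosh$ structure, and summing over $j$ together with the field term $h_u x_u$ recovers $\sum_j \vert J_{uj}\vert + \vert h_u\vert \leq \beta$. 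I would either cite this from the earlier RBM--MRF equivalence discussion or include a short direct argument; everything else is elementary calculus with $\tanh$ and $\mathrm{arctanh}$.
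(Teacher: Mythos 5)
Your proposal is correct and follows essentially the same route as the paper: the identity $\mathbb{E}[X_u\mid X_{\neq u}]=\tanh(q_u(X_{\neq u}))$, the bound $\vert q_u\vert\leq\beta$ imported from the RBM-to-MRF analysis (the paper cites Lemma 6.1 of Bresler et al.\ for exactly this), and an $\mathrm{arctanh}$ mean-value estimate with derivative bounded by $1/(1-(\tanh\beta+\epsilon)^2)$ on the relevant segment. Your derivation of the second inequality via the plain triangle inequality $\vert\tilde q_u\vert\leq\vert q_u\vert+\vert\tilde q_u-q_u\vert$ is in fact cleaner than the paper's sign-based case analysis and yields the same bound.
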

\begin{proof}
Observe that for MRFs with distribution ${p}(x)$, we have $\mathbb E[X_u\vert  X_{\neq u}] 
=  \tanh(q_u( X_{\neq u})) $.
Since $\vert\mathbb E_{\mathcal{D}}[{X}_u\vert X_{\neq u}] -  \mathbb E_{\widetilde{\mathcal D}}[X_u\vert X_{\neq u}] \vert \leq \epsilon$  for $\epsilon>0$, we obtain
\begin{eqnarray}
\vert \tanh(q_u( X_{\neq u})) -\tanh(\tilde{q}_u( X_{\neq u})) \vert 
\leq \epsilon.  
\end{eqnarray}
Now we turn to calculate the bound of the partial potential $\tilde{q}_u( X_{\neq u})$ by using the bound of $\vert q_u(X_{\neq u})\vert$. Using Lemma 6.1 in Ref.~\cite{bresler2019learning}, we have $\vert q_u(X_{\neq u})\vert < \beta$. We now consider the following two cases. First, if $ \vert \tilde q_u(X_{\neq u}) \vert \leq \vert  q_u(X_{\neq u})\vert $, straightforwardly we obtain $\vert \tilde q_u(X_{\neq u}) \vert \leq \beta$.
Therefore, in the following we conside the case that $\vert \tilde q_u(X_{\neq u}) \vert \geq \vert q_u(X_{\neq u})\vert$. 
Note that $\text{arctanh}(x)=\frac{1}{2}\ln \frac{1+x}{1-x}$ is a monotonic increasing function for $x\in (-1,1)$, and the derivative function $\frac{d}{dx}\text{arctanh}(x) = 1/(1-x^2)$.
For $x,y\in (-1,1)$ and $x>y$, we also have
\begin{eqnarray}
\text{arctanh}(x) -\text{arctanh}(y)\leq \frac{1}{1-x^2}(x-y).\label{eq_convex_property}
\end{eqnarray}
With this, we can show that
\begin{eqnarray}
|\tilde q_u( X_{\neq u}) -q_u( X_{\neq u}) | &=&  \vert\text{arctanh}(\tanh(\tilde q_u( X_{\neq u}))) -\text{arctanh}(\tanh(q_u( X_{\neq u}))) \vert \notag\\
&\leq& \max\biggl\{\frac{1}{1-\tanh^2(\tilde q_u( X_{\neq u}))},\frac{1}{1-\tanh^2( q_u( X_{\neq u}))}\biggl\} |\tanh(\tilde q_u( X_{\neq u}))-\tanh(q_u( X_{\neq u}))|\notag \\
&\leq& \frac{\epsilon}{1-(\tanh(\beta)+\epsilon)^2} := \epsilon'.
\end{eqnarray}
If $\tilde q_u(X_{\neq u}) q_u(X_{\neq u}) \geq 0$,
by using Eq.~(\ref{eq_convex_property}) with $x =\tanh(\vert \tilde q_u(X_{\neq u})\vert),$ $y= \tanh(\vert q_u(X_{\neq u})\vert),$  we have
\begin{eqnarray}
\vert \tilde q_u(X_{\neq u})\vert -   \vert q_u(X_{\neq u})\vert 
& \leq & \frac{1}{1- \tanh(\vert \tilde q_u(X_{\neq  u})\vert)^2} \bigl(\tanh(\vert \tilde q_u(X_{\neq u})\vert)-\tanh(\vert  q_u(X_{\neq u})\vert)\bigl) \notag\\
&=& \frac{1}{1- \tanh(\vert \tilde q_u(X_{\neq  u})\vert)^2}\left\vert\tanh( \tilde q_u(X_{\neq u}))-\tanh(  q_u(X_{\neq u}))\right\vert\notag \\
&=& \frac{\epsilon}{1-(\tanh(\beta)+\epsilon)^2}=\epsilon'.
\end{eqnarray}
If $\tilde q_u(X_{\neq u}) q_u(X_{\neq u}) \leq 0$,
we have 
\begin{eqnarray}
\tanh(\vert \tilde q_u(X_{\neq u})\vert)-\tanh \left(\vert  q_u(X_{\neq u})\vert\right)
& =  &    \left\vert\tanh( \tilde q_u(X_{\neq u}))+\tanh(  q_u(X_{\neq u}))\right\vert \nonumber\\
&\leq & \left\vert\tanh( \tilde q_u(X_{\neq u}))-\tanh(q_u(X_{\neq u}))\right\vert \leq \epsilon,
\end{eqnarray}
where the first inequality arises from the property that $\tanh(\cdot)$ is an odd function.
Combining these two cases together we obtain
\begin{eqnarray}
\vert \tilde q_u(X_{\neq u})\vert  
& \leq &  \frac{\epsilon}{1-  \tanh^2(\vert \tilde q_u(X_{\neq u})\vert)}  + \vert q_u(X_{\neq u})\vert\nonumber\\
& \leq &  \frac{\epsilon }{1-(\tanh(\beta)+\epsilon)^2}  + \beta =\epsilon'+\beta.
\end{eqnarray}

\end{proof}

\begin{lemma}\label{lemma_appendix_conditional_probability_bound} 
Let $\tilde{p}(x)$ be the probability distribution of state $\ket{\psi} \in \mathcal C (\epsilon_p)$ or state $\ket{\Psi} \in \mathcal C (\varrho)$ and $p(x)$ is the distribution probability of an associate NNQ state $\ket{\phi}_R$.  Suppose $p(x)$ and $\tilde{p}(x)$ satisfies the setting in Lemma \ref{appendix_lemma_partial_potental_bound}, for a node $u\in [n]$,   the following constraint (1) is satisfied for state $\ket{\psi}$ and  constraint (2) is satisfied for state $\ket{\Psi}$, 
\begin{eqnarray}
& (1).& ~~ \vert\mathbb E_{\mathcal D}[{X}_u\vert X_{\neq u} = x_{\neq u}  ] -  \mathbb E_{\widetilde{\mathcal D}}[X_u\vert  X_{\neq u}= x_{\neq u}] \vert  \leq \frac{2^{(n-1)(1-1/p )+2}\epsilon_p}{\sigma(-2\beta)^{d_2}}  \nonumber\\
& (2).&  ~~ \vert\mathbb E_{\mathcal{D}}[{X}_u\vert X_{\neq u} = x_{\neq u}  ] -  \mathbb E_{\widetilde{\mathcal D}}[X_u\vert X_{\neq u}=x_{\neq u}] \vert  \leq \frac{8n\varrho}{\sigma(-2\beta)^{d_2}}.
\label{eq_appendix_conditional_probability_bound}
\end{eqnarray}   
\end{lemma}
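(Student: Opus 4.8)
The plan is to turn the left-hand side into a difference of \emph{conditional} probabilities of $X_u$, use the Markov property of the RBM to shrink the conditioning set from $x_{\neq u}$ to the two-hop neighborhood $x_{\mathcal N_2(u)}$, apply Lemma~\ref{lemma_appendix_condi_pro_dis_bound} to reduce to differences of \emph{marginals} over a set of size at most $d_2+1$, bound those marginal differences with the two closeness hypotheses, and finally lower bound the normalizing marginal $p(x_{\mathcal N_2(u)})$.

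First I would use $\mathbb E[X_u\mid\cdot]=2\Pr[X_u=1\mid\cdot]-1$ together with the Markov property $p(X_u\mid x_{\neq u})=p(X_u\mid x_{\mathcal N_2(u)})$ of the RBM distribution $\mathcal D$, so that the quantity to bound equals $2\,\bigl|p(X_u=1\mid x_{\mathcal N_2(u)})-\tilde p(X_u=1\mid x_{\mathcal N_2(u)})\bigr|$, i.e.\ the conditionals of $X_u$ under the two distributions given the \emph{same} neighborhood configuration (for the RBM this coincides with conditioning on all of $x_{\neq u}$, which is also the form consumed by the potential-approximation step, Lemma~\ref{appendix_lemma_partial_potental_bound}). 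Then Lemma~\ref{lemma_appendix_condi_pro_dis_bound}, applied with $A=C=\{X_u=1\}$ and $B=D=\{X_{\mathcal N_2(u)}=x_{\mathcal N_2(u)}\}$, bounds this by $1/p(x_{\mathcal N_2(u)})$ times $\bigl|p(X_u=1,x_{\mathcal N_2(u)})-\tilde p(X_u=1,x_{\mathcal N_2(u)})\bigr|+\bigl|p(x_{\mathcal N_2(u)})-\tilde p(x_{\mathcal N_2(u)})\bigr|$, which are marginal differences over $\{u\}\cup\mathcal N_2(u)$ and over $\mathcal N_2(u)$. In the $L_p$ regime I would bound these two terms by Lemma~\ref{lemma_global_distance_subset_bound}; using that $\{u\}\cup\mathcal N_2(u)$ has size at least $2$ (the case $\mathcal N_2(u)=\emptyset$ being trivial, since then $X_u$ is independent of the rest), they are at most $2^{(n-2)(1-1/p)}\epsilon_p$ and $2^{(n-1)(1-1/p)}\epsilon_p$, so their sum is at most $2\cdot 2^{(n-1)(1-1/p)}\epsilon_p$. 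In the coherent bit-flip regime I would instead use Lemma~\ref{lemma_flip_error_bino_bound_sum} with a one-element index set, giving $2(1-\bar\varrho^{\,|\mathcal N_2(u)|+1})$ and $2(1-\bar\varrho^{\,|\mathcal N_2(u)|})$, each at most $2n\varrho$ by Bernoulli's inequality and $|\mathcal N_2(u)|+1\le n$, so the sum is at most $4n\varrho$. Re-inserting the leading factor $2$ then yields $2^{(n-1)(1-1/p)+2}\epsilon_p/p(x_{\mathcal N_2(u)})$ and $8n\varrho/p(x_{\mathcal N_2(u)})$ respectively.

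It remains to show $p(x_{\mathcal N_2(u)})\ge\sigma(-2\beta)^{d_2}$. For this I would write $p(x_{\mathcal N_2(u)})$ by the chain rule as a product of $|\mathcal N_2(u)|$ single-site conditionals, and use that in the induced MRF every single-site conditional equals $\sigma\!\left(2x_i\,q_i(X_{\neq i})\right)$ with $|q_i|\le\beta$ (the partial-potential bound already invoked in Lemma~\ref{appendix_lemma_partial_potental_bound}), hence is at least $\sigma(-2\beta)$ for every configuration of the remaining spins; since a convex combination of such conditionals is still $\ge\sigma(-2\beta)$, each chain-rule factor is $\ge\sigma(-2\beta)$ and $p(x_{\mathcal N_2(u)})\ge\sigma(-2\beta)^{|\mathcal N_2(u)|}\ge\sigma(-2\beta)^{d_2}$, the last step because $\sigma(-2\beta)<1$ and $|\mathcal N_2(u)|\le d_2$. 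Substituting this bound proves constraints (1) and (2). I expect this last ingredient---producing the denominator $\sigma(-2\beta)^{d_2}$ rather than the useless $\sigma(-2\beta)^{n-1}$ that one would get by keeping the full conditioning on $x_{\neq u}$---to be the crux, and it is precisely the Markov reduction to $\mathcal N_2(u)$, performed \emph{before} invoking Lemma~\ref{lemma_appendix_condi_pro_dis_bound}, that keeps the exponent at $d_2$; the remaining estimates are routine assemblies of Lemmas~\ref{lemma_appendix_condi_pro_dis_bound}, \ref{lemma_global_distance_subset_bound} and \ref{lemma_flip_error_bino_bound_sum}.
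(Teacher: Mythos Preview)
Your key reduction step does not go through. You assert that the quantity to bound ``equals $2\,\bigl|p(X_u{=}1\mid x_{\mathcal N_2(u)})-\tilde p(X_u{=}1\mid x_{\mathcal N_2(u)})\bigr|$'' by the Markov property, but the Markov relation $p(X_u\mid x_{\neq u})=p(X_u\mid x_{\mathcal N_2(u)})$ is a property of the \emph{RBM} distribution $p$ only. The distribution $\tilde p$ comes from an arbitrary state that is merely $L_p$-close (or bit-flip-close) to the RBM-NNQ state and has no reason to inherit the same conditional-independence structure; in general $\tilde p(X_u{=}1\mid x_{\neq u})\neq\tilde p(X_u{=}1\mid x_{\mathcal N_2(u)})$. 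So the claimed equality fails, and everything built on it---the application of Lemma~\ref{lemma_appendix_condi_pro_dis_bound} with $B=D=\{X_{\mathcal N_2(u)}=x_{\mathcal N_2(u)}\}$, the marginal bounds over sets of size $\le d_2+1$, and the $\sigma(-2\beta)^{d_2}$ denominator coming from $p(x_{\mathcal N_2(u)})$---is unsupported. You correctly identify this Markov shrink as the crux, but it is precisely the step that is not justified for $\tilde p$.

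The paper takes a different route: it never reduces the conditioning set. It keeps the full event $\{X_{\neq u}=x_{\neq u}\}$ for \emph{both} distributions, writes the left-hand side as $2\bigl|p(X_u{=}1\mid x_{\neq u})-\tilde p(X_u{=}1\mid x_{\neq u})\bigr|$, and applies Lemma~\ref{lemma_appendix_condi_pro_dis_bound} with $B=D=\{X_{\neq u}=x_{\neq u}\}$. The two numerator terms are then marginal differences over index sets of size $n$ and $n-1$, bounded by $2^{(n-1)(1-1/p)}\epsilon_p$ via Lemma~\ref{lemma_global_distance_subset_bound} in case~(1), and by $2(1-\bar\varrho^{\,n})$, $2(1-\bar\varrho^{\,n-1})\le 2n\varrho$ via Lemma~\ref{lemma_flip_error_bino_bound_sum} in case~(2). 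The denominator that appears is $p(x_{\neq u})$, not $p(x_{\mathcal N_2(u)})$, and the paper then invokes the lower bound $p(x_{\neq u})\ge\sigma(-2\beta)^{d_2}$ to arrive at the stated constants.
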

\begin{proof}
By using Lemma \ref{lemma_appendix_condi_pro_dis_bound}, we have 
\begin{eqnarray}
& & \vert\mathbb E_{\mathcal{D}}[{X}_u\vert X_{\neq u} = x_{\neq u}] -  \mathbb E_{\widetilde{\mathcal D}}[X_u\vert X_{\neq u} = x_{\neq u}] \vert \nonumber\\
& = & 2\left\vert p({X_u}=1 \vert  x_{\neq u})- \tilde{p}(X_u=1 \vert x_{\neq u}) \right\vert \nonumber\\
& \leq &  \frac{2}{ p( x_{\neq u})}   \left\vert { p({X_u}=1,  x_{\neq u})- \tilde{p}({X}_u=1,  x_{\neq u})} \right\vert +  \left\vert {\tilde{p}(x_{\neq u}) -  p( x_{\neq u}) }\right\vert. 
\end{eqnarray}
Now we consider the bound in the following two cases:  1) if the two terms can be bounded by $L_p$ norm, we have $ \left\vert {\tilde{p}( x_{\neq u}) -  p( x_{\neq u}) } \right\vert \leq 2^{(n-1)(1-{1}/{p})}\epsilon_p $, $ \left\vert {\tilde{p}( X_u=1,  x_{\neq u}) -  p( X_u=1, x_{\neq u}) } \right\vert \leq  2^{(n-1)(1-{1}/{p})}\epsilon_p $. 2) for coherent bit-flip distance, by Lemma \ref{lemma_flip_error_bino_bound_sum}, we have  $\left\vert {\tilde{p}( x_{\neq u}) -  p( x_{\neq u}) } \right\vert \leq 2(1-\bar{\varrho}^{n-1}) $, $ \left\vert {\tilde{p}( X_u=1, x_{\neq u}) -  p( X_u=1, x_{\neq u}) } \right\vert \leq 2(1-\bar{\varrho}^{n})$. Additionally, for each case, we have $p(x_{\neq u})) \geq \sigma(-2\beta)^{d_2}$, after simple calculation, we can then obtain  Eq.~(\ref{eq_appendix_conditional_probability_bound}). For case 2), we have $\frac{8(1-\bar{\varrho}^{n})}{\sigma(-2\beta)^{d_2}} \leq \frac{8n\varrho}{\sigma(-2\beta)^{d_2}}$. 
\end{proof}

\begin{lemma}\label{lemma_appendix_potentail_2_norm}
Let $p(X), \tilde{p}(X), \mathcal D, \widetilde{\mathcal D}, q_u(X_{\neq u}), \tilde{q}_u(X_{\neq u})$ as the same setting with Lemma \ref{appendix_lemma_partial_potental_bound}. Suppose $q_u(X_{\neq u})$
contains at most $d_2$ nodes, for any $u\in [n]$. Given $M$ samples from distribution $\widetilde{\mathcal D}$, we can find a polynomial $\tilde{q}_u^*(X_{\neq u})$ which is an estimation of $\tilde{q}_u(X_{\neq u})$, by using Alphatron algorithm with probability at least $1-\delta$ such that
\begin{eqnarray}
\Vert \tilde{q}_u^* - \tilde{q}_u \Vert_2^2\leq \Ord{\frac{err}{2^{d_2}\sigma(-2\beta')^{d_2}(1-\tanh^2(\beta'))^2 }
}  
\end{eqnarray}
where $err = \epsilon'  +  \epsilon'  \sqrt[4]{\frac{\log( 1 /\delta)}{M}}+ \beta 2^{d_2/2}\sqrt{\frac{\log( 1 /\delta)}{M}}, \beta' =\beta +\epsilon',$ $\tilde q_u$ and $\tilde{q}^*_u$ are the  vector of coefficient of monomial $\tilde{q}_u(X_{\neq u})$ and  $\tilde{q}^*_u(X_{\neq u})$ respectively.   
\end{lemma}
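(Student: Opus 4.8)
The plan is to apply the Alphatron theorem (Theorem~\ref{theorem_alphatron}) with a carefully chosen kernel and feature map, and then translate the $L_2$ guarantee on the learned \emph{function} into an $L_2$ guarantee on the \emph{coefficient vector} using the bounded-degree structure of $\tilde q_u$. First I would set up the regression problem: the input is $X_{\neq u}\in\{\pm 1\}^{n-1}$ and the label is $Y'=(1+X_u)/2\in\{0,1\}$, so that $\mathbb E[Y'\mid X_{\neq u}] = \sigma(2\tilde q_u(X_{\neq u}))$, using the identity $\mathbb E[X_u\mid X_{\neq u}] = \tanh(\tilde q_u(X_{\neq u}))$ noted in the proof of Lemma~\ref{appendix_lemma_partial_potental_bound}. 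Thus $\mathcal U = \sigma$ is the known $L$-Lipschitz non-decreasing link with $L=1/2$. For the feature map I would take the monomial embedding $\phi(X_{\neq u})$ whose coordinates are $X_I$ over all subsets $I\subseteq\mathcal N_2(u)$ (size at most $d_2$); the associated kernel is computable in time $\mathrm{poly}(n)$, and $\|\phi(X_{\neq u})\|\le 2^{d_2/2}=:B_1$. Writing $\tilde q_u = q_u + q_\epsilon$ as in Eq.~(\ref{eq_q_u_with_error_terms}), the ``true'' linear part is $v$ = coefficient vector of $q_u$ restricted to subsets of $\mathcal N_2(u)$, and the residual $q_\epsilon(X_{\neq u})$ plays the role of the Alphatron error term $\epsilon(X')$; by Lemma~\ref{appendix_lemma_partial_potental_bound} we have the pointwise bound $|q_\epsilon|\le\epsilon'$, so $\epsilon_1=\epsilon'$ and $\epsilon_b\le\epsilon'^2$, and $\|v\|\le B_2$ with $B_2=\Ord{\beta 2^{d_2/2}}$ since $q_u$ has at most $2^{d_2}$ monomials each bounded by $\beta$ (Lemma~6.1 of~\cite{bresler2019learning}).

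Next I would invoke Theorem~\ref{theorem_alphatron} directly. With the parameters above it outputs, with probability $1-\delta$, a hypothesis $h$ with
\begin{eqnarray}
\mathbb E_{X_{\neq u}}\!\left[\left(h(X_{\neq u}) - \sigma(2\tilde q_u(X_{\neq u}))\right)^2\right] \le \Ord{\mathit{err}}, \nonumber
\end{eqnarray}
where $\mathit{err} = \epsilon' + \epsilon'\sqrt[4]{\log(1/\delta)/M} + \beta 2^{d_2/2}\sqrt{\log(1/\delta)/M}$ after substituting $L=1/2$, $B_1=2^{d_2/2}$, $B_2=\Ord{\beta 2^{d_2/2}}$, $\epsilon_1=\epsilon_b^{1/2}=\epsilon'$. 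Since Alphatron's hypothesis has the form $h = \sigma(2\,\langle w,\phi(\cdot)\rangle)$ for some $w$ in the span of the feature map (a kernel combination of the sample points), I would \emph{define} $\tilde q_u^*(X_{\neq u}) := \langle w,\phi(X_{\neq u})\rangle$, a polynomial supported on subsets of $\mathcal N_2(u)$, hence of degree $\le d_2$.

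The remaining step — and the main obstacle — is to convert the $L_2$ bound on $\sigma(2\tilde q_u^*) - \sigma(2\tilde q_u)$ into the $L_2$ bound on the coefficient vectors $\|\tilde q_u^* - \tilde q_u\|_2$ claimed in the statement. This requires two sub-arguments. First, invert the sigmoid: $\sigma$ is bi-Lipschitz on any bounded interval, and since both $\tilde q_u$ and $\tilde q_u^*$ can be taken bounded by $\beta' = \beta+\epsilon'$ (for $\tilde q_u$ this is Lemma~\ref{appendix_lemma_partial_potental_bound}; for $\tilde q_u^*$ one either truncates or argues it lies in the same range), one gets $\mathbb E[(\tilde q_u^* - \tilde q_u)(X_{\neq u})^2] \le \Ord{\mathit{err}/(1-\tanh^2\beta')^2}$ via the derivative $\sigma'(2t) \ge \tfrac12(1-\tanh^2\beta')$ — this is where the $(1-\tanh^2(\beta'))^2$ factor in the denominator appears. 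Second, pass from the function $L_2$ norm under $\widetilde{\mathcal D}$ to the coefficient $\ell_2$ norm: because $\tilde q_u^* - \tilde q_u$ is a polynomial on $\{\pm1\}^{n-1}$ of degree $\le d_2$ (both pieces live on subsets of $\mathcal N_2(u)$), its Fourier/monomial coefficients are orthonormal under the \emph{uniform} measure, so $\|\tilde q_u^* - \tilde q_u\|_2^2 = \mathbb E_{\mathrm{unif}}[(\tilde q_u^* - \tilde q_u)^2]$; to replace the uniform measure by $\widetilde{\mathcal D}$ one lower-bounds the density, using that every configuration of the relevant $d_2+1$ coordinates has probability at least $\sigma(-2\beta')^{d_2}\cdot 2^{-\Ord{1}}$ (as in Lemma~\ref{lemma_appendix_conditional_probability_bound}), which contributes the $2^{d_2}\sigma(-2\beta')^{d_2}$ factor in the denominator. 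Combining these two reductions with the Alphatron bound yields
\begin{eqnarray}
\|\tilde q_u^* - \tilde q_u\|_2^2 \le \Ord{\frac{\mathit{err}}{2^{d_2}\sigma(-2\beta')^{d_2}(1-\tanh^2(\beta'))^2}}, \nonumber
\end{eqnarray}
which is the claim. I expect the density-comparison step (controlling the change of measure between $\widetilde{\mathcal D}$ and the uniform measure on the restricted coordinates, and making sure the marginal lower bound $\sigma(-2\beta')^{d_2}$ genuinely holds for $\widetilde{\mathcal D}$ rather than only for the RBM $\mathcal D$) to be the most delicate bookkeeping, since $\widetilde{\mathcal D}$ is only known to be close to $\mathcal D$ in $L_p$ or bit-flip distance, not itself an RBM.
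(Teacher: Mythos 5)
Your proposal follows essentially the same route as the paper's proof: run Alphatron with the monomial feature map over subsets of $\mathcal N_2(u)$ (so $B_1=2^{d_2/2}$), treat $q_\epsilon$ from Eq.~(\ref{eq_q_u_with_error_terms}) as the Alphatron noise term with $\epsilon_1=\epsilon'$ and $\epsilon_b\leq \epsilon'^2$ via Lemma~\ref{appendix_lemma_partial_potental_bound}, invert the link function using the derivative lower bound $1-\tanh^2(\beta')$ on $[-\beta',\beta']$, and pass from the function $L_2$ error under $\widetilde{\mathcal D}$ to the coefficient $\ell_2$ norm by lower-bounding the probability of each configuration of $\mathcal N_2(u)$, which produces the $2^{d_2}\sigma(-2\beta')^{d_2}$ factor. (Your $\{0,1\}$ label with link $\sigma$ versus the paper's $\{\pm 1\}$ label with link $\tanh$ is an immaterial affine rescaling; yours is in fact the more literal reading of Theorem~\ref{theorem_alphatron}.) Two points of divergence. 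First, your bound $B_2=\Ord{\beta 2^{d_2/2}}$, obtained by counting $2^{d_2}$ coefficients each of size at most $\beta$, is too loose to recover the stated $err$: it turns the third term $B_1B_2L\sqrt{\log(1/\delta)/M}$ into $\Ord{\beta 2^{d_2}\sqrt{\log(1/\delta)/M}}$ rather than $\beta 2^{d_2/2}\sqrt{\log(1/\delta)/M}$. The paper instead applies Parseval: since $\vert q_u(x)\vert \leq \beta$ pointwise and the monomials are orthonormal under the uniform measure on the cube, the coefficient vector satisfies $\Vert v\Vert_2\leq \beta$, so $B_2=\beta$ and the stated $err$ follows. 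Second, the step you flag as the most delicate --- that the marginal lower bound $\tilde p(x_{\mathcal N_2(u)})\geq \sigma(-2\beta')^{d_2}$ holds for $\widetilde{\mathcal D}$ itself and not only for the RBM distribution $\mathcal D$ --- is closed in the paper by exactly the tool you already invoked: Lemma~\ref{appendix_lemma_partial_potental_bound} gives $\vert \tilde q_u\vert \leq \beta'$, so the argument of Lemma~6.1 of \cite{bresler2019learning} yields $\tilde p(x_u\vert x_{\neq u})=\sigma(2\tilde q_u)\geq \sigma(-2\beta')$ for $\widetilde{\mathcal D}$, and iterating over the $d_2$ neighbors gives the marginal bound; no separate change-of-measure between $\widetilde{\mathcal D}$ and $\mathcal D$ is needed. (Both you and the paper leave implicit that the learned $\tilde q_u^*$ must also be confined to a bounded interval for the link inversion; the truncation you mention handles this.)
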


\begin{proof}
We use the Alphatron algorithm as in Theorem \ref{theorem_alphatron} with the following setting.
Set  $X' = \left(\prod_{s\in S}x_S\right)_{S\subset \mathcal N_2(u)}$ and $Y'=X_u$.
Note that $\Vert X' \Vert_2 \leq 2^{d_2/2}$ as there are $2^{d_2}$ terms at most in $X'$, i.e., $B_1=2^{d_2/2}$.
Set the function $\mathcal U(x)=\tanh(x)$, which is a $1$-Lipschitz function, i.e., $L=1$.
Since $|q_u(X_{\neq u})|\leq \beta$, 
by using Parseval's theorem, we have
$ \Vert q_u  \Vert_2 \leq \beta $.
Note that $q_u$ corresponds to the vector $v$ and $B_2=\beta'$ in the Alphatron algorithm. 
By using Eq.~(\ref{eq_q_u_with_error_terms}), we have  
\begin{eqnarray}
\mathcal U (\tilde {q}_u(X_{\neq u}))= \tanh({q}_u(X_{\neq u})+ q_\epsilon(X_{\neq u}) ).  
\end{eqnarray}
From Lemma \ref{appendix_lemma_partial_potental_bound}, we have $\epsilon_1 \leq \epsilon' $, $\epsilon_b \leq \epsilon'^2.$
Under this setting, the Alphatron algorithm returns $\tilde{q}^*_u$ such that
\begin{eqnarray}
\mathbb E[(\tanh( \tilde{q}^*_u(X ))- \tanh(\tilde{q}_u( X) ))^2] 
\leq \Ord{\epsilon'   +  \epsilon'   \sqrt[4]{\frac{\log(1/\delta)}{M}}+ \beta2^{d_2/2}\sqrt{\frac{\log(1/\delta)}{M}}}: = \Ord{err}
\end{eqnarray}
with probability $1-\delta.$ This is from Theorem 6.8 in Ref.~\cite{bresler2019learning}. Since the derivative of tanh on $[-\beta',\beta']$ is lower bounded by $1-\tanh^2(\beta')$ and it is monotonically increasing, we obtain
\begin{eqnarray}
\mathbb E \left[\left(\tilde{q}^*_u(X) - \tilde{q}_u( X) \right)^2\right] =\Ord{\frac{err}{(1-\tanh^2(\beta'))^2} }  
\end{eqnarray}
By using similar proof in Lemma 6.1 in \cite{bresler2019learning}, we see $\tilde{p}(x_u) = \sigma(2\tilde{q}_u) \geq  \sigma(-2\beta').$ Iterating using the proof for $d_2$ nodes, we have $\tilde{p}(x_{\mathcal N(u)}) \geq \sigma(-2\beta')^{d_2}$. By using Lemma 6.9 in Ref.~\cite{bresler2019learning}, then we can obtain  $\sigma(-2\beta')^{d_2}=\frac{\delta}{2^{d_2}},$ with $\delta = 2^{d_2}(\sigma(-2\beta')^{d_2})$,  
\begin{eqnarray}
\Vert \tilde{q}^*_u - \tilde{q}_u \Vert_2^2\leq \Ord{\frac{err}{2^{d_2}\sigma(-2\beta')^{d_2}(1-\tanh^2(\beta'))^2 }
}.\label{eq_patial_q_2_norm_bound}
\end{eqnarray} 
\end{proof}

\subsection{Quantum state learning}

By employing an iterative application of the Alphatron algorithm to each visible node, we can estimate the coefficients of the potential polynomial of the induced Markov Random Field (MRF). Subsequently, we can recover the probability distribution using these parameters, ensuring bounded errors. When shifting the probability $\delta$ to $\delta' = \delta/n$ in Lemma \ref{lemma_appendix_potentail_2_norm}, the overall success probability remains at $\delta$ by the union bound. 
We can obtain $\tilde{q}^*$ by selecting the coefficient of a monomial $x_S$ equal to (arbitrarily chosen) the corresponding coefficient of $x_{S\setminus \{u\}}$ in $\tilde{q}^*_u$ for some $u$ within the set $S$. Finally, we have 
\begin{eqnarray}
\Vert \tilde{q}^* - \tilde{q} \Vert_2^2= \Ord{\frac{err(n)}{2^{d_2}(\sigma(-2\beta')^{d_2})(1-\tanh^2(\beta'))^2 }}, \label{eq_bound_potential_coefficient_bound}  
\end{eqnarray}
where $err(n) = \epsilon'   +  \epsilon'  \sqrt[4]{\frac{\log( n /\delta)}{M}}+ \beta 2^{d_2/2}\sqrt{\frac{\log( n /\delta)}{M}}.$

Based on the $L_2$ norm distance between the estimated and the true coefficients of the potential of the induced MRF, we show that the $L_1$ distance between the estimated magnitude and the actual magnitude of a state can be bounded as follows.

\begin{lemma}\label{theorem_parameter_leanring_main}
Given $M$ copies of an unknown $n$-qubit quantum state close to a quantum state based on an $(\alpha,\beta)$-nondegenerate LC-RBM with two-hop degree $d_2$. Given $M$ copies of this state, we can learn the distribution probability $\tilde{p}(x)$ of the unknown quantum state  by Alphatron algorithm  with probability at least $1-\delta$  such that 
\begin{eqnarray}
\sum_x\left\vert \tilde{p}(x) -\tilde{p}^*(x) \right\vert
=  \Ord{\frac{\sqrt{n} \sqrt{err(n)}}{\sigma(-2\beta')^{d_2/2}(1-\tanh^2(\beta'))  } + \left(\sqrt{n2^{d_2}} + n\right)\epsilon'},
\end{eqnarray}
where $\tilde{p}^*(x)$ is estimation of $\tilde{p}(x)$, where $\epsilon$ satisfies the bound in Lemma \ref{lemma_appendix_conditional_probability_bound},   $err(n) =  \epsilon'   +   \epsilon'   \sqrt[4]{\frac{\log( n /\delta)}{M}}+ \beta 2^{d_2/2} \sqrt{\frac{\log( n /\delta)}{M}}. $
\end{lemma}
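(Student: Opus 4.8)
The plan is to reduce recovering the magnitudes $\tilde p(x)=e^{\tilde q(x)}/Z$ to estimating, in an $L_2$ sense, the coefficient vector of the potential $\tilde q$ of the MRF induced on the visible nodes, and then to turn $L_2$-closeness of potentials into $L_1$-closeness of the corresponding Gibbs distributions. First I would invoke Lemma~\ref{lemma_appendix_potentail_2_norm} once for every visible node $u\in[n]$, running Alphatron with failure probability $\delta/n$ and taking a union bound, so that all $n$ per-node guarantees hold simultaneously with probability at least $1-\delta$. The input precision $\epsilon$ required there is the conditional-expectation error, supplied by Lemma~\ref{lemma_appendix_conditional_probability_bound} (which expresses it through $\epsilon_p$, or through $\varrho$ in the coherent-bit-flip case), while the bound $|\tilde q_u|\le\beta'=\beta+\epsilon'$ and the passage $\epsilon\mapsto\epsilon'$ come from Lemma~\ref{appendix_lemma_partial_potental_bound}. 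Assembling $\tilde q^*$ out of the per-node outputs $\tilde q^*_u$ (for each monomial $x_S$, read its coefficient off the $\tilde q^*_u$ of some $u\in S$) yields Eq.~(\ref{eq_bound_potential_coefficient_bound}), i.e.\ $\Vert\tilde q^*-\tilde q\Vert_2^2 = O\big(err(n)/(2^{d_2}\sigma(-2\beta')^{d_2}(1-\tanh^2\beta')^2)\big)$.

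Next I would promote this $L_2$ bound on coefficients to a pointwise bound $\max_{x\in\{\pm1\}^n}|\tilde q(x)-\tilde q^*(x)|\le\Gamma$, and this is where the bounded two-hop degree is essential. Using that each partial potential $\tilde q_u$ is $\epsilon'$-close in $L_2$ to the degree-$d_2$ RBM partial potential $q_u$ (Lemma~\ref{appendix_lemma_partial_potental_bound}), split $\tilde q=\tilde q^{\mathrm{rel}}+\tilde q^{\mathrm{res}}$, where $\tilde q^{\mathrm{rel}}$ keeps only the $O(n2^{d_2})$ monomials supported on two-hop-neighbourhood cliques (the only support $\tilde q^*$ occupies) and $\tilde q^{\mathrm{res}}$, built from the residual pieces $q_\epsilon$ of the partial potentials (each of sup-norm $\le\epsilon'$), is the remainder. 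Cauchy--Schwarz over the $O(n2^{d_2})$ monomials turns the $L_2$ bound on $\tilde q^{\mathrm{rel}}-\tilde q^*$ into $\max_x|\tilde q^{\mathrm{rel}}(x)-\tilde q^*(x)|\le O(\sqrt{n2^{d_2}})\,\Vert\tilde q^{\mathrm{rel}}-\tilde q^*\Vert_2$, where the $\sqrt{2^{d_2}}$ cancels the $2^{-d_2/2}$ in $\Vert\tilde q^*-\tilde q\Vert_2$, leaving the term $O\big(\sqrt n\,\sqrt{err(n)}/(\sigma(-2\beta')^{d_2/2}(1-\tanh^2\beta'))\big)$; bounding $\tilde q^{\mathrm{res}}$ pointwise by telescoping along single bit-flips (flipping coordinate $u$ changes a multilinear polynomial by $2x_u\partial_u(\cdot)$, here of size $O(2^{d_2/2}\epsilon')$) contributes the $O((\sqrt{n2^{d_2}}+n)\epsilon')$ part of $\Gamma$.

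Finally I would pass from $\max_x|\tilde q(x)-\tilde q^*(x)|\le\Gamma$ to the claimed $L_1$ estimate via the standard partition-function sandwich for Gibbs measures: since $Z/Z^*=\mathbb E_{x\sim\tilde p^*}[e^{\tilde q(x)-\tilde q^*(x)}]\in[e^{-\Gamma},e^{\Gamma}]$ with $\tilde p^*(x)=e^{\tilde q^*(x)}/Z^*$, one has $\tilde p(x)/\tilde p^*(x)=e^{\tilde q(x)-\tilde q^*(x)}\,Z^*/Z\in[e^{-2\Gamma},e^{2\Gamma}]$ for every $x$, so $\sum_x|\tilde p(x)-\tilde p^*(x)|=\sum_x\tilde p^*(x)|\tilde p(x)/\tilde p^*(x)-1|\le e^{2\Gamma}-1=O(\Gamma)$, and substituting the bound on $\Gamma$ gives the statement; the $\mathcal C(\varrho)$ case is identical with the coherent-bit-flip versions of the input lemmas. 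I expect the second step to be the genuine obstacle: a generic MRF potential can carry up to $2^n$ nonzero coefficients, so a naive $L_2$-to-$L_\infty$ conversion would blow up by $2^{n/2}$; the degree-$d_2$ structure is exactly what confines all but an $O(\epsilon')$-small, uniformly bounded residual of $\tilde q$ to $O(n2^{d_2})$ monomials, and keeping the leading prefactor at $\sqrt n$ rather than $n$ relies further on each relevant monomial involving at most $d_2$ of the $n$ coordinates.
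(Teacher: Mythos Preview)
Your plan is correct and follows essentially the same three-stage strategy as the paper's proof: (i) run Alphatron for each $u\in[n]$ with failure probability $\delta/n$ and union-bound to obtain Eq.~(\ref{eq_bound_potential_coefficient_bound}); (ii) promote the $L_2$ bound on coefficients to a uniform bound $\max_x|\tilde q(x)-\tilde q^*(x)|\le\Gamma$ by Cauchy--Schwarz over the $O(n2^{d_2})$ structure-relevant monomials, with the residual controlled by the per-node $\epsilon'$; (iii) turn the pointwise potential bound into an $L_1$ bound on the Gibbs distributions. The decomposition $\tilde q=\tilde q^{\mathrm{rel}}+\tilde q^{\mathrm{res}}$ and the telescoping you propose are exactly what underlies the paper's $n\epsilon'$ term in Eq.~(\ref{eq_apendix_potential_distance}), though the paper presents it more tersely (and, like you, leaves implicit that an additive constant in the potential is irrelevant to the Gibbs measure, which is what makes the telescoping argument well-posed).

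The one genuine difference is in step (iii). You use the partition-function sandwich $Z/Z^*\in[e^{-\Gamma},e^{\Gamma}]$ and conclude $\sum_x|\tilde p(x)-\tilde p^*(x)|\le e^{2\Gamma}-1=O(\Gamma)$, which needs $\Gamma=O(1)$ for the last step. The paper instead writes
\[
\sum_x|\tilde p(x)-\tilde p^*(x)|\le \frac{2}{\tilde Z}\sum_x\bigl|e^{\tilde q(x)}-e^{\tilde q^*(x)}\bigr|\le \frac{2}{\tilde Z}\sum_x\max\{e^{\tilde q(x)},e^{\tilde q^*(x)}\}\,|\tilde q(x)-\tilde q^*(x)|,
\]
uses $\sum_x\max\{e^{\tilde q(x)},e^{\tilde q^*(x)}\}\le \tilde Z+\tilde Z^*$, and handles the two cases $\tilde Z\gtrless\tilde Z^*$ symmetrically to get the unconditional bound $4\Gamma$. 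This is marginally cleaner, but in the regime of interest (where the lemma gives a nontrivial $L_1$ bound) your $\Gamma$ is already small and the two arguments are equivalent.
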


\begin{proof}

After obtaining the coefficients of the induced MRF's potential, we can obtain the probability distribution of the quantum state. We now analyze the distance between the actual and the estimated probabilities.

First, we bound the distance between the actual and the estimated potential as follows 
\begin{eqnarray}
\left\vert  \tilde{q}(x) - \tilde{q}^*(x) \right\vert   = 
\sum_{I} \left\vert  \tilde{q}^*_I -{q}_I \right\vert x_I + n\epsilon(x) 
&  \leq  & 
\sum_{I} \left\vert \tilde{q}^*_I -{q}_I \right\vert  +  n\epsilon' \\   
& \leq &   \sqrt{n2^{d_2}}  \left\Vert  \tilde{q}^* -{q} \right\Vert_2 + n\epsilon' \nonumber\\
& \leq & \sqrt{n2^{d_2}}  \left\Vert  \tilde{q}^* -\tilde{q} +q_{\epsilon} \right\Vert_2 + n\epsilon'  \nonumber\\
& \leq & \sqrt{n2^{d_2}}  (\left\Vert  \tilde{q}^* -\tilde{q} \right\Vert_2 + \left\Vert q_{\epsilon} \right\Vert_2 )+  n\epsilon'  
\nonumber\\
& \leq & \sqrt{n2^{d_2}}  (\left\Vert  \tilde{q}^* -\tilde{q} \right\Vert_2 + \epsilon' )+  n\epsilon'  
\label{eq_apendix_potential_distance}
\end{eqnarray}
where $I \subset [n]$ with size at most $d_2+1$, then there are $n2^{d_2}$ monomials of $\tilde{q}^*(x)$ and ${q}(x)$, the right side of the first line is obtained using the triangle inequality and the second last inequality is according to the fact that $\Vert x \Vert_1 \leq \sqrt{n} \Vert x \Vert_2$ for a vector $x$ with dimension $n$.

Let the partition functions $\tilde{Z}:=\sum_x e^{\tilde{q}(x)}$, 
$\tilde{Z}^*:=\sum_x e^{\tilde{q}^*(x)}$.
First, assume $\tilde{Z}\geq \tilde{Z}^*$, 
then we have 

\begin{eqnarray}
\sum_x \left\vert \tilde{p}(x) -\tilde{p}^*(x) \right\vert
& = & \sum_x\left\vert \frac{e^{\tilde{q}(x)}}{\tilde{Z}}-\frac{e^{\tilde{q}^*(x)}}{\tilde{Z}^*}
\right\vert \nonumber\\
& = & \sum_x\left\vert \frac{e^{\tilde{q}(x)}}{\tilde{Z}}- \frac{e^{\tilde{q}^*(x)}}{\tilde{Z}} +\frac{e^{\tilde{q}^*(x)}}{\tilde{Z}} -\frac{e^{\tilde{q}^*(x)}}{\tilde{Z}^*}
\right\vert \nonumber\\
&\leq &  
\frac{1}{\tilde{Z}}\sum_x\left\vert {e^{\tilde{q}(x)}}- {e^{\tilde{q}^*(x)}} \right\vert +\left\vert \frac{1}{\tilde{Z}} -\frac{1}{\tilde{Z}^*}
\right\vert \sum_x e^{\tilde{q}^*(x)}, \label{eq_appendix_trace_distance}
\end{eqnarray}
where the inequality is obtained by using the triangle inequality. 
We now focus on the second term of Eq.~(\ref{eq_appendix_trace_distance})
\begin{eqnarray}
\left\vert \frac{1}{\tilde{Z}} -\frac{1}{\tilde{Z}^*}
\right\vert \sum_x e^{\tilde{q}^*(x)}
= \left\vert \frac{1}{\tilde{Z}} -\frac{1}{\tilde{Z}^*}
\right\vert {\tilde{Z}^*} = \left\vert \frac{\tilde{Z}^*-\tilde{Z}}{\tilde{Z}} 
\right\vert 
= 
\frac{1}{\tilde{Z}}\left\vert \sum_x (e^{\tilde{q}^*(x)}-  e^{\tilde{q}^*(x)}) \right\vert  \leq  \frac{1}{\tilde{Z}}\sum_x\left\vert {e^{\tilde{q}(x)}}- {e^{\tilde{q}^*(x)}} \right\vert,
\end{eqnarray}
where the inequality is obtained by using the triangle inequality. 
Combining the Eq.~(\ref{eq_appendix_trace_distance}), we have 
\begin{eqnarray}
\sum_x \left\vert \tilde{p}(x) -\tilde{p}^*(x) \right\vert  
& \leq &  
\frac{2}{\tilde{Z}}\sum_x\left\vert {e^{\tilde{q}(x)}}- {e^{\tilde{q}^*(x)}}  \right\vert  \nonumber\\
& \leq &  
\frac{2}{\tilde{Z}}\sum_x \max\{ {e^{\tilde{q}(x)}}, {e^{\tilde{q}^*(x)}} \}\left\vert {{\tilde{q}(x)}}- {{\tilde{q}^*(x)}}  \right\vert  \nonumber\\  
& \leq &  
\frac{2}{\tilde{Z}} \sqrt{n2^{d_2}} \left\Vert  \tilde{q}^* -\tilde{q} \right\Vert_2  \sum_x \max\{ {e^{\tilde{q}(x)}}, {e^{\tilde{q}^*(x)}} \}\nonumber\\ 
& \leq &  
\frac{2 \sum_x \max\{ {e^{\tilde{q}(x)}}, {e^{\tilde{q}^*(x)}} \}}{\frac{1}{2}(\tilde{Z} +\tilde{Z}^* ) } \left( \sqrt{n2^{d_2}} \left(\left\Vert  \tilde{q}^* -\tilde{q} \right\Vert_2 + \epsilon'\right) +  n\epsilon' \right) \nonumber\\
& \leq &  4 \sqrt{n2^{d_2}} \left(\left\Vert  \tilde{q}^* -\tilde{q} \right\Vert_2 +\epsilon'\right)+ 4n \epsilon'
\label{eq_bound_potential}  
\end{eqnarray}
where the right side of the second line is obtained by using the Cauchy inequality for a convex function $f=e^x$ which is $f'(b)\leq \frac{f(a)-f(b)}{a-b}\leq f'(a)$ for $a>b.$ 
The third inequality results from  Eq.~(\ref{eq_apendix_potential_distance}), and the last inequality is obtained by using the fact that $\tilde{Z}\geq \tilde{Z}^*$ and $\sum_x \max\{ e^{\tilde{q}(x)}, {e^{\tilde{q}^*(x)}} \} \leq \sum_x \left( e^{\tilde{q}(x)} + e^{\tilde{q}(x)} \right).$ Similarly, when $\tilde{Z}\leq \tilde{Z}^*$, we can obtain the same bound in Eq.~(\ref{eq_bound_potential}). 
Therefore, by using Eq.~(  \ref{eq_bound_potential_coefficient_bound}),  we have 
\begin{eqnarray}
\sum_x\left\vert \tilde{p}(x) -\tilde{p}^*(x) \right\vert
& \leq &   4 \sqrt{n2^{d_2}} \left(\left\Vert  \tilde{q}^* -\tilde{q} \right\Vert_2 +\epsilon'\right)+ 4n \epsilon'\nonumber\\
& = & \Ord{\frac{\sqrt{n} \sqrt{err(n)}}{\sigma(-2\beta')^{d_2/2}(1-\tanh^2(\beta'))  } + \left(\sqrt{n2^{d_2}} + n\right)\epsilon'
} 
\end{eqnarray}
where $err(n) = \epsilon'   +  \epsilon'  \sqrt[4]{\frac{\log( n /\delta)}{M}}+ \beta 2^{d_2/2} \sqrt{\frac{\log( n /\delta)}{M}}$  and $\beta'=\beta + \epsilon'.$
\end{proof}

Now we show that the magnitude of a quantum state in class $\mathcal C(\epsilon_p)\cup \mathcal C(\varrho)$ can be estimated if $\epsilon_p$ or $\epsilon_\varrho$ is small enough. 

\begin{theorem}[Quantum state learning]\label{thmLearnAll}
Suppose we are given $M$ copies of an unknown $n$-qubit quantum state in the union class of $\mathcal C(\epsilon_p)$ and  $ \mathcal C(\varrho)$ Defined in Def.~\ref{definition_quantum_classes_bit_flip_appendix} and \ref{definition_quantum_classes_appendix}, and the estimated two-hop neighbor's structure of the RBM representation be known. 
Let $\vartheta =\sigma(-2\beta)^{2d_2}(1-\tanh(\beta)^2)^3$, 
if the probability distribution  satisfies one of the following  constraints
\begin{eqnarray}
& (1).&  L_p\ \text{distance}:~  
\epsilon_p =  \Ord{\frac{ \vartheta 
\epsilon_t^2}{\left(\sqrt{n2^{d_2}} + n\right)^2 2^{(n-1)(1-1/p)}}};\label{eq_main_theorem_qse_frbm_1}
\\
&(2). & \text{Bit flip probability~} : \varrho = \Ord{
\frac{\vartheta \epsilon_t^2}{n\left(\sqrt{n2^{d_2}} + n\right)^2}
}, \label{eq_main_theorem_qse_frbm_3}     
\end{eqnarray}
and 
\begin{eqnarray}
M= \Ord{\frac{\beta^2 2^{d_2} n^2\log(1/\delta)}{\vartheta (1-\tanh^2(\beta)) \epsilon_t^4}
} 
\end{eqnarray}
we can find a polynomial $\tilde{q}^*$  with probability $1-\delta$ such that
\begin{eqnarray}
\sum_x\left\vert \tilde{p}(x) -\tilde{p}^*(x) \right\vert
=  \Ord{\epsilon_t} 
\end{eqnarray}
where $ p(x) = \frac{e^{q(x)}}{Z}$.
\end{theorem}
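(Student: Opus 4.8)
The proof chains together Lemma~\ref{theorem_parameter_leanring_main}, Lemma~\ref{appendix_lemma_partial_potental_bound} and Lemma~\ref{lemma_appendix_conditional_probability_bound}, and then tunes the two free parameters---the number of copies $M$ and the closeness $\epsilon_p$ (resp.\ the bit-flip probability $\varrho$)---so that every error term on the right-hand side of Lemma~\ref{theorem_parameter_leanring_main} is driven below $\Ord{\epsilon_t}$. Since any state in $\mathcal C(\epsilon_p)\cup\mathcal C(\varrho)$ is associated with an $(\alpha,\beta)$-nondegenerate LC-RBM-NNQ state, Lemma~\ref{lemma_appendix_conditional_probability_bound} already controls the per-node conditional-expectation gap $\epsilon=\vert\mathbb E_{\mathcal D}[X_u\vert X_{\neq u}]-\mathbb E_{\widetilde{\mathcal D}}[X_u\vert X_{\neq u}]\vert$ needed to invoke the other two lemmas, and the assumed knowledge of the two-hop structure is precisely what lets Alphatron (Lemma~\ref{lemma_appendix_potentail_2_norm}, Lemma~\ref{theorem_parameter_leanring_main}) work with only $2^{d_2}$ monomials per visible node. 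Concretely, Lemma~\ref{theorem_parameter_leanring_main} bounds $\sum_x\vert\tilde p(x)-\tilde p^*(x)\vert$ by
\[
\Ord{\frac{\sqrt{n}\,\sqrt{err(n)}}{\sigma(-2\beta')^{d_2/2}(1-\tanh^2(\beta'))}+\bigl(\sqrt{n2^{d_2}}+n\bigr)\epsilon'},\qquad err(n)=\epsilon'+\epsilon'\sqrt[4]{\tfrac{\log(n/\delta)}{M}}+\beta 2^{d_2/2}\sqrt{\tfrac{\log(n/\delta)}{M}},
\]
with $\beta'=\beta+\epsilon'$ and $\epsilon'=\epsilon/(1-(\tanh(\beta)+\epsilon)^2)$. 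I split this into a \emph{statistical} part (the $\beta 2^{d_2/2}\sqrt{\log(n/\delta)/M}$ contribution inside $err(n)$) and an \emph{intrinsic} part (the $\epsilon'$ contributions), and treat them separately.

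\textbf{Choosing $M$.} Once $M\ge\log(n/\delta)$ (trivially true for the final choice), $\sqrt{err(n)}\le\sqrt{2\epsilon'}+\sqrt{\beta 2^{d_2/2}}\,(\log(n/\delta)/M)^{1/4}$, so the statistical contribution is $\Ord{\sqrt{n}\sqrt{\beta 2^{d_2/2}}(\log(n/\delta)/M)^{1/4}/(\sigma(-2\beta')^{d_2/2}(1-\tanh^2\beta'))}$. Setting this below $c\,\epsilon_t$ and raising to the fourth power forces $M=\Omega\!\big(n^2\beta^2 2^{d_2}\log(n/\delta)/(\sigma(-2\beta')^{2d_2}(1-\tanh^2\beta')^4\epsilon_t^4)\big)$. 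Because the eventual $\epsilon_p,\varrho$ make $\epsilon$, hence $\epsilon'$, hence $d_2\epsilon'$, negligibly small, I may use $\sigma(-2\beta')\ge e^{-2\epsilon'}\sigma(-2\beta)$ and $1-\tanh^2\beta'\ge e^{-2\epsilon'}(1-\tanh^2\beta)$ to replace $\beta'$ by $\beta$ at the cost of absolute constants; since $\sigma(-2\beta)^{2d_2}(1-\tanh^2\beta)^4=\vartheta(1-\tanh^2\beta)$, this reproduces $M=\Ord{\beta^2 2^{d_2}n^2\log(n/\delta)/(\vartheta(1-\tanh^2\beta)\epsilon_t^4)}$, which is the stated bound up to the harmless replacement $\log(n/\delta)\to\log(1/\delta)$ (cf.\ the $\tilde{\mathcal O}$ in the informal Theorem~\ref{Theorem_quantum_state_learning}).

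\textbf{Choosing $\epsilon_p$ and $\varrho$.} For the intrinsic part I need $\epsilon'$ small enough that both $\big(\sqrt{n2^{d_2}}+n\big)\epsilon'=\Ord{\epsilon_t}$ and $\sqrt{n\epsilon'}/\big(\sigma(-2\beta')^{d_2/2}(1-\tanh^2\beta')\big)=\Ord{\epsilon_t}$ (the $\epsilon'\sqrt[4]{\log(n/\delta)/M}$ piece is dominated by $\epsilon'$). The choice $\epsilon'=\Ord{\sigma(-2\beta)^{d_2}(1-\tanh^2\beta)^2\epsilon_t^2/(\sqrt{n2^{d_2}}+n)^2}$ satisfies both, using $\epsilon_t\le1$ and $(\sqrt{n2^{d_2}}+n)^2\ge n$. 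Pushing this backward, Lemma~\ref{appendix_lemma_partial_potental_bound} gives $\epsilon'\le 2\epsilon/(1-\tanh^2\beta)$ for small $\epsilon$, so it suffices that $\epsilon=\Ord{\sigma(-2\beta)^{d_2}(1-\tanh^2\beta)^3\epsilon_t^2/(\sqrt{n2^{d_2}}+n)^2}=\Ord{\vartheta\epsilon_t^2/(\sigma(-2\beta)^{d_2}(\sqrt{n2^{d_2}}+n)^2)}$. Finally Lemma~\ref{lemma_appendix_conditional_probability_bound} reads $\epsilon\le 2^{(n-1)(1-1/p)+2}\epsilon_p/\sigma(-2\beta)^{d_2}$ in the $L_p$ case and $\epsilon\le 8n\varrho/\sigma(-2\beta)^{d_2}$ in the bit-flip case; inverting these, the $\sigma(-2\beta)^{d_2}$ factors cancel and one obtains exactly Eq.~(\ref{eq_main_theorem_qse_frbm_1}) and Eq.~(\ref{eq_main_theorem_qse_frbm_3}). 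The success probability $1-\delta$ is inherited directly from Lemma~\ref{theorem_parameter_leanring_main}, whose $\log(n/\delta)$ already carries the union bound over the $n$ per-node Alphatron runs.

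\textbf{Main difficulty.} The only real obstacle is bookkeeping: propagating the tower of factors $\sigma(-2\beta)^{d_2}$, $1-\tanh^2\beta$, $\beta$, $2^{d_2}$, $n$ and powers of $\epsilon_t$ consistently through the three lemmas, and verifying that the $\beta'\to\beta$ replacement is legitimate, i.e.\ that $d_2\epsilon'=o(1)$ holds under the final choice of $\epsilon_p$ (resp.\ $\varrho$). No new conceptual ingredient is needed beyond assembling the already-established estimates.
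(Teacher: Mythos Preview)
Your proposal is correct and follows essentially the same route as the paper: both start from Lemma~\ref{theorem_parameter_leanring_main}, use $\sqrt{a+b}\le\sqrt a+\sqrt b$ to split $err(n)$ into separate contributions, and then force each term below $\Ord{\epsilon_t}$ to read off the constraints on $M$, $\epsilon_p$, and $\varrho$. If anything, your write-up is more explicit than the paper's---the paper simply displays the max of the four terms and then defers the derivation of Eqs.~(\ref{eq_main_theorem_qse_frbm_1}) and~(\ref{eq_main_theorem_qse_frbm_3}) to ``a similar approach as in the proof of Theorem~\ref{theorem_appendix_conditional_probability}'', whereas you spell out the chain $\epsilon'\to\epsilon\to\epsilon_p$ (resp.\ $\varrho$) via Lemmas~\ref{appendix_lemma_partial_potental_bound} and~\ref{lemma_appendix_conditional_probability_bound} and flag the $\beta'\to\beta$ replacement and the $\log(n/\delta)$ versus $\log(1/\delta)$ discrepancy.
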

\begin{proof}
By using Theorem \ref{theorem_parameter_leanring_main}, if we require 
\begin{eqnarray}
\sum_x\left\vert \tilde{p}(x) -\tilde{p}^*(x) \right\vert
\in  \Ord{\frac{\sqrt{n}  \sqrt{2\epsilon'   +  2\epsilon'   \sqrt[4]{\frac{\log( n /\delta)}{M}}+ \beta 2^{d_2/2} \sqrt{\frac{\log( n /\delta)}{M}}}}{\sigma(-2\beta')^{d_2/2}(1-\tanh^2(\beta'))  } +\left(\sqrt{n2^{d_2}} + n\right)\epsilon'
} =\Ord{\epsilon_t}
\end{eqnarray}
where $\beta'=\beta + \epsilon'.$
Let $\chi = \sigma(-2\beta')^{d_2/2}(1-\tanh^2(\beta'))$.
By using the fact that $\sqrt{a +b} \leq \sqrt{a}+\sqrt{b}$ for $a,b>0$, then each one of the following inequality should be satisfied
\begin{eqnarray}
\max \left\{ \frac{\sqrt{2n\epsilon'}}{\chi}, 
\frac{\sqrt{2n\epsilon'} \sqrt[8]{\frac{\log( n /\delta)}{M}}}{\chi},
\frac{\sqrt{n\beta 2^{d_2/2} } \sqrt[4]{\frac{\log( n /\delta)}{M}}}{\chi },
\left(\sqrt{n2^{d_2}} + n\right)\epsilon'
\right\}
=  \Ord{\epsilon_t}.
\end{eqnarray}
Using a similar approach as in the proof of Theorem \ref{theorem_appendix_conditional_probability},  we can conclude that  the above inequality holds if either Eq.(\ref{eq_main_theorem_qse_frbm_1}) or (\ref{eq_main_theorem_qse_frbm_3}) satisfied, and the number of samples satisfies
\begin{eqnarray}
M & \geq & \tOrd{\frac{2^{d_2}\beta^2 n^2\log(1/\delta)}{\sigma(-2\beta)^{2d_2}(1-\tanh^2(\beta))^4 \epsilon_t^4 }}.
\end{eqnarray}
\end{proof}

\subsection{Quantum state partial learning}

Next, we show that, under certain conditions, with just $O(\log n)$ copies of a quantum state (from the class in Theorem \ref{thmLearnAll}), we can learn the conditional probability of a subset of qubits relative to the remaining qubits with good precision.
Let $J \subset [n]$ and $\bar{J}=[n]\setminus J$. As discussed before, the conditional probability $p(x_J\vert x_{\bar J}) = p(x_J\vert x_{\mathcal N_2(J) })$, where $\mathcal{N}_2(J)$ is the visible node set composed of two-hop neighbors of each node in $J$ but excluding the nodes in $J$. Let $J_{\text{t}}= J\cup \mathcal{N}_2(J)$, and  $q(X_{J_{\text{t}}})$ to be a polynomial which is a part of polynomial $q(X)$, where each monomial in $q(X_{J_{t}})$ contain at least one node in $J$.
Naturally we also define $\bar{q}(X_{J_{t}})=q(X)-q(X_{J_{t}})$.
Denote the coefficient vector of $q(X_{J_{t}})$ as $q({J_{t}}).$ We have the following theorem.
\begin{theorem}[Quantum state partial learning]\label{theorem_appendix_conditional_probability} Let an unknown $n$-qubit quantum state as in Theorem \ref{thmLearnAll}, and the estimated two-hop neighborhood structure of the RBM representation be known.
Let $J \subseteq [n]$, set $J_{\rm t} = J\cup \mathcal N_2(J)$, and $\vartheta =\sigma(-2\beta)^{2d_2}(1-\tanh(\beta)^2)^3, \epsilon_c, \delta \in(0,1)$.
Suppose we are given $M_c$ copies of the qubits indexed by $J_{\rm tot}$ of the state.
If the following constraint (1) is satisfied for state $\ket{\psi}$ or  constraint (2) is satisfied for state $\ket{\Psi}$, 
\begin{eqnarray}
& (1).&  L_p\ \text{distance}:~  
\epsilon_p= 
\Ord{\frac{\vartheta  \epsilon_c^2}{\left( \sqrt{ 2^{d_2}\vert J \vert } + \vert J \vert \right)^2 2^{(n-1)(1-{1}/{p})}}};\label{eq_theorem_lcp_rbm_1}
\\
&(2). & \text{Bit flip probability~} : \varrho = \Ord{
\frac{\vartheta \epsilon_c^2}{\left( \sqrt{ 2^{d_2}\vert J \vert } + \vert J \vert \right)^2 n }
}, \label{eq_theorem_lcp_rbm_2}  \end{eqnarray}
and the number of samples satisfies 
\begin{eqnarray}
M_c=\Ord{\frac{\beta^2 2^{d_2}\vert J \vert^2\log(1/\delta)}{\vartheta (1-\tanh^2(\beta)) \epsilon_c^4}}, 
\end{eqnarray}
we can find a polynomial $q^*(X_{J_t})$ with probability $1-\delta$ such that 
\begin{eqnarray}
\vert  \tilde p(x_J \vert x_{\mathcal N_2(J)}) -  \tilde{p}^*(x_J \vert x_{{\mathcal N_2(J)}}) \vert 
=  \Ord{\epsilon_c},\label{eq_condition_p_distance_theorem}
\end{eqnarray}
where $p(x_J \vert x_{\bar{J}})= \frac{ \exp( q(x_{J_{t}})) }{ \sum_{x_J} \exp (q(x_{J_{t}}))}.$
\end{theorem}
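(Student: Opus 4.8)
The plan is to mirror the proof of Theorem \ref{thmLearnAll}, but restricting attention to the sub-potential $q(X_{J_{\rm t}})$ that actually governs the conditional law, so that the effective ``number of variables'' becomes $|J|$ rather than $n$. The starting observation is the factorisation $q(X) = q(X_{J_{\rm t}}) + \bar q(X_{J_{\rm t}})$, where $\bar q$ collects exactly the monomials supported entirely on $\bar J$; since $\bar q$ is a function of $x_{\bar J}$ alone it cancels between numerator and denominator, giving $\tilde p(x_J\vert x_{\mathcal N_2(J)}) = e^{\tilde q(x_{J_{\rm t}})}/\sum_{x_J} e^{\tilde q(x_{J_{\rm t}})}$ (with $J_{\rm tot}=J_{\rm t}$). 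Every monomial of $q(X_{J_{\rm t}})$ contains at least one node $u\in J$ and hence, by the bounded two-hop degree, is supported on $\{u\}\cup\mathcal N_2(u)\subseteq J_{\rm t}$; counting over $u\in J$ shows $q(X_{J_{\rm t}})$ has at most $|J|\,2^{d_2}$ monomials, all of which are functions of the qubits indexed by $J_{\rm t}$ only, so the $M_c$ available copies suffice.

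First I would, for each $u\in J$, run the Alphatron algorithm exactly as in Lemma \ref{lemma_appendix_potentail_2_norm} with feature vector $X' = (\prod_{s\in S}x_s)_{S\subseteq\mathcal N_2(u)}$ and label $Y' = X_u$, using the copies of the $J_{\rm t}$-qubits; Lemma \ref{lemma_appendix_conditional_probability_bound} (with $n$ kept in the marginalisation factor $2^{(n-1)(1-1/p)}$, since that cost is unavoidable) controls the intrinsic error $\epsilon$ and hence $\epsilon' = \epsilon/(1-(\tanh(\beta)+\epsilon)^2)$ through Lemma \ref{appendix_lemma_partial_potental_bound}. Taking a union bound over the $|J|$ invocations (replacing $\delta$ by $\delta/|J|$) yields $\|\tilde q^*_u - \tilde q_u\|_2^2 = \Ord{err(|J|)/(2^{d_2}\sigma(-2\beta')^{d_2}(1-\tanh^2(\beta'))^2)}$ with $err(|J|) = \epsilon' + \epsilon'\sqrt[4]{\log(|J|/\delta)/M_c} + \beta\,2^{d_2/2}\sqrt{\log(|J|/\delta)/M_c}$ and $\beta'=\beta+\epsilon'$. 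Assembling a single estimate $\tilde q^*(X_{J_{\rm t}})$ by reading off each monomial's coefficient from one of the hypotheses $\tilde q^*_u$ with $u$ in its support gives $\|\tilde q^* - \tilde q\|_2^2 = \Ord{err(|J|)/(2^{d_2}\sigma(-2\beta')^{d_2}(1-\tanh^2(\beta'))^2)}$, i.e.\ Eq.~(\ref{eq_bound_potential_coefficient_bound}) with $n$ replaced by $|J|$.

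Next I would convert this coefficient error into an error on the conditional probability by repeating the partition-function argument of Lemma \ref{theorem_parameter_leanring_main} verbatim, with the ``partition function'' now the restricted sum $Z_J := \sum_{x_J} e^{\tilde q(x_{J_{\rm t}})}$ over only $2^{|J|}$ configurations. The pointwise estimate $|\tilde q(x_{J_{\rm t}}) - \tilde q^*(x_{J_{\rm t}})| \le \sqrt{|J|2^{d_2}}(\|\tilde q^* - \tilde q\|_2 + \epsilon') + |J|\epsilon'$ (the $|J|$-analogue of Eq.~(\ref{eq_apendix_potential_distance})), convexity of $e^x$, and $\sum_{x_J}\max\{e^{\tilde q},e^{\tilde q^*}\}\le Z_J+Z_J^*$ give $\sum_{x_J}|\tilde p(x_J\vert x_{\mathcal N_2(J)}) - \tilde p^*(x_J\vert x_{\mathcal N_2(J)})| = \Ord{\sqrt{|J|}\sqrt{err(|J|)}/(\sigma(-2\beta')^{d_2/2}(1-\tanh^2(\beta'))) + (\sqrt{|J|2^{d_2}}+|J|)\epsilon'}$, which in particular bounds the single-configuration quantity in Eq.~(\ref{eq_condition_p_distance_theorem}). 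Finally, as in the proof of Theorem \ref{thmLearnAll}, I would set each additive contribution to be $\Ord{\epsilon_c}$: bounding the $(\sqrt{|J|2^{d_2}}+|J|)\epsilon'$ term and the $\sqrt{|J|}\sqrt{err(|J|)}$ term forces the stated thresholds on $\epsilon_p$ (resp.\ $\varrho$) in Eqs.~(\ref{eq_theorem_lcp_rbm_1})--(\ref{eq_theorem_lcp_rbm_2}) through Lemma \ref{lemma_appendix_conditional_probability_bound}, while bounding the sampling-dependent pieces of $err(|J|)$ forces $M_c = \Ord{\beta^2 2^{d_2}|J|^2\log(1/\delta)/(\vartheta(1-\tanh^2(\beta))\epsilon_c^4)}$.

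I expect the main obstacle to be the bookkeeping in the second step: verifying that the Alphatron hypotheses $\{\tilde q^*_u\}_{u\in J}$, each learned on its own restricted feature set, can be glued into a single polynomial $\tilde q^*(X_{J_{\rm t}})$ whose coefficient vector is $\ell_2$-close to that of $\tilde q(X_{J_{\rm t}})$, and that the residual ``error polynomial'' absorbed in this gluing (the analogue of $q_\epsilon$, for which one needs an $\Ord{\epsilon'}$ bound on its $\ell_2$ norm as in Lemma \ref{theorem_parameter_leanring_main}) is under control. Once that is in place, the partition-function manipulation and the matching of $\epsilon_p$, $\varrho$ and $M_c$ are routine adaptations of Theorem \ref{thmLearnAll} under the substitution $n\mapsto|J|$.
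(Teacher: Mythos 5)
Your proposal is correct and follows essentially the same route as the paper's proof: cancel $\bar q(X_{J_{\rm t}})$ in the conditional law, run Alphatron per node $u\in J$ via Lemma \ref{lemma_appendix_potentail_2_norm} to control $\Vert \tilde q^*(J_{\rm t})-\tilde q(J_{\rm t})\Vert_2$, convert to a probability bound by the triangle inequality plus convexity of $e^x$, and match terms to obtain the thresholds on $\epsilon_p$, $\varrho$ and $M_c$. The only cosmetic difference is that you bound $\sum_{x_J}\vert\tilde p-\tilde p^*\vert$ via the restricted partition functions $Z_J, Z_J^*$ as in Lemma \ref{theorem_parameter_leanring_main}, whereas the paper bounds the pointwise quantity directly from the form $\bigl(1+\sum_{k_J\neq x_J}\exp(\tilde q(k_{J_{\rm t}})-\tilde q(x_{J_{\rm t}}))\bigr)^{-1}$; both reduce to the same estimates.
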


\begin{proof}
The proof is similar to that of the Theorem \ref{thmLearnAll}. First, expanding the condition probability by using Bayes' law and the marginal probability, we have 
\begin{eqnarray}
p(x_J \vert x_{\bar{J}}) & = & \frac{ p(x_J, x_{\bar{J}})}{ \sum_{x_J} p(x_J, x_{\bar{J}})}
 =  \frac{ \exp( q(x_{J_{t}}) + \bar{q}(x_{J_{t}}))}{ \sum_{x_J} \exp (q(x_{J_{t}})+\bar{q}(x_{J_{t}}) )}
 =  \frac{ \exp( q(x_{J_{t}})) }{ \sum_{x_J} \exp (q(x_{J_{t}}))}
=  \frac{1 }{1+\sum_{k_J \neq x_J} \exp (q(k_{J_{t}})-q(x_{J_{t}}))},\nonumber\\
\end{eqnarray}
where $x_J, k_J \in \{\pm 1 \}^{\vert J \vert }.$  

Then we bound the difference between the estimated conditional probability and the real conditional probability as follows
\begin{eqnarray}
& &  \vert  \tilde p(x_J \vert x_{\bar{J}}) -  \tilde{p}^*(x_J \vert x_{\bar{J}}) \vert \nonumber\\
& = & \left\vert \frac{ 1 }{1+ \sum_{k_J\neq x_J} \exp (\tilde q(k_{J_{t}})-\tilde q(x_{J_{t}}))}- \frac{1}{ 1+ \sum_{k_J\neq x_J} \exp \left( \tilde{q}^*(k_{J_{t}}) -\tilde{q}^*(x_{J_{t}}) \right)        } \right\vert 
\nonumber\\
&=& \left\vert \frac{ \sum_{k_J\neq x_J} \exp \left( \tilde{q}^*(k_{J_{t}}) -\tilde{q}^*(x_{J_{t}}) \right) - \exp (\tilde q(k_{J_{t}})-\tilde q(x_{J_{t}})) }{ \left(1 + \sum_{k_J\neq x_J} \exp (\tilde q(k_{J_{t}})-\tilde q(x_{J_{t}}))\right)\left(1 + \sum_{k_J\neq x_J} \exp \left( \tilde{q}^*(k_{J_{t}}) -\tilde{q}^*(x_{J_{t}}) \right)  \right)}
\right\vert 
\nonumber\\
&\leq &  \frac{ \sum_{k_J\neq x_J} \left\vert  \exp \left( \tilde{q}^*(k_{J_{t}}) -\tilde{q}^*(x_{J_{t}}) \right) -  \exp (\tilde q(k_{J_{t}})-\tilde q(x_{J_{t}}))  \right\vert }{ \left(1+\sum_{k_J\neq x_J} \exp (\tilde q(k_{J_{t}})-\tilde q(x_{J_{t}}))\right)\left(1 + \sum_{k_J\neq x_J} \exp \left( \tilde{q}^*(k_{J_{t}}) -\tilde{q}^*(x_{J_{t}}) \right)  \right)}\label{eq_dif_condition_pr_prove_1}     
\\
&\leq &  \frac{ \sum_{k_J\neq x_J} \max \{ \exp \left( \tilde{q}^*(k_{J_{t}}) -\tilde{q}^*(x_{J_{t}}) \right), \exp (\tilde q(k_{J_{t}})-\tilde q(x_{J_{t}})) 
\}\left\vert   \left( \tilde{q}^*(k_{J_{t}}) -\tilde{q}^*(x_{J_{t}}) \right) -   (\tilde q(k_{J_{t}})-\tilde q(x_{J_{t}}))  \right\vert }{ \left(1+ \sum_{k_J\neq x_J} \exp (\tilde q(k_{J_{t}})-\tilde q(x_{J_{t}}))\right)\left(1+\sum_{k_J\neq x_J} \exp \left( \tilde{q}^*(k_{J_{t}}) -\tilde{q}^*(x_{J_{t}}) \right)  \right)} \label{eq_dif_condition_pr_prove_2} 
\\ &\leq &  \frac{ \sum_{k_J\neq x_J} \max \{ \exp \left( \tilde{q}^*(k_{J_{t}}) -\tilde{q}^*(x_{J_{t}}) \right), \exp (\tilde q(k_{J_{t}})-\tilde q(x_{J_{t}})) \}
\left( \left\vert    \tilde{q}^*(k_{J_{t}}) -   \tilde q(k_{J_{t}}) \right\vert +  \left\vert   \tilde{q}^*(x_{J_{t}})  -\tilde q(x_{J_{t}})  \right\vert \right)
}{ \left(1+ \sum_{k_J\neq x_J} \exp (\tilde q(k_{J_{t}})-\tilde q(x_{J_{t}}))\right)\left(1+\sum_{k_J\neq x_J} \exp \left( \tilde{q}^*(k_{J_{t}}) -\tilde{q}^*(x_{J_{t}}) \right)  \right)}, \label{eq_dif_condition_pr_prove_2*}    
\end{eqnarray}
where the Eq.~(\ref{eq_dif_condition_pr_prove_1}) and (\ref{eq_dif_condition_pr_prove_2*})  are obtained by using the triangle inequality, Eq.~(\ref{eq_dif_condition_pr_prove_2}) we use the property of  a convex function $f=e^x$ which is $f'(b)\leq \frac{f(a)-f(b)}{a-b}\leq f'(a)$ for $a>b.$ By using Eq.~(\ref{eq_q_u_with_error_terms}) and Lemma \ref{appendix_lemma_partial_potental_bound}, we have 
\begin{eqnarray}
\left\vert    \tilde{q}^*(k_{J_t}) -   \tilde q(k_{J_t}) \right\vert \leq \left\vert    \tilde{q}^*(k_{J_t}) -    q(k_{J_t})  \right\vert + \vert q_\epsilon(k_{J_t})\vert
& \leq & \sum_{I\subset J_t } \left\vert  \tilde{q}^*_I - q_I  \right\vert + \vert J \vert \epsilon' \nonumber\\
& \leq &
\sqrt{\vert J\vert  2^{d_2}}  \left\Vert  \tilde{q}^*(J_t) -{q}(J_t) \right\Vert_2 +  \vert J\vert \epsilon' \nonumber\\ 
& \leq &  \sqrt{\vert J\vert  2^{d_2}}  \left\Vert  \tilde{q}^*(J_t) -\tilde{q}(J_t) +q_{\epsilon}(J_t)\right\Vert_2 + \vert J\vert \epsilon' \nonumber\\
& \leq &  \sqrt{\vert J\vert  2^{d_2}} \left( \left\Vert  \tilde{q}^*(J_t) -\tilde{q}(J_t) \Vert_2 +  \Vert q_{\epsilon}(J_t)\right\Vert_2 \right)  + \vert J\vert \epsilon'\nonumber\\  
& \leq &  \sqrt{\vert J\vert  2^{d_2}} 
\left( \Vert  \tilde{q}^*(J_t) -\tilde{q}(J_t) \Vert_2 +  \epsilon' \right)  + \vert J\vert \epsilon'.
\label{eq_condition_bound_proof_3}
\end{eqnarray}
for any $k_J\in \{\pm 1\}^{\vert J \vert},$ where the first inequality is obtained by using the triangle equality and Eq.~(\ref{eq_q_u_with_error_terms}), and the second inequality is according to the fact that there are at most $\vert J \vert 2^{d_2}$ monomials in polynomial $q(X_{J_t})$ and  $\Vert x \Vert_1 \leq \sqrt{n} \Vert x \Vert_2$ for a vector $x$ with dimension $n$. For the last two inequalities, we use triangle inequality and the fact that $\Vert q_{\epsilon}(J_t)\Vert_2 \leq \vert q_{\epsilon}(J_t)\vert\leq \epsilon'.$

Additionally, we divided the sum of $\max \{\cdot\}$ in Eq.~(\ref{eq_dif_condition_pr_prove_2*}) into two parts, one part only containing terms from learning results (with $*$ sign), and the other part containing only the terms from the real distribution. Then we see 
\begin{eqnarray}
\frac{ \sum_{k_J\neq x_J} \max \{ \exp \left( \tilde{q}^*(k_{J_t}) -\tilde{q}^*(x_{J_t}) \right), \exp (\tilde q(k_{J_t})-\tilde q(x_{J_t})) 
\} }{ \left(1+ \sum_{k_J\neq x_J} \exp (\tilde q(k_{J_t})-\tilde q(x_{J_t}))\right)\left(1+ \sum_{k_J\neq x_J} \exp \left( \tilde{q}^*(k_{J_t}) -\tilde{q}^*(x_{J_t}) \right)  \right)}  <2. \label{eq_condition_bound_proof_4}
\end{eqnarray}
Combining  Eq.~(\ref{eq_dif_condition_pr_prove_2*}), Eq.~(\ref{eq_condition_bound_proof_3}), Eq.~(\ref{eq_condition_bound_proof_4}), we can obtain
\begin{eqnarray}     
\vert  \tilde p(x_J \vert x_{\bar{J}}) -  \tilde{p}^*(x_J \vert x_{\bar{J}}) \vert     
\leq   4  \left( \sqrt{\vert J \vert 2^{d_2}}  \left( \left\Vert  \tilde{q}^*(J_t) -\tilde q(J_t)\right\Vert_2 +\epsilon' \right)  +    \vert J \vert \epsilon' \right).\label{eq_dif_condition_pr_prove_5}   
\end{eqnarray}

The polynomial $q(X_{J_t})$ can be obtained by performing the Alphatron algorithm for all nodes $u\in J$. Let $\chi = \sigma(-2\beta')^{d_2/2}(1-\tanh^2(\beta')$.
By using Lemma \ref{lemma_appendix_potentail_2_norm}, we have
\begin{eqnarray}
\vert  \tilde p(x_J \vert x_{\bar{J}}) -  \tilde{p}^*(x_J \vert x_{\bar{J}}) \vert  = \Ord{\frac{  \sqrt{\vert J \vert } 
\sqrt{err(\vert J \vert )}}{\chi } + \left( \sqrt{\vert J \vert 2^{d_2}}  + \vert J \vert \right)
\epsilon'}, 
\end{eqnarray}
where $err(\vert J \vert ) = \epsilon'   +  \epsilon'   \sqrt[4]{\frac{\log( \vert J \vert  /\delta)}{M}}+ \beta 2^{d_2/2}\sqrt{\frac{\log( \vert J \vert /\delta)}{M}}.$

By using the fact that $\sqrt{a +b} \leq \sqrt{a}+\sqrt{b}$ for $a,b>0$, then the following inequality should be satisfied
\begin{eqnarray}
& &\max\left\{
\frac{\sqrt{\epsilon' \vert J \vert }}{\chi },
\frac{\sqrt{\epsilon' \vert J \vert} \sqrt[8]{\frac{\log( \vert J \vert /\delta)}{M_c}}}{\chi },
\frac{\sqrt{\beta 2^{d_2/2} \vert J \vert} \sqrt[4]{\frac{\log( \vert J \vert /\delta)}{M_c}}}{\chi}, 
\left(\sqrt{\vert J \vert 2^{d_2}}  +\vert J \vert \right) \epsilon' \right\} \nonumber\\
& = & \max \left\{
\frac{\sqrt{\epsilon' \vert J \vert }}{\chi },
\frac{\sqrt{\beta 2^{d_2/2}\vert J \vert} \sqrt[4]{\frac{\log( \vert J \vert /\delta)}{M_c}}}{\chi}, 
\left(\sqrt{\vert J \vert 2^{d_2}}  +\vert J \vert \right) \epsilon' \right\}\nonumber\\
& \leq & \max \left\{
\frac{\sqrt{\epsilon'  }\left(\sqrt{\vert J \vert 2^{d_2}}  +\vert J \vert \right) }{\chi },
\frac{\sqrt{\beta 2^{d_2/2} \vert J \vert} \sqrt[4]{\frac{\log( \vert J \vert /\delta)}{M_c}}}{\chi} \right\}\label{eq_apendix_condition_P_max}
\end{eqnarray}
For the first term in Eq.~(\ref{eq_apendix_condition_P_max}), as  $\epsilon'= \frac{\epsilon}{1-\left(\tanh(\beta)+\epsilon\right)^2},$
we have
\begin{eqnarray}
\frac{{\epsilon' \left(\sqrt{\vert J \vert 2^{d_2}}  +\vert J \vert \right)^2}}{\chi^2 }\leq    \frac{{\epsilon \left(\sqrt{\vert J \vert 2^{d_2}}  +\vert J \vert \right)^2}}{\chi^2 \left(1-\left(\tanh(\beta)+\epsilon\right)^2\right)}
\end{eqnarray}
If we restrict the conditional probability distance as  Eq.~(\ref{eq_condition_p_distance_theorem}), combing with Lemmas \ref{appendix_lemma_partial_potental_bound} and \ref{lemma_appendix_conditional_probability_bound},  Eq.~(\ref{eq_theorem_lcp_rbm_1}) and Eq.~(\ref{eq_theorem_lcp_rbm_2}) should be satisfied.

For the second term of Eq.~(\ref{eq_apendix_condition_P_max}), the constraint in Eq.~(\ref{eq_condition_p_distance_theorem}) is satisfied if 
\begin{eqnarray}
M_c  =  \tOrd{\frac{\beta^2 2^{d_2}\vert J \vert ^2\log(1/\delta)}{\chi^4 \epsilon_c^4 } }. 
\end{eqnarray}
\end{proof}

We see if $d_2 =\Ord{\log\log n}$, $\vert J \vert = \log n ,$ it requires $M_c\in \Ord{poly (\log n)}$ samples  in terms the number of qubits.

\end{document}